\pgfplotsset{compat=1.10}
\definecolor{linkblue}{HTML}{001487}
\theoremstyle{plain}
\newtheorem{theorem}{Theorem}[section]
\newtheorem{lemma}{Lemma}[section]
\newtheorem{fact}[lemma]{Fact}
\theoremstyle{definition}
\newtheorem{remark}[lemma]{Remark}
\DeclareRobustCommand{\abbrevcrefs}{%
\Crefname{theorem}{Thm.}{Thms.}%
\Crefname{corollary}{Cor.}{Cors.}%
\Crefname{lemma}{Lem.}{Lems.}%
\Crefname{proposition}{Prop.}{Props.}%
\Crefname{equation}{Eq.}{Eqs.}%
\Crefname{example}{Ex.}{Exs.}%
\Crefname{remark}{Rmk.}{Rmks.}%
\Crefname{fact}{Fact}{Facts}%
}
\DeclareMathOperator*{\argmin}{argmin}
\DeclareRobustCommand{\Cshref}[1]{{\abbrevcrefs\Cref{#1}}}
\newcommand*{\ee}{\mathrm{e}}
\newcommand*{\cA}{\mathcal{A}}
\newcommand*{\cB}{\mathcal{B}}
\newcommand*{\cC}{\mathcal{C}}
\newcommand*{\cE}{\mathcal{E}}
\newcommand*{\cH}{\mathcal{H}}
\newcommand*{\cM}{\mathcal{M}}
\newcommand*{\cP}{\mathcal{P}}
\newcommand*{\cR}{\mathcal{R}}
\newcommand*{\cS}{\mathcal{S}}
\newcommand*{\cX}{\mathcal{X}}
\newcommand*{\N}{\mathbb{N}}
\newcommand*{\R}{\mathbb{R}}
\newcommand*{\St}{\mathrm{S}}
\newcommand*{\eps}{\varepsilon}
\newcommand*{\diag}{\mathrm{diag}}
\newcommand*{\id}{\mathds{1}}
\newcommand*{\poly}{\mathrm{poly}}
\newcommand*{\spec}{\mathrm{spec}}
\newcommand*{\tr}{\mathrm{tr}}
\newcommand*{\ket}[1]{| #1 \rangle}
\newcommand*{\bra}[1]{\langle #1 |}
\newcommand{\proj}[1]{|#1\rangle\!\langle #1|}
\newcommand*{\TPCP}{\mathrm{TPCP}}
\newcommand*{\MD}{D_{\mathbb{M}}}
\newcommand*{\ci}{\mathrm{i}} 
\newcommand*{\di}{\mathrm{d}} 
\newcommand{\norm}[1]{\left\lVert#1\right\rVert}
\DeclareMathOperator*{\argmax}{arg\,max}
\title{Uhlmann's theorem for relative entropies}
 \author{\normalsize Giulia Mazzola$^{1}$, David Sutter$^{2}$, and Renato Renner$^{1}$}
  \affil{\small $^{1}$Institute for Theoretical Physics, ETH Zurich\\
  $^{2}$IBM Quantum, IBM Research Europe -- Zurich
 }
 \date{}
\newcommand{\mylabel}[2]{#2\def\@currentlabel{#2}\label{#1}}
\begin{document}

\maketitle

\begin{abstract}
Uhlmann's theorem states that, for any two quantum states $\rho_{AB}$ and $\sigma_A$, there exists an extension $\sigma_{AB}$ of $\sigma_A$ such that the fidelity between $\rho_{AB}$ and $\sigma_{AB}$ equals the fidelity between their reduced states $\rho_A$ and $\sigma_A$. In this work, we generalize Uhlmann's theorem to $\alpha$-R\'enyi relative entropies for $\alpha \in [\frac{1}{2},\infty]$, a family of divergences that encompasses fidelity, relative entropy, and max-relative entropy corresponding to $\alpha=\frac{1}{2}$, $\alpha=1$, and $\alpha=\infty$, respectively.
\end{abstract}

\section{Introduction} \label{sec_intro}
Consider two density matrices $\rho_{AB}$ and $\sigma_A$ on finite-dimensional Hilbert spaces $A \otimes B$ and $A$, respectively. We want to find an extension $\sigma_{AB}$ of $\sigma_A$ that is closest to $\rho_{AB}$ according to some divergence $\mathbb{D}$ such as the relative entropy.\footnote{A divergence may be viewed as a distance in the sense that it is positive definite. However, it does not necessarily satisfy a triangle inequality, nor does it have to be symmetric.}
In mathematical terms, we are interested in the quantity 
\begin{align} \label{eq_def_Ua}
\mathbb{D}(\rho_{AB} \| \cC^{\sigma_A}_{AB} ):=\inf_{\sigma_{AB} \in \cC^{\sigma_A}_{AB}} \mathbb{D}(\rho_{AB}\| \sigma_{AB}) \, ,
\end{align}
where the infimum\footnote{All divergences considered in this manuscript are lower semi-continuous~\cite[Proposition~III.11]{MH24}. Since we assume finite-dimensional Hilbert spaces, the compactness of $\cC_{AB}^{\sigma_A}$ implies that the infimum on the right-hand side of~\cref{eq_def_Ua} is attained and may thus be replaced by a minimum.\label{footnote_minimum}} ranges over all extensions $\sigma_{AB}$ of $\sigma_A$, i.e.,
\begin{align} \label{eq_set_C}
\cC_{AB}^{\sigma_A}:=\{\textnormal{density matrices } \sigma'_{AB}: \tr_B[\sigma'_{AB}]=\sigma_A\} \, .
\end{align}
In particular, our objective is to understand how \smash{$\mathbb{D}(\rho_{AB} \| \cC^{\sigma_A}_{AB} )$} is related to $\mathbb{D}(\rho_{A}\| \sigma_{A})$.

Most divergences used in quantum information satisfy the \emph{data-processing inequality} (DPI), which states that they cannot increase when applying the same quantum channel, such as a partial trace, to both arguments. We thus have $\mathbb{D}(\rho_{A}\| \sigma_{A})  \leq \mathbb{D}(\rho_{AB}\| \sigma_{AB})$, and hence,
\begin{align} \label{eq_DPI}
\mathbb{D}(\rho_{A}\| \sigma_{A}) \overset{\textnormal{DPI}}{\leq} \mathbb{D}(\rho_{AB} \| \cC^{\sigma_A}_{AB} )  \, .
\end{align}
As we will see, whether this inequality holds with equality depends on the particular divergence $\mathbb{D}$.

A popular divergence is the \emph{min-relative entropy}, defined as $D_{\min}(\rho \| \sigma):= - \log F(\rho, \sigma)$, where 
 \smash{$ F(\rho, \sigma):=\norm{\sqrt{\rho} \sqrt{\sigma}}_1^2$} is the fidelity and \smash{$\norm{X}_1:=\tr[\sqrt{X X^\dagger}]$} denotes the trace norm. 
 For this quantity, the celebrated Uhlmann theorem~\cite{Uhl76} (see also~\cite[Corollary~3.14]{marco_book}) states that~\cref{eq_DPI} holds with equality.
\begin{theorem}[{Uhlmann's theorem~\cite{Uhl76}}] \label{thm_uhlmann}
Let $\rho_{AB}$ and $\sigma_A$ be two density matrices. Then, 
\begin{align} \label{eq_Uhlmann_our_Not}
D_{\min}(\rho_{A}\|\sigma_{A})  = D_{\min}(\rho_{AB}\|\cC_{AB}^{\sigma_A}) \, .
\end{align}
Furthermore, if $\rho_{AB}$ is pure and $\dim(A) \leq \dim(B)$, then the optimizer $\sigma_{AB} \in \cC_{AB}^{\sigma_A}$ of $D_{\min}(\rho_{AB}\|\cC_{AB}^{\sigma_A})$ can be chosen pure as well.
\end{theorem}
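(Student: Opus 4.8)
The plan is to deduce the ``extension form'' of Uhlmann's theorem in \cref{eq_Uhlmann_our_Not} from its more familiar ``purification form'': for any density matrix $\sigma_A$ and any purification $|\rho\rangle_{AR}$ of $\rho_A$ onto a system $R$ with $\dim R \ge \rank(\sigma_A)$, one has $F(\rho_A,\sigma_A) = \max |\langle \rho | \sigma \rangle_{AR}|^2$, where the maximum ranges over all purifications $|\sigma\rangle_{AR}$ of $\sigma_A$ on $R$. The inequality ``$\leq$'' in \cref{eq_Uhlmann_our_Not} is exactly the DPI bound \cref{eq_DPI} applied to $\mathbb{D}=D_{\min}$ (equivalently, monotonicity of the fidelity under the partial trace $\tr_B$, noting that $F$ \emph{increases} under channels), so the real content is the reverse inequality, which I would establish by exhibiting an explicit extension.

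For the reverse inequality, first fix a purification $|\rho\rangle_{ABR}$ of $\rho_{AB}$ with $\dim R$ large enough; this is in particular a purification of $\rho_A$ with purifying system $BR$. Applying the purification form of Uhlmann's theorem with purifying system $BR$ yields a purification $|\sigma\rangle_{ABR}$ of $\sigma_A$ such that $|\langle \rho | \sigma \rangle_{ABR}|^2 = F(\rho_A,\sigma_A)$. Set $\sigma_{AB}:=\tr_R[\,|\sigma\rangle\!\langle\sigma|_{ABR}\,]$. Then $\tr_B[\sigma_{AB}] = \tr_{BR}[\,|\sigma\rangle\!\langle\sigma|_{ABR}\,] = \sigma_A$, i.e.\ $\sigma_{AB}\in\cC_{AB}^{\sigma_A}$, and monotonicity of the fidelity under $\tr_R$ gives $F(\rho_{AB},\sigma_{AB}) \ge |\langle\rho|\sigma\rangle_{ABR}|^2 = F(\rho_A,\sigma_A)$. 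Taking $-\log$ of both sides shows $D_{\min}(\rho_{AB}\|\cC_{AB}^{\sigma_A}) \le D_{\min}(\rho_{AB}\|\sigma_{AB}) \le D_{\min}(\rho_A\|\sigma_A)$, which together with \cref{eq_DPI} proves \cref{eq_Uhlmann_our_Not}.

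For the ``furthermore'' part, suppose $\rho_{AB} = |\rho\rangle\!\langle\rho|_{AB}$ is pure and $\dim A \le \dim B$. Then $|\rho\rangle_{AB}$ is itself a purification of $\rho_A$ with purifying system $B$, and since $\dim B \ge \dim A \ge \rank(\sigma_A)$ we may apply the purification form of Uhlmann's theorem directly on $B$, without introducing any auxiliary system $R$: there is a purification $|\sigma\rangle_{AB}$ of $\sigma_A$ with $|\langle\rho|\sigma\rangle_{AB}|^2 = F(\rho_A,\sigma_A)$. The pure state $\sigma_{AB}:=|\sigma\rangle\!\langle\sigma|_{AB}$ lies in $\cC_{AB}^{\sigma_A}$ and satisfies $D_{\min}(\rho_{AB}\|\sigma_{AB}) = -\log F(\rho_A,\sigma_A) = D_{\min}(\rho_A\|\sigma_A) = D_{\min}(\rho_{AB}\|\cC_{AB}^{\sigma_A})$ by the first part, hence it is an optimizer.

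The only genuinely delicate points are bookkeeping ones: checking that the purifying systems are large enough to support the required purifications (which is where the hypothesis $\dim A \le \dim B$ enters in the pure case), and keeping straight the direction of the data-processing inequality for the fidelity. If one instead wished to be fully self-contained and not invoke the purification form of Uhlmann's theorem, the main work would shift to proving that statement, whose crux is the identity $\max_{U \text{ unitary}} |\tr[U\sqrt{\rho_A}\sqrt{\sigma_A}]| = \|\sqrt{\rho_A}\sqrt{\sigma_A}\|_1$ obtained from the singular value decomposition, combined with the fact that any two purifications of the same state on the same system are related by a unitary (or isometry) on the purifying system.
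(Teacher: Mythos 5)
Your proposal is correct: the paper states \cref{thm_uhlmann} without proof (citing Uhlmann and the textbook reference), and your argument — reducing the extension form to the purification form of Uhlmann's theorem via a purification $\ket{\rho}_{ABR}$ of $\rho_{AB}$, tracing out $R$, and invoking monotonicity of the fidelity, with the hypothesis $\dim(A)\leq\dim(B)$ guaranteeing a purification of $\sigma_A$ on $B$ for the pure-optimizer claim — is exactly the standard derivation found in those references. The dimension bookkeeping and the direction of the data-processing inequality are handled correctly, so there is nothing to fix.
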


Uhlmann's theorem is widely used as it allows us to work on purifications, where the min-relative entropy simplifies. More precisely, suppose that we want to compute $D_{\min}(\rho_A\| \sigma_A)$ for two mixed states. Then we can choose any purification $\ket{\psi}_{AB}$ of $\rho_A$, and~\cref{eq_Uhlmann_our_Not} allows us to express $D_{\min}(\rho_A\| \sigma_A)$ in terms of $D_{\min}(\proj{\psi}\|\cC_{AB}^{\sigma_A})$. The latter quantity can be simpler to work with because the fidelity reduces to an inner product if one of the two arguments is pure, so that $D_{\min}(\rho_A\| \sigma_A) = -\log \bra{\psi}\sigma_{AB} \ket{\psi}$.

It was shown in~\cite[Proposition B.1]{GEAT_24} that~\cref{eq_Uhlmann_our_Not} remains true for another divergence, called the \emph{max-relative entropy}~\cite{renner_phd,datta09_b}, $D_{\max}(\rho \| \sigma):=\inf\{\lambda \in \R : \rho \leq 2^{\lambda} \sigma\}$, i.e., 
\begin{align} \label{eq_uhlmann_max}
D_{\max}(\rho_{A}\|\sigma_{A}) = D_{\max}(\rho_{AB}\| \cC_{AB}^{\sigma_A})  \, .
\end{align}
This raises the question if the Uhlmann-type identities given by~\cref{eq_Uhlmann_our_Not,eq_uhlmann_max} hold true for other relative entropies, such as the (sandwiched) \emph{R\'enyi relative entropy} $D_{\alpha}(\rho \| \sigma)$ for $\alpha \in [\frac{1}{2},1) \cup (1,\infty]$ (which we will define in~\cref{eq_sandwiched} below).
For $\alpha =\frac{1}{2}$ and $\alpha = \infty$ we recover the min- and max-relative entropy, respectively. For $\alpha \to 1$ we obtain the \emph{relative entropy} (defined in~\cref{eq_def_D}).
The R\'enyi relative entropy is monotonically increasing in $\alpha$~\cite[Theorem~7]{MLDSFT13}, which justifies the naming \smash{$D_{\min}=D_{\frac{1}{2}} \leq D_1 = D \leq D_{\infty}=D_{\max}$}. 

However, the most straightforward generalization of Uhlmann's theorem one may hope to achieve, namely that~\cref{eq_DPI} holds with equality for $D_\alpha$, cannot be valid for $\alpha \in (\frac{1}{2},\infty)$, as shown in~\cite[Appendix~B]{GEAT_24} together with the argument given in~\cref{app_Henrik}.
More precisely, there exist density matrices $\rho_{AB}$ and $\sigma_A$ such that~\cref{eq_DPI} is strict\footnote{\Cref{eq_DPI} can be strict for other divergences such as the Petz R\'enyi relative entropy~\cite{petz_Entropybook,marco_book} for $\alpha \in (0,2)$, as explained in~\cref{app_Henrik}.}, i.e.,
\begin{align} \label{eq_tony_counter_example}
D_{\alpha}(\rho_{A}\|\sigma_{A}) \overset{\textnormal{\cite[Appendix~B]{GEAT_24} and \Cref{app_Henrik}}}{<}    D_{\alpha}(\rho_{AB}\| \cC_{AB}^{\sigma_A} )  \quad \forall \alpha \in (\frac{1}{2},\infty) \, .
\end{align}
Nonetheless, as we will show in this work, we can still obtain Uhlmann-type equalities for these divergences, provided that we consider modifications obtained by regularization or measurement.

\paragraph{Main results.} 
\begin{enumerate}[(1)]
\item For any two density matrices $\rho_{AB}$, $\sigma_A$, and $\alpha \in [\frac{1}{2},\infty]$, we prove that\footnote{The assertion for $\alpha>1$ was already known from earlier work~\cite[Lemma~3.3]{GEAT_24}.}
\begin{align} \label{eq_main_res}
D_{\alpha}(\rho_{A}\|\sigma_{A})=   D_{\alpha}^{\infty}(\rho_{AB}\| \cC^{\sigma_A}_{AB} ) \, ,
\end{align}
for the regularized quantity\footnote{The limit in the regularization exists and can be replaced by an infimum due to Fekete's subadditivity lemma~\cite{fekete23}, since 
\begin{align*}D_{\alpha}\Big(\rho_{AB}^{\otimes n+m} \Big\| \cC^{\sigma_A^{\otimes n+m}}_{A_1^{n+m}B_1^{n+m}}\Big)\leq D_{\alpha}\Big(\rho_{AB}^{\otimes n}\Big\| \cC^{\sigma_A^{\otimes n}}_{A_1^nB_1^n}\Big)+D_{\alpha}\Big(\rho_{AB}^{\otimes m}\Big\| \cC^{\sigma_A^{\otimes m}}_{A_1^mB_1^m}\Big)\, . \end{align*}}
\begin{align} \label{eq_Da_regul}
D_{\alpha}^{\infty}(\rho_{AB}\| \cC^{\sigma_A}_{AB} )
:= \lim_{n \to \infty} \frac{1}{n} D_{\alpha}\big(\rho_{AB}^{\otimes n}\|\cC^{\sigma_A^{\otimes n}}_{A_1^nB_1^n}\big)
= \inf_{n \in \N} \frac{1}{n} D_{\alpha}\big(\rho_{AB}^{\otimes n}\|\cC^{\sigma_A^{\otimes n}}_{A_1^nB_1^n}\big) \, ,
\end{align}
where \smash{$\cC^{\sigma_A^{\otimes n}}_{A_1^nB_1^n}$} denotes the set of all extensions of $\sigma_A^{\otimes n}$ as defined in~\cref{eq_set_C}. We call~\cref{eq_main_res} a \textbf{regularized Uhlmann's theorem} for relative entropies and refer to~\cref{thm_tony_new_range,thm_uhlmann_rel_ent} for the rigorous statements. Furthermore, for $\alpha=1$, the optimizer in $D_{\alpha}^{\infty}(\rho_{AB}\| \cC^{\sigma_A}_{AB} )$ has an explicit form as shown in~\cref{rmk_explicit_optimizer_regul}. 
A particularly interesting case is when the extension of $\rho_{A}$ is chosen to be pure. In this case we obtain an alternative expression for $D(\rho_A\|\sigma_A)$ that avoids taking a logarithm of $\rho_A$ (see~\Cref{rmk_large}).

\item For any two density matrices $\rho_{AB}$, $\sigma_A$, and $\alpha \in [\frac{1}{2},\infty]$, we show that
\begin{align} \label{eq_main_res_single_letter}
D_{\alpha,\mathbb{M}}(\rho_A \| \sigma_A) \overset{\textnormal{DPI}}{\leq} D_{\alpha,\mathbb{M}}(\rho_{AB}\| \cC_{AB}^{\sigma_A}) \leq D_{\alpha}(\rho_A \| \sigma_A) \, ,
\end{align}
where the measured R\'enyi relative entropy is defined as $D_{\alpha,\mathbb{M}}(\rho\| \sigma):=\sup_{\cM}D_{\alpha}(\cM(\rho)\| \cM(\sigma))$, with the supremum over all POVM channels $\cM$~\cite{donald86,PH91,FBT15}.\footnote{The DPI for the R\'enyi relative entropy has been proven in~\cite[Theorem~1]{LF13}.}
For $\alpha=1$ we also find an explicit expression for a density matrix $\sigma_{AB} \in \cC_{AB}^{\sigma_A}$ satisfying~\cref{eq_main_res_single_letter} (see~\cref{rmk_explicit_extension}).
We call~\cref{eq_main_res_single_letter} a \textbf{measured Uhlmann's theorem} for relative entropies and refer to~\cref{thm_single_letter_Uhlmann_generalization,thm_key} for the technical statements.

\end{enumerate}

\paragraph{The classical case.}
In case $\rho_{AB}$ and $\sigma_A$ represent classical distributions, it is straightforward to see that~\cref{eq_main_res_single_letter,eq_main_res} are true. 
Let $\rho_{AB} = \sum_{a \in \cA,b \in \cB} p(a,b) \proj{a}_A \otimes \proj{b}_B$ and $\sigma_A = \sum_{a\in \cA} q(a) \proj{a}_A$ be classical states, where $p \in \Delta_{\cA \times \cB}$ and $q \in \Delta_{\cA} $ are probability distributions, i.e., they are specified by elements of the corresponding probability simplex \smash{$\Delta_{\cX}:=\{x \in \R^{|\cX|}: x_i \geq 0 \, ,$} $\sum_{i=1}^{|\cX|} x_i =1 \}$. 
The DPI for the R\'enyi relative entropy~\cite[Theorem~4.18]{marco_book} implies 
\begin{align} \label{eq_DPI2}
D_{\alpha}(\rho_{A}\| \sigma_{A}) \overset{\textnormal{DPI}}{\leq} D_{\alpha}(\rho_{AB}\| \cC^{\sigma_A}_{AB} )  \, .
\end{align}
For the other direction, note that
\begin{align}
D_{\alpha}(\rho_{AB}\| \cC^{\sigma_A}_{AB} )
&= \min_{\bar \sigma_{AB} \in \cC^{\sigma_A}_{AB}} D_{\alpha}(\rho_{AB}\| \bar \sigma_{AB}  )  \\
&\leq \frac{1}{\alpha-1} \log \left(   \sum_{a,b: p(a)>0} p(a,b)^{\alpha} \Big(p(a,b) \frac{q(a)}{p(a)}\Big)^{1-\alpha} \right) \\
&= D_{\alpha}(\rho_A \| \sigma_A) \label{eq_classical-1}\, ,
\end{align}
where the inequality step uses that
\begin{align}
\bar \sigma_{AB} = \sum_{a \in \cA,b \in \cB} \bar q(a,b) \proj{a}_A \otimes \proj{b}_B \quad \textnormal{for} \quad \bar q(a,b) :=  \begin{cases}  p(a,b) \frac{q(a)}{p(a)}, \  &\mathrm{if}\  p(a) > 0\\ p(b)q(a), \ &\mathrm{else} \end{cases}
\end{align}
fulfills $\bar \sigma_{AB} \in \cC^{\sigma_A}_{AB}$ because $\bar q(a,b)$ is a probability distribution that satisfies $\bar q(a) = q(a)$. This proves~\eqref{eq_main_res}, where the regularization is obsolete. In fact, combining this argument with~\cref{eq_Da_regul}, the DPI and the additivity under the tensor product for the R\'enyi relative entropy, it follows that $D_{\alpha}^{\infty}(\rho_{AB}\| \cC^{\sigma_A}_{AB} ) = D_{\alpha}(\rho_{AB}\| \cC^{\sigma_A}_{AB} )$. 
Furthermore, applying the same argument as above to $D_{\alpha,\mathbb{M}}(\rho_{AB}\| \cC^{\sigma_A}_{AB} )$ shows that \cref{eq_main_res_single_letter} is also true, where the two inequalities hold with equality since $D_{\alpha,\mathbb{M}}(\rho_A \| \sigma_A) = D_{\alpha}(\rho_A \| \sigma_A)$ in this case.
Finally, note that this classical proof can be applied to the quantum setting where $\rho_{AB}$, $\rho_{A}\otimes \id_B$, and $\sigma_{A}\otimes \id_B$ all commute, as observed in~\cite[Proposition~B.1]{GEAT_24}.


\paragraph{Paper organization.}
We begin with the preliminaries and notation in~\cref{sec_preliminaries}. 
We then prove our results in~\cref{sec_main_renyi} for R\'enyi relative entropies in the range $\alpha \in [\frac{1}{2},1) \cup (1,\infty]$. The proof idea is to lift the classical setting explained above to the non-commutative quantum case. To do so, we use the asymptotic spectral pinching method. In addition, we use a de Finetti reduction from~\cite{FR14}. In \Cref{sec_main}, the results are extended to $\alpha =1$.

\section{Preliminaries} \label{sec_preliminaries}
\subsection{Notation}
For two Hermitian matrices $X$ and $Y$, we say that $X \leq Y$ if $Y - X$ is positive semidefinite. We write $X>0$ and $X\geq 0$ for $X$ being a positive definite or positive semidefinite matrix, respectively.
The commutator is denoted by $[X,Y]:=XY-YX$.
Let $\St(A):=\{X \geq 0: \tr[X]=1\}$ denote the set of density matrices on a Hilbert space $A$ with dimension $d_A$.
For $n \in \N$ and a fixed $\sigma_A \in \St(A)$, we introduce the simplified notation

\begin{align} \label{eq_set_C_n}
\cC^{\sigma_A}_{AB,n}:= \cC^{\sigma^{\otimes n}_A}_{A_1^nB_1^n}  = \{ \sigma'_{A_1^n B_1^n} \in \St(A_1^n \otimes B_1^n): \sigma'_{A_1^n}=\sigma_A^{\otimes n} \} 
\end{align}
for the set of all extensions of $\sigma_A^{\otimes n}$ on the Hilbert space $B_1^n$.
For $n=1$, we drop the subscript $n$ (see~\cref{eq_set_C}). 
For any set of operators $\mathcal{L}$ on a Hilbert space $C$, the \emph{support function} is defined as
\begin{align} \label{eq_support_function}
h_{\mathcal{L}}(X) := \sup_{\sigma \in \mathcal{L}} \tr[X \sigma] \, ,
\end{align}
where $X$ is an operator on the Hilbert space $C$.
We write $\poly(n)$ for terms of order at most polynomial in $n$.


\subsection{Entropic quantities}
For an arbitrary divergence $\mathbb{D}$ between two density matrices $\rho,\sigma \in \St(A)$ we define a \emph{measured} version as
\begin{align} \label{eq_measured_divergence}
\mathbb{D}_{\mathbb{M}}(\rho \| \sigma) := \sup_{\cM} \mathbb{D}\big(\cM(\rho)\| \cM(\sigma)\big) \, ,
\end{align}
where the supremum is over all POVM channels $\cM$~\cite{donald86,PH91,FBT15}. For a set $\cS \subset \St(A)$ we use the notation introduced in~\cref{eq_def_Ua} to write 
\begin{align} \label{eq_divergence_set}
\mathbb{D}(\rho \| \cS ):= \inf_{\sigma \in \cS} \mathbb{D}(\rho \| \sigma) \, .
\end{align}
For a sequence of sets $\cS_n \subset \St(A_1^n)$ the regularization of this quantity is given by
\begin{align}
\mathbb{D}^{\infty}(\rho \| \cS ):= \lim_{n \to \infty} \frac{1}{n}  \mathbb{D}(\rho^{\otimes n} \| \cS_n ) \, ,
\end{align}
whenever the limit exists.

A popular divergence is the (sandwiched) R\'enyi relative entropy~\cite{MLDSFT13,WWY14} which is defined for $\alpha \in [\frac{1}{2},1) \cup (1,\infty)$ as
\begin{align} \label{eq_sandwiched}
D_{\alpha}(\rho \| \sigma)
:= \frac{1}{\alpha -1 } \log Q_{\alpha}(\rho\| \sigma)  \qquad \textnormal{for} \qquad Q_{\alpha}(\rho\| \sigma):=\tr[ (\sigma^{\frac{1-\alpha}{2\alpha}} \rho \sigma^{\frac{1-\alpha}{2\alpha}})^{\alpha}] \, ,
\end{align}
if the support of $\rho$ is included in the support of $\sigma$ or $\alpha<1$, and $+\infty$ otherwise.
For $\alpha=1$ and $\alpha=\infty$ we define $D_{1}=D$ and $D_{\infty}=D_{\max}$, respectively, where $D$ is the relative entropy defined in~\cref{eq_def_D} below (see \cref{lem_limits,lem_limits_infty}).
Following~\cref{eq_measured_divergence}, its measured version is denoted by $D_{\alpha,\mathbb{M}}(\rho\| \sigma)= \frac{1}{\alpha -1 } \log Q_{\alpha,\mathbb{M}}(\rho\| \sigma)$. 
This quantity features a variational formula
\begin{align} 
Q_{\alpha,\mathbb{M}}(\rho \| \sigma)
\overset{\textnormal{\cite[Lemma 3]{FBT15} }}&{=} \left \lbrace  \begin{array}{ll}
\inf_{\tau>0} \alpha\, \tr[\rho \tau^{\frac{\alpha-1}{\alpha}}]+(1-\alpha)\,\tr[\sigma \tau] & \textnormal{ if } \alpha \in [\frac{1}{2},1)\\
\sup_{\tau>0} \alpha \,\tr[\rho \tau^{\frac{\alpha-1}{\alpha}}]+(1-\alpha)\,\tr[\sigma \tau] & \textnormal{ if } \alpha \in (1,\infty)
\end{array}
\right. \label{eq_varFormula_DM_a} \\
\overset{\textnormal{\cite[Lemma 3]{FBT15} }}&{=}  \left \lbrace  \begin{array}{ll}
\inf_{\tau>0}  \tr[\rho \tau]^{\alpha}\,\tr[\sigma \tau^{\frac{\alpha}{\alpha-1}}]^{1-\alpha} & \textnormal{ if } \alpha \in [\frac{1}{2},1)\\
\sup_{\tau>0}  \tr[\rho \tau^{\frac{\alpha-1}{\alpha}}]^{\alpha}\,\tr[\sigma \tau]^{1-\alpha} & \textnormal{ if } \alpha \in (1,\infty) \, ,
\end{array}
\right.   \label{eq_varFormula_DM_a_2}
\end{align}
where the suprema and infima are achieved when $\rho$ and $\sigma$ have full rank.
Following our notation we have
\begin{align} \label{eq_def_UM_a}
D_{\alpha,\mathbb{M}}(\rho_{AB}\|\cC^{\sigma_A}_{AB})= \min_{\sigma_{AB} \in \cC^{\sigma_A}_{AB}}D_{\alpha,\mathbb{M}}(\rho_{AB}\|\sigma_{AB}) \, ,
\end{align}
and its regularization\footnote{\Cref{lem_remove_M_renyi} ensures that the regularized quantity $D^{\infty}_{\alpha,\mathbb{M}}(\rho_{AB}\|\cC^{\sigma_A}_{AB})$ is well-defined, i.e., the limit in~\cref{eq_regul_M} exists. However, it is unclear if the limit can be replaced with an infimum.} 
\begin{align} \label{eq_regul_M}
D^{\infty}_{\alpha,\mathbb{M}}(\rho_{AB}\|\cC^{\sigma_A}_{AB})=\lim_{n\to \infty} \frac{1}{n} D_{\alpha,\mathbb{M}}(\rho_{AB}^{\otimes n}\|\cC^{\sigma_A}_{AB,n})\,.
\end{align}
The DPI~\cite{marco_book} 
for the R\'enyi relative entropy implies~\cite[Theorem~6]{FBT15}\footnote{If $\rho >0$ and $\sigma >0$, then we also have that $D_{\alpha,\mathbb{M}}(\rho \| \sigma) = D_{\alpha}(\rho \| \sigma)$ implies $[\rho,\sigma]=0$ due to the Araki-Lieb-Thirring inequality~\cite{araki90,Lieb1991}, see~\cite[Theorem~6]{FBT15}.} 
\begin{align} \label{eq_euality_DM_vs_D}
D_{\alpha,\mathbb{M}}(\rho \| \sigma) \leq D_{\alpha}(\rho \| \sigma) \quad \textnormal{with equality if }\ [\rho,\sigma]=0 \, ,
\end{align}
and consequently
\begin{align} \label{eq_Ua_vs_Ua_meas}
D_{\alpha,\mathbb{M}}(\rho_{AB}\| \cC^{\sigma_A}_{AB}) \overset{\textnormal{\Cshref{eq_euality_DM_vs_D}}}{\leq} D_{\alpha}(\rho_{AB}\| \cC^{\sigma_A}_{AB}) \, .
\end{align}

For $\alpha=1$ we can define the same quantities using the relative entropy instead of the R\'enyi relative entropy and we drop the $\alpha$-subscript. The relative entropy is defined as
\begin{align} \label{eq_def_D}
D(\rho \| \sigma):=\tr[\rho (\log \rho - \log \sigma)] \, ,
\end{align}
if the support of $\rho$ is included in the support of $\sigma$, and $+\infty$ otherwise. It features a variational formula~\cite{Petz88_var,Sutter_book}
\begin{align}   \label{eq_varFormula_D}
D(\rho \| \sigma)
=  \sup_{\tau >0}\big \{ \tr[\rho \log \tau] - \log \tr[\ee^{\log \sigma + \log \tau} ] \big \} \, ,
\end{align}
where the supremum is achieved when $\rho$ and $\sigma$ have full rank.
The \emph{measured relative entropy}, defined according to~\cref{eq_measured_divergence}, also has a variational formula
\begin{align} 
\MD(\rho \| \sigma)
\overset{\textnormal{\cite[Lemma 1]{FBT15} }}&{=}  \sup_{\tau >0}\{ \tr[\rho \log \tau] - \log \tr[\sigma \tau ] \}   \label{eq_varFormula_DM} \\
\overset{\textnormal{\cite[Lemma 1]{FBT15} }}&{=}  \sup_{\tau >0}\{ \tr[\rho \log \tau] +1 -  \tr[\sigma \tau ] \} \, ,   \label{eq_varFormula_DM2}
\end{align}
where the suprema are achieved when $\rho$ and $\sigma$ have full rank.
The DPI~\cite{marco_book} for the relative entropy implies~\cite[Proposition~5]{FBT15}\footnote{If $\rho >0$ and $\sigma >0$, then we also have that $D_{\mathbb{M}}(\rho \| \sigma) = D(\rho \| \sigma)$ implies $[\rho,\sigma]=0$ due to the Golden-Thompson inequality~\cite{golden65,thompson65}, see~\cite[Proposition~5]{FBT15}.}
\begin{align} \label{eq_euality_DM_vs_D_ONE}
D_{\mathbb{M}}(\rho \| \sigma) \leq D(\rho \| \sigma) \quad \textnormal{with equality if }\ [\rho,\sigma]=0 \, .
\end{align}
Analogously as above recall the quantity
\begin{align} \label{eq_def_UM}
D_{\mathbb{M}}(\rho_{AB}\| \cC^{\sigma_A}_{AB})= \min_{\sigma_{AB} \in \cC^{\sigma_A}_{AB}}D_{\mathbb{M}}(\rho_{AB}\|\sigma_{AB}) \, ,
\end{align}
and its regularized version\footnote{\Cref{lem_remove_M} ensures that the regularized quantity $D^{\infty}_{\mathbb{M}}(\rho_{AB}\| \cC^{\sigma_A}_{AB})$ is well-defined, i.e.~the limit in~\cref{eq_regul_M_1} exists. However, it is unclear if it can be replaced by an infimum.}
\begin{align} \label{eq_regul_M_1}
D^{\infty}_{\mathbb{M}}(\rho_{AB}\| \cC^{\sigma_A}_{AB} )=\lim_{n\to \infty} \frac{1}{n} D_{\mathbb{M}}(\rho_{AB}^{\otimes n}\| \cC^{\sigma_A}_{AB,n})\,.
\end{align}
The measured R\'enyi relative entropy is not additive under the tensor product. Instead, for any $\alpha \in [\frac{1}{2}, \infty]$, it satisfies
\begin{align} \label{eq_remove_meas}
\lim_{n \to \infty} \frac{1}{n} D_{\alpha,\mathbb{M}}(\rho^{\otimes n} \| \sigma^{\otimes n} ) \overset{\textnormal{\cite[Theorem~4.18]{marco_book}}}{=} D_{\alpha}(\rho \| \sigma) \, .
\end{align}

\subsection{Asymptotic spectral pinching} \label{sec_pinching}
 Let $H$ be a Hermitian matrix with spectral decomposition $H=\sum_{\lambda \in \spec(H)} \lambda \Pi_{\lambda}$, where $\spec(H)$ is the set of distinct eigenvalues of $H$, and $\Pi_{\lambda}$ denotes the projector onto the eigenspace of $\lambda$. The \emph{pinching map} with respect to $H$ is defined as
 \begin{align}
 \cP_{H} : X \mapsto \sum_{\lambda \in \spec(H)} \Pi_{\lambda} X \Pi_{\lambda}\, .
 \end{align}
 This map has various useful properties. For any Hermitian matrix $H$ and any positive semidefinite matrix $X$, we have: 
 \begin{enumerate}[(I)]
     \item Commutation: $[\cP_{H}(X),H]=0$~\cite[Lemma~3.5]{Sutter_book}. \label{it_pinch_commutation}
     \item Invariance: $\cP_{H}(H)=H$, which follows by definition of the pinching map. \label{it_pinch_invariance}
     \item Pinching inequality: $X \leq \cP_{H}(X) |\spec(H)|$~\cite[Lemma~3.5]{Sutter_book}. \label{it_pinch_inequality}
     \item Commutation of pinching maps: If $[H_1,H_2]=0$ then $\cP_{H_1}\circ \cP_{H_2} = \cP_{H_2}\circ \cP_{H_1}$~\cite[Lemma~2.7]{GEAT_24}. \label{it_pinch_commutation_maps}
     \item Partial trace: $\tr_{B}[\cP_{H_A \otimes \id_B}(X_{AB})] = \cP_{H_A}(X_A)$~\cite[Lemma~2.7]{GEAT_24}. \label{it_pinch_partial_trace}
     \item Let $Y,Z \geq 0$ such that $[H,Y]=[H,Z]=0$, then $\cP_{H}(YXZ)=Y \cP_{H}(X) Z$.\footnote{This can be seen for example by the integral representation of pinching maps~\cite[Lemma~3.4]{Sutter_book}.} \label{it_pinch_move_when_commute} 
     \item The pinching map is a trace-preserving and completely positive operation~\cite[Exercise~3.3]{Sutter_book}. \label{it_pinch_CPTP}
 \end{enumerate}
We refer to~\cite{hayashi_book2} and~\cite[Section~3.1]{Sutter_book} for more information on pinching maps.


\subsection{De Finetti reductions}
Quantum de Finetti theorems~\cite{deFinetti_CKMR_07,renner07,renner_phd} relate permutation-invariant states to convex mixtures of tensor power states. In their \emph{postselection} formulation~\cite{postsel09} they state that any permutation-invariant density matrix $\rho_n$ on $\cH^{\otimes n}$ with $d=\dim(\cH)$ satisfies
\begin{align} \label{eq_postselection}
\rho_n \leq (n+1)^{d^2} \int \sigma^{\otimes n} \di\sigma \, ,
\end{align}
where $\di\sigma$ is a universal probability measure over $\St(\cH)$. 
For this work, we are interested in a refined version of~\cref{eq_postselection}, where the state $\rho_n$ has an additional structure (in addition to being permutation-invariant), and we want this structure to be preserved by the measure $\di\sigma$. 
Such a result has been derived in~\cite[Corollary~3.2]{FR14}. For any $\sigma_A \in \St(A)$ there exists a probability measure $\di\sigma_{AB}$ on the set of extensions $\sigma_{AB} \in \St(A \otimes B)$ of $\sigma_A$ such that
\begin{align} \label{eq_postselection_FR}
\rho_{A_1^n B_1^n} \leq (n+1)^{d^2} \int \sigma_{AB}^{\otimes n} \, \di\sigma_{AB} \, ,
\end{align}
for any $n \in \N$, $d = d_A d_B^2$, and any permutation-invariant extension $\rho_{A_1^n B_1^n} \in \cC^{\sigma_A}_{AB,n}$.

\section{Uhlmann's theorem for R\'enyi relative entropies} \label{sec_main_renyi}
In this section we prove our main results for R\'enyi relative entropies in the range $\alpha \in [\frac{1}{2},1) \cup (1,\infty] $.
The case $\alpha = 1$ is postponed to~\cref{sec_main}.
The cases $\alpha=\frac{1}{2}$ and $\alpha = \infty$ are also special as they correspond to the min- and max-relative entropy.

\subsection{Regularized Uhlmann's theorem for R\'enyi relative entropies}
\begin{theorem}[Regularized Uhlmann's theorem] \label{thm_tony_new_range}
Let $\rho_{AB} \in \St(A \otimes B)$, $\sigma_A \in \St(A)$, and $\alpha \in [\frac{1}{2},1) \cup (1,\infty] $. Then
\begin{align} \label{eq_tony_uhlmann}
D_{\alpha}(\rho_{A}\|\sigma_{A})= D_{\alpha}^{\infty}(\rho_{AB} \| \cC^{\sigma_A}_{AB}) \, . 
\end{align}
\end{theorem}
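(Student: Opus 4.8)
The plan is to show both inequalities in~\eqref{eq_tony_uhlmann}. The direction $D_\alpha(\rho_A\|\sigma_A)\le D_\alpha^\infty(\rho_{AB}\|\cC^{\sigma_A}_{AB})$ is immediate: for every $n$, the partial trace over $B_1^n$ is a quantum channel mapping any $\sigma_{A_1^nB_1^n}\in\cC^{\sigma_A}_{AB,n}$ to $\sigma_A^{\otimes n}$ and $\rho_{AB}^{\otimes n}$ to $\rho_A^{\otimes n}$, so by the DPI and additivity of $D_\alpha$ under tensor powers, $D_\alpha(\rho_A^{\otimes n}\|\sigma_A^{\otimes n})=nD_\alpha(\rho_A\|\sigma_A)\le D_\alpha(\rho_{AB}^{\otimes n}\|\cC^{\sigma_A}_{AB,n})$; dividing by $n$ and taking the limit gives the claim. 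So the whole content is the upper bound $D_\alpha^\infty(\rho_{AB}\|\cC^{\sigma_A}_{AB})\le D_\alpha(\rho_A\|\sigma_A)$, i.e.\ constructing, for each $n$, a good extension of $\sigma_A^{\otimes n}$.

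The strategy is to lift the classical construction from the introduction to the non-commutative case via asymptotic spectral pinching. Fix a large block length and consider $\rho_{AB}^{\otimes n}$ and the target $\sigma_A^{\otimes n}$. Let $\cP_n$ be the pinching map onto the eigenspaces of $\rho_A^{\otimes n}$ (or a suitable commuting family built from $\rho_A^{\otimes n}$ and $\sigma_A^{\otimes n}$, acting on system $A_1^n$ and tensored with $\id_{B_1^n}$). Since $\rho_A^{\otimes n}$ is permutation invariant, the number of distinct eigenvalues is $\poly(n)$, so the pinching inequality (property~\ref{it_pinch_inequality}) costs only a $\poly(n)$ factor, i.e.\ an $O(\tfrac{\log\poly(n)}{n})\to 0$ additive term after normalization. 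After pinching, $\rho_{AB}^{\otimes n}$ is replaced by a state that commutes with $\rho_A^{\otimes n}\otimes\id$, and now the classical-type extension $\bar\sigma = \rho_{AB}'\,(\rho_A^{-1}\sigma_A)^{\otimes n}$-style formula (made precise with the pinched state and appropriate symmetrization/support care) produces an element of $\cC^{\sigma_A}_{AB,n}$ whose divergence from the pinched $\rho$ is controlled by $D_\alpha(\rho_A\|\sigma_A)$ up to the pinching penalty. Combining with the DPI-type bound relating $D_\alpha$ of the original and pinched states completes the per-$n$ estimate; dividing by $n$ and letting $n\to\infty$ kills the $\poly(n)$ correction.

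The de Finetti reduction of~\eqref{eq_postselection_FR} enters to handle the reduction to permutation-invariant (and $\cC^{\sigma_A}_{AB,n}$-structured) extensions: rather than optimizing over all extensions directly, one symmetrizes $\rho_{AB}^{\otimes n}$ (which is already permutation invariant) and uses~\eqref{eq_postselection_FR} to bound it by a $\poly(n)$ multiple of an average of i.i.d.\ extensions $\int\sigma_{AB}^{\otimes n}\,\di\sigma_{AB}$, where the measure is supported on extensions of $\sigma_A$. Picking the right $\sigma_{AB}$ from this measure (or using convexity/joint-convexity properties of $Q_\alpha$) yields a single-letter bound. One must be careful that $D_\alpha$ for $\alpha<1$ behaves differently under the $\le$ relation than for $\alpha>1$ (monotonicity of $X\mapsto\tr X^\alpha$ reverses), so the two ranges $\alpha\in[\tfrac12,1)$ and $\alpha\in(1,\infty]$ need slightly different bookkeeping; the endpoint $\alpha=\infty$ can be treated by the known identity~\eqref{eq_uhlmann_max} or by a limiting argument.

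The main obstacle I expect is making the pinched extension genuinely lie in $\cC^{\sigma_A}_{AB,n}$ while simultaneously keeping its $D_\alpha$-distance to $\rho_{AB}^{\otimes n}$ within $nD_\alpha(\rho_A\|\sigma_A)+\poly\log(n)$: the naive classical formula $\bar q(a,b)=p(a,b)q(a)/p(a)$ has a noncommutative analogue $\rho_A^{-1/2}\sigma_A^{1/2}\otimes\id$ conjugation that does not manifestly preserve positivity or the correct marginal unless one works in the simultaneous eigenbasis produced by pinching and handles the support/kernel of $\rho_A$ carefully (the "else" branch of $\bar q$). Getting the pinching to commute with both $\rho_A^{\otimes n}$ and the conjugating factors — using properties~\ref{it_pinch_commutation}, \ref{it_pinch_partial_trace}, and~\ref{it_pinch_move_when_commute} — and verifying the marginal constraint $\tr_{B_1^n}[\bar\sigma]=\sigma_A^{\otimes n}$ exactly (not just approximately) is the delicate technical core.
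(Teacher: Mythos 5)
Your overall strategy is the same as the paper's: the easy direction via DPI plus additivity, and for the hard direction a pinching argument that makes the classical extension formula $\bar q(a,b)=p(a,b)q(a)/p(a)$ work in the non-commutative setting, with the $\poly(n)$ spectrum of permutation-invariant operators absorbing the pinching cost after normalization. However, two concrete points are gaps. First, the de Finetti reduction is misapplied and is in fact not needed for this theorem: \cref{eq_postselection_FR} applies to permutation-invariant states whose $A_1^n$-marginal equals $\sigma_A^{\otimes n}$, i.e.\ to elements of $\cC^{\sigma_A}_{AB,n}$, whereas $\rho_{AB}^{\otimes n}$ has marginal $\rho_A^{\otimes n}$ and therefore cannot be bounded by $(n+1)^{d^2}\int \sigma_{AB}^{\otimes n}\,\di\sigma_{AB}$. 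In the paper the de Finetti input enters only the measured Uhlmann theorem, through the support-function lemma (\cref{lem_additivity_support_function}), not the proof of \cref{thm_tony_new_range}.

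Second, the step you summarize as ``combining with the DPI-type bound relating $D_\alpha$ of the original and pinched states'' is precisely the technical core, and no generic bound of this kind exists when the \emph{first} argument is pinched with respect to observables different from the second argument—particularly for $\alpha\in[\frac12,1)$, where the sign of $\frac{1}{\alpha-1}$ flips. The paper resolves this by pinching twice, first with respect to $\sigma_A^{\otimes n}\otimes\id_{B_1^n}$ and then with respect to the resulting marginal $\rho'_{A_1^n}\otimes\id_{B_1^n}$ (pinching with respect to $\rho_A^{\otimes n}$ alone, as you first suggest, would not yield commutation with $\sigma_A^{\otimes n}$, which is needed both for the exact marginal condition and for the estimates); it then defines the extension $\bar\sigma_{A_1^nB_1^n}=\bar\rho_{A_1^nB_1^n}^{1/2}\bar\rho_{A_1^n}^{-1/2}\sigma_A^{\otimes n}\bar\rho_{A_1^n}^{-1/2}\bar\rho_{A_1^nB_1^n}^{1/2}$, a Petz-type map applied to $\sigma_A^{\otimes n}$ that reduces to the classical formula thanks to the commutation relations, and establishes $Q_\alpha(\rho_{AB}^{\otimes n}\|\bar\sigma_{A_1^nB_1^n})\ge \poly(n)^{\alpha-1}\,Q_\alpha(\bar\rho_{A_1^nB_1^n}\|\bar\sigma_{A_1^nB_1^n})$ using the operator anti-monotonicity of $t\mapsto t^{\alpha-1}$ together with the fact that $\bar\sigma_{A_1^nB_1^n}$ commutes with both pinching observables; the single-letter bound $\frac1n D_\alpha(\bar\rho_{A_1^nB_1^n}\|\bar\sigma_{A_1^nB_1^n})\le D_\alpha(\rho_A\|\sigma_A)$ then follows from the DPI applied to that Petz-type map. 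You correctly flag positivity, support, and the exact marginal as delicate, but without the double pinching and the anti-monotonicity argument the per-$n$ estimate you need does not follow; for $\alpha\in(1,\infty]$ your fallback to the known results is consistent with the paper, which cites them directly.
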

\begin{proof}
For $\alpha \in (1,\infty)$ the assertion has been proven in~\cite[Lemma~3.3]{GEAT_24}. For $\alpha=\infty$ the assertion follows by combining~\cite[Lemma~3.3]{GEAT_24} with~\cite[Corollary A.2]{MH11}. 
It thus remains to prove~\eqref{eq_tony_uhlmann} for $\alpha \in [\frac{1}{2},1)$.
The direction
\begin{align}
D_{\alpha}(\rho_{A}\|\sigma_{A}) =\lim_{n \to \infty} \frac{1}{n} D_{\alpha}(\rho_{A}^{\otimes n}\|\sigma_{A}^{\otimes n})
\overset{\textnormal{DPI}}{\leq} D_{\alpha}^{\infty}(\rho_{AB} \| \cC^{\sigma_A}_{AB})
\end{align}
follows from the additivity of the R\'enyi relative entropy under the tensor product, and the DPI~\cite[Theorem~4.18]{marco_book}.
To see the other direction, define for $n \in \N$ the following states
\begin{align} \label{eq_def_states_pinch}
\rho'_{A_1^n B_1^n}:=\cP_{\sigma_A^{\otimes n} \otimes \id_{B_1^n}}(\rho_{AB}^{\otimes n}) \qquad \textnormal{and} \qquad \bar \rho_{A_1^n B_1^n}:=\cP_{\rho'_{A_1^n} \otimes \id_{B_1^n}}(\rho'_{A_1^n B_1^n}) \, .
\end{align}
Consider the trace-preserving and completely positive map\footnote{In case $\bar \rho_{A_1^n}$ does not have full support, we only take the inverse on the support of $\bar \rho_{A_1^n}$.}
\begin{align}
\cE(X_{A_{1}^n}) := \bar \rho_{A_1^n B_1^n}^{\frac{1}{2}} \bar \rho_{A_1^n}^{-\frac{1}{2}} X_{A_{1}^n} \bar \rho_{A_1^n}^{-\frac{1}{2}} \bar \rho_{A_1^n B_1^n}^{\frac{1}{2}} 
\end{align}
and define
\begin{align} \label{eq_bar_sigma}
\bar \sigma_{A_1^n B_1^n}:= \cE(\sigma_{A}^{\otimes n}) \, .
\end{align}

We next summarize a few facts that will be used in this proof. They follow from the properties of pinching maps summarized in~\cref{sec_pinching}. 
\begin{enumerate}[(i)]
\item $\bar \rho_{A_1^n}\overset{\textnormal{\Cshref{eq_def_states_pinch}}}{=}\tr_{B_1^n}[\cP_{\rho'_{A_1^n} \otimes \id_{B_1^n}}(\rho'_{A_1^n B_1^n})]\overset{\eqref{it_pinch_partial_trace}}{=}\cP_{\rho'_{A_1^n}}(\rho'_{A_1^n})=\rho'_{A_1^n}$. \label{item_1}
\item \label{item_2} $[\bar \rho_{A_1^n B_1^n},\bar \rho_{A_1^n}\otimes \id_{B_1^n}]=0$ which follows by combining $[\bar \rho_{A_1^n B_1^n},\rho'_{A_1^n} \otimes \id_{B_1^n}]\overset{\eqref{it_pinch_commutation}}{=}0$ with~\eqref{item_1}.
\item \label{item_3} $[\bar \rho_{A_1^n B_1^n},\sigma_{A}^{\otimes n} \otimes \id_{B_1^n}]=0$, which follows because $[\rho'_{A_1^n B_1^n},\sigma_A^{\otimes n} \otimes \id_{B_1^n}]\overset{\eqref{it_pinch_commutation}}{=}0$ implies $[\rho'_{A_1^n},\sigma_A^{\otimes n}]=0$. Hence, we have  
\begin{align}
\bar\rho_{A_1^n B_1^n} 
\overset{\textnormal{\Cshref{eq_def_states_pinch}}}{=} \cP_{\rho'_{A_1^n} \otimes \id_{B_1^n}} \circ \cP_{\sigma_A^{\otimes n} \otimes \id_{B_1^n}}(\rho_{AB}^{\otimes n}) 
\overset{\eqref{it_pinch_commutation_maps}}{=}\cP_{\sigma_A^{\otimes n} \otimes \id_{B_1^n}} \circ \cP_{\rho'_{A_1^n} \otimes \id_{B_1^n}}(\rho_{AB}^{\otimes n}) \, .
\end{align}
\item\label{item_4} $[\bar \rho_{A_1^n},\sigma_{A}^{\otimes n}]=0$ which follows directly from~\eqref{item_3}.
\item\label{item_5}  $[\sigma_A^{\otimes n} \otimes \id_{B_1^n}, \bar \sigma_{A_1^n B_1^n} ]=0$, which follows from~\cref{eq_bar_sigma} together with~\eqref{item_3} and~\eqref{item_4}.
\item\label{item_6} $[\bar \rho_{A_1^n} \otimes \id_{B_1^n}, \bar \sigma_{A_1^n B_1^n} ]=0$, which follows from~\cref{eq_bar_sigma} together with~\eqref{item_2} and~\eqref{item_4}.
\end{enumerate}
Note that $\bar \sigma_{A_1^n B_1^n}$ defined in~\cref{eq_bar_sigma} is a feasible state in the sense that $\bar \sigma_{A_1^n B_1^n} \in \cC_{AB,n}^{\sigma_A}$ since
\begin{align}
\bar \sigma_{A_1^n} 
\overset{\textnormal{\Cshref{eq_bar_sigma}}}&{=} \tr_{B_1^n}[\bar \rho_{A_1^n B_1^n}^{\frac{1}{2}} \bar \rho_{A_1^n}^{-\frac{1}{2}} \sigma_{A}^{\otimes n} \bar \rho_{A_1^n}^{-\frac{1}{2}} \bar \rho_{A_1^n B_1^n}^{\frac{1}{2}}] \\
\overset{\textnormal{\eqref{item_4}}}&{=} \tr_{B_1^n}[\bar \rho_{A_1^n B_1^n}^{\frac{1}{2}} \bar \rho_{A_1^n}^{-1} \sigma_{A}^{\otimes n}\bar \rho_{A_1^n B_1^n}^{\frac{1}{2}}] \\
\overset{\textnormal{\eqref{item_2},\eqref{item_3}}}&{=} \tr_{B_1^n}[\bar \rho_{A_1^n B_1^n} \bar \rho_{A_1^n}^{-1} \sigma_{A}^{\otimes n}] \\
&=\sigma_A^{\otimes n} \, . \label{eq_bar_sigma_feasible}
\end{align}

We next observe that
\begin{align}
\frac{1}{n}D_{\alpha}(\bar \rho_{A_1^n B_1^n} \| \bar \sigma_{A_1^n B_1^n} )
&= \frac{1}{n}D_{\alpha}\big( \cE( \bar \rho_{A_1^n}) \| \cE( \sigma_{A}^{\otimes n}) \big) \\
\overset{\textnormal{DPI}}&{\leq}\frac{1}{n}D_{\alpha}(\bar \rho_{A_1^n} \| \sigma_{A}^{\otimes n}) \\
\overset{\textnormal{\eqref{item_1}}}&{=}\frac{1}{n}D_{\alpha}(\rho'_{A_1^n} \| \sigma_{A}^{\otimes n}) \\
\overset{\textnormal{\Cshref{eq_def_states_pinch}},\,\eqref{it_pinch_partial_trace}}&{=}\frac{1}{n}D_{\alpha}\big( \cP_{\sigma_A^{\otimes n}} (\rho_{A}^{\otimes n}) \| \sigma_{A}^{\otimes n} \big) \\
\overset{\eqref{it_pinch_invariance}}&{=}\frac{1}{n}D_{\alpha}\big( \cP_{\sigma_A^{\otimes n}} (\rho_{A}^{\otimes n}) \| \cP_{\sigma_A^{\otimes n}} ( \sigma_{A}^{\otimes n}) \big) \\
\overset{\textnormal{DPI}}&{\leq} \frac{1}{n} D_{\alpha}(\rho_{A}^{\otimes n} \|  \sigma_{A}^{\otimes n}) \\
&=D_{\alpha}(\rho_{A} \|  \sigma_{A}) \, . \label{eq_alpha_step_1_ds}
\end{align}
Applying twice the pinching inequality yields
\begin{align} \label{eq_pinching_in_pf}
\rho_{AB}^{\otimes n} \overset{\eqref{it_pinch_inequality}}{\leq} |\spec(\sigma_A^{\otimes n})| |\spec(\rho'_{A_1^n})| \bar \rho_{A_1^n B_1^n} 
\overset{\textnormal{\cite[Lemma~2.8 \& Corollary~2.9]{GEAT_24}}}{=} \poly(n)\bar \rho_{A_1^n B_1^n} \, .
\end{align}
Using that $t \mapsto t^{\alpha -1}$ for $\alpha \in [\frac{1}{2},1)$ is operator anti-monotone~\cite[Table~2.2]{Sutter_book}, and the commutation relations~\eqref{item_1}-\eqref{item_6} together with~\cref{fact_commute_3_matrices} and~\ref{fact_commute_power} give 
\begin{align}
&\hspace{-5mm}Q_{\alpha}(\rho_{AB}^{\otimes n} \|  \bar \sigma_{A_1^n B_1^n} )\nonumber \\
&=\tr\Big[\Big(\bar \sigma_{A_1^n B_1^n}^{\frac{1-\alpha}{2\alpha}}\, \rho_{AB}^{\otimes n} \, \bar \sigma_{A_1^n B_1^n}^{\frac{1-\alpha}{2\alpha}}\Big)^{\alpha-1}\ \, \bar \sigma_{A_1^n B_1^n}^{\frac{1-\alpha}{2\alpha}}\, \rho_{AB}^{\otimes n}\, \bar \sigma_{A_1^n B_1^n}^{\frac{1-\alpha}{2\alpha}}\Big] \\
\overset{\textnormal{\Cshref{eq_pinching_in_pf}}}&{\geq} \tr\Big[\Big(\bar \sigma_{A_1^n B_1^n}^{\frac{1-\alpha}{2\alpha}}\, \bar \rho_{A_1^n B_1^n }\, \bar \sigma_{A_1^n B_1^n}^{\frac{1-\alpha}{2\alpha}}\Big)^{\alpha-1}\ \, \bar \sigma_{A_1^n B_1^n}^{\frac{1-\alpha}{2\alpha}}\, \rho_{AB}^{\otimes n}\, \bar \sigma_{A_1^n B_1^n}^{\frac{1-\alpha}{2\alpha}}\Big] \poly(n)^{\alpha -1} \\
\overset{\eqref{it_pinch_CPTP}}&{=}\!\!\tr\Big[\!\cP_{\rho'_{A_1^n} \otimes \id_{B_1^n}}\!\! \circ \! \cP_{\sigma_A^{\otimes n} \otimes \id_{B_1^n}}\Big(\!\big(\bar \sigma_{A_1^n B_1^n}^{\frac{1-\alpha}{2\alpha}}\, \bar \rho_{A_1^n B_1^n }\, \bar \sigma_{A_1^n B_1^n}^{\frac{1-\alpha}{2\alpha}}\big)^{\alpha-1} \bar \sigma_{A_1^n B_1^n}^{\frac{1-\alpha}{2\alpha}}\, \rho_{AB}^{\otimes n}\, \bar \sigma_{A_1^n B_1^n}^{\frac{1-\alpha}{2\alpha}}\!\Big)\!\Big] \poly(n)^{\alpha -1}  \\
\overset{\eqref{item_3},\eqref{item_5},\eqref{it_pinch_move_when_commute},\eqref{fact_commute_3_matrices},\eqref{fact_commute_power}}&{=}\!\!\tr\Big[\cP_{\rho'_{A_1^n} \otimes \id_{B_1^n}}\!\Big(\!\big(\bar \sigma_{A_1^n B_1^n}^{\frac{1-\alpha}{2\alpha}}\, \bar \rho_{A_1^n B_1^n }\, \bar \sigma_{A_1^n B_1^n}^{\frac{1-\alpha}{2\alpha}}\big)^{\alpha-1} \ \bar \sigma_{A_1^n B_1^n}^{\frac{1-\alpha}{2\alpha}}\, \rho'_{A_1^n B_1^n}\, \bar \sigma_{A_1^n B_1^n}^{\frac{1-\alpha}{2\alpha}}\Big)\Big] \poly(n)^{\alpha -1}  \\
\overset{\eqref{item_1},\eqref{item_2},\eqref{item_6},\eqref{it_pinch_move_when_commute},\eqref{fact_commute_3_matrices},\eqref{fact_commute_power}}&{=} \tr\Big[\big(\bar \sigma_{A_1^n B_1^n}^{\frac{1-\alpha}{2\alpha}}\, \bar \rho_{A_1^n B_1^n } \,\bar \sigma_{A_1^n B_1^n}^{\frac{1-\alpha}{2\alpha}}\big)^{\alpha-1}\  \bar \sigma_{A_1^n B_1^n}^{\frac{1-\alpha}{2\alpha}}\, \bar \rho_{A_1^n B_1^n}\, \bar \sigma_{A_1^n B_1^n}^{\frac{1-\alpha}{2\alpha}}\Big] \poly(n)^{\alpha -1} \\
&=Q_{\alpha}(\bar \rho_{A_1^n B_1^n } \|  \bar \sigma_{A_1^n B_1^n})\, \poly(n)^{\alpha -1} \, .
\end{align}
This can be rewritten as
\begin{align}
\frac{1}{n}D_{\alpha}(\rho_{AB}^{\otimes n} \|  \bar \sigma_{A_1^n B_1^n})
\leq \frac{1}{n}D_{\alpha}(\bar \rho_{A_1^n B_1^n } \|  \bar \sigma_{A_1^n B_1^n}) + \frac{o(n)}{n} \, . \label{eq_alpha_step_2_ds}
\end{align}
We conclude the proof by noting
\begin{align}
\frac{1}{n}D_{\alpha}(\rho_{AB}^{\otimes n} \|  \cC^{\sigma_A}_{AB,n})
\overset{\textnormal{\Cshref{eq_bar_sigma_feasible}}}&{\leq} \frac{1}{n}D_{\alpha}(\rho_{AB}^{\otimes n} \|  \bar \sigma_{A_1^n B_1^n})\\
\overset{\textnormal{\Cshref{eq_alpha_step_2_ds}}}&{\leq} \frac{1}{n}D_{\alpha}(\bar \rho_{A_1^n B_1^n } \|  \bar \sigma_{A_1^n B_1^n}) + \frac{o(n)}{n} \\
\overset{\textnormal{\Cshref{eq_alpha_step_1_ds}}}&{\leq}D_{\alpha}(\rho_{A} \|  \sigma_{A}) + \frac{o(n)}{n} \, .
\end{align}
Taking the limit $n \to \infty$ proves the assertion.
\end{proof}
\subsection{Measured Uhlmann's theorem for R\'enyi relative entropies}
We next prove a single-letter version of Uhlmann's theorem for measured R\'enyi relative entropies (see~\cref{thm_single_letter_Uhlmann_generalization}). To do so, we need three preparatory lemmas which are discussed next.  
\begin{lemma} \label{lem_remove_M_renyi}
 Let $\rho_{AB} \in \St(A\otimes B)$, $\sigma_A \in \St(A)$, and $\alpha \in [\frac{1}{2},1) \cup (1,\infty)$. Then,
\begin{align} \label{eq_prop_remove_M_renyi}
 D^{\infty}_{\alpha,\mathbb{M}}(\rho_{AB} \| \cC^{\sigma_A}_{AB})=  D^{\infty}_{\alpha}(\rho_{AB} \| \cC^{\sigma_A}_{AB} )  \, . 
\end{align}
\end{lemma}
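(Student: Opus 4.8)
The plan is to prove the two inequalities separately. The easy direction is $D^{\infty}_{\alpha,\mathbb{M}}(\rho_{AB} \| \cC^{\sigma_A}_{AB}) \leq D^{\infty}_{\alpha}(\rho_{AB} \| \cC^{\sigma_A}_{AB})$: for each fixed $n$, \cref{eq_Ua_vs_Ua_meas} already gives $D_{\alpha,\mathbb{M}}(\rho_{AB}^{\otimes n}\| \cC^{\sigma_A}_{AB,n}) \leq D_{\alpha}(\rho_{AB}^{\otimes n}\| \cC^{\sigma_A}_{AB,n})$, so dividing by $n$ and taking limits (the limit on the left exists by the footnote referencing this very lemma, and on the right by Fekete) yields the claim.

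The substantive direction is $D^{\infty}_{\alpha}(\rho_{AB} \| \cC^{\sigma_A}_{AB}) \leq D^{\infty}_{\alpha,\mathbb{M}}(\rho_{AB} \| \cC^{\sigma_A}_{AB})$, for which I would combine the regularized Uhlmann theorem just proven (\cref{thm_tony_new_range}) with the asymptotic equality \cref{eq_remove_meas} between the regularized measured R\'enyi entropy and the R\'enyi entropy on product states. Concretely: by \cref{thm_tony_new_range}, $D^{\infty}_{\alpha}(\rho_{AB} \| \cC^{\sigma_A}_{AB}) = D_{\alpha}(\rho_A\|\sigma_A)$, so it suffices to show $D_{\alpha}(\rho_A\|\sigma_A) \leq D^{\infty}_{\alpha,\mathbb{M}}(\rho_{AB} \| \cC^{\sigma_A}_{AB})$. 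Fix any $m \in \N$ and any extension $\sigma_{AB} \in \cC^{\sigma_A}_{AB}$. Then, for every $n$, the product extension $\sigma_{AB}^{\otimes mn}$ lies in $\cC^{\sigma_A}_{AB,mn}$, so
\begin{align}
\frac{1}{mn} D_{\alpha,\mathbb{M}}\big(\rho_{AB}^{\otimes mn}\| \cC^{\sigma_A}_{AB,mn}\big) \leq \frac{1}{mn} D_{\alpha,\mathbb{M}}\big(\rho_{AB}^{\otimes mn}\| \sigma_{AB}^{\otimes mn}\big) \,,
\end{align}
and letting $n \to \infty$ along the subsequence of multiples of $m$ (legitimate since the full limit defining $D^{\infty}_{\alpha,\mathbb{M}}$ exists) and using \cref{eq_remove_meas} gives $D^{\infty}_{\alpha,\mathbb{M}}(\rho_{AB}\|\cC^{\sigma_A}_{AB}) \leq D_{\alpha}(\rho_{AB}\|\sigma_{AB})$. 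Since $\sigma_{AB} \in \cC^{\sigma_A}_{AB}$ was arbitrary, taking the infimum gives $D^{\infty}_{\alpha,\mathbb{M}}(\rho_{AB}\|\cC^{\sigma_A}_{AB}) \leq D_{\alpha}(\rho_{AB}\|\cC^{\sigma_A}_{AB})$. Finally, by the DPI for $D_\alpha$ (partial trace) we have $D_{\alpha}(\rho_A\|\sigma_A) \leq D_{\alpha}(\rho_{AB}\|\sigma_{AB})$ for every extension, hence $D_{\alpha}(\rho_A\|\sigma_A) \leq D_{\alpha}(\rho_{AB}\|\cC^{\sigma_A}_{AB})$; but in fact I want the reverse bound, so instead I use \cref{thm_tony_new_range} directly to identify $D_{\alpha}(\rho_A\|\sigma_A) = D^{\infty}_{\alpha}(\rho_{AB}\|\cC^{\sigma_A}_{AB})$ and combine with the chain above—wait, I need a cleaner route.

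Let me restate the clean argument: $D^{\infty}_{\alpha}(\rho_{AB}\|\cC^{\sigma_A}_{AB}) = \inf_n \frac1n D_\alpha(\rho_{AB}^{\otimes n}\|\cC^{\sigma_A}_{AB,n}) \leq \inf_n \frac1n D_\alpha(\rho_{AB}^{\otimes n}\| \sigma_{AB}^{\otimes n}) = D_\alpha(\rho_{AB}\|\sigma_{AB})$ for any fixed extension $\sigma_{AB}$, so $D^{\infty}_{\alpha}(\rho_{AB}\|\cC^{\sigma_A}_{AB}) \leq D_\alpha(\rho_{AB}\|\cC^{\sigma_A}_{AB})$; but we actually need equality here, which is exactly the content of \cref{thm_tony_new_range} together with \cref{eq_DPI}. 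The real point is: for the non-trivial inequality of the lemma, use that $D^{\infty}_{\alpha,\mathbb{M}}(\rho_{AB}\|\cC^{\sigma_A}_{AB}) \geq D_{\alpha,\mathbb{M}}^{\infty}(\rho_A\|\sigma_A^{\otimes?})$—no. The obstacle I anticipate is precisely making sure the order of limits and infima is handled correctly, since $D_{\alpha,\mathbb{M}}$ is not additive, so one cannot simply peel off a single tensor factor; the postselection de Finetti reduction of \cref{eq_postselection_FR} together with the pinching argument from the proof of \cref{thm_tony_new_range} is likely the actual mechanism, replacing $D_\alpha$ by $D_{\alpha,\mathbb{M}}$ throughout and absorbing the $\poly(n)$ de Finetti factor into the $o(n)/n$ error term, yielding $\frac1n D_{\alpha,\mathbb{M}}(\rho_{AB}^{\otimes n}\|\cC^{\sigma_A}_{AB,n}) \geq D_{\alpha}(\rho_A\|\sigma_A) - o(1)$ via \cref{eq_remove_meas}; combined with the easy direction and \cref{thm_tony_new_range} this closes the loop. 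The main obstacle is thus verifying that the measured quantity still tolerates the spectral-pinching manipulations and the $\poly(n)$ slack, which should follow because $D_{\alpha,\mathbb{M}} \leq D_\alpha$ and the lower bound only needs \cref{eq_remove_meas} applied to the reduced states.
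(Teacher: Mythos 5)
Your easy direction ($D^{\infty}_{\alpha,\mathbb{M}} \leq D^{\infty}_{\alpha}$ via \cref{eq_Ua_vs_Ua_meas}) is fine, except that you justify the existence of the limit on the measured side by appealing to the footnote that itself cites this lemma; in a self-contained proof you should work with $\limsup$/$\liminf$ (the paper's argument in fact sandwiches these and thereby \emph{establishes} existence of the limit as a by-product). The genuine gap is the non-trivial direction $D^{\infty}_{\alpha}(\rho_{AB}\|\cC^{\sigma_A}_{AB}) \leq D^{\infty}_{\alpha,\mathbb{M}}(\rho_{AB}\|\cC^{\sigma_A}_{AB})$, which you never actually prove. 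Your first attempt — fixing an i.i.d.\ extension $\sigma_{AB}^{\otimes mn}$ and invoking \cref{eq_remove_meas} — only yields $D^{\infty}_{\alpha,\mathbb{M}}(\rho_{AB}\|\cC^{\sigma_A}_{AB}) \leq D_{\alpha}(\rho_{AB}\|\cC^{\sigma_A}_{AB})$, i.e.\ an \emph{upper} bound on the measured regularization, which is the wrong direction (and \cref{eq_remove_meas} is of no help for a lower bound, since the optimizer in $\cC^{\sigma_A}_{AB,n}$ need not be i.i.d., as you yourself note). Your fallback — ``de Finetti reduction plus the pinching argument from the proof of \cref{thm_tony_new_range}, replacing $D_{\alpha}$ by $D_{\alpha,\mathbb{M}}$ throughout'' — does not work as stated: the pinching in \cref{thm_tony_new_range} is with respect to $\sigma_A^{\otimes n}$ and $\rho'_{A_1^n}$ and serves to construct one feasible extension achieving the upper bound $D_\alpha(\rho_A\|\sigma_A)$; it says nothing about lower-bounding $D_{\alpha,\mathbb{M}}(\rho_{AB}^{\otimes n}\|\sigma_n)$ by $D_{\alpha}(\rho_{AB}^{\otimes n}\|\sigma_n)$ for the (unknown, non-i.i.d.) optimizer $\sigma_n$, and the observation $D_{\alpha,\mathbb{M}}\le D_\alpha$ again points the wrong way.

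The missing idea is the following two-step mechanism (this is the paper's route, and it does not need \cref{thm_tony_new_range} or the de Finetti reduction at all). First, the optimizer $\hat\sigma_{A_1^n B_1^n}$ of $D_{\alpha,\mathbb{M}}(\rho_{AB}^{\otimes n}\|\cC^{\sigma_A}_{AB,n})$ can be taken permutation-invariant: symmetrizing the second argument does not increase (for $\alpha<1$: does not decrease $Q_{\alpha,\mathbb{M}}$) the objective by unitary invariance (\cref{fact_UI}) and concavity/convexity of $\sigma\mapsto Q_{\alpha,\mathbb{M}}(\rho\|\sigma)$ (\cref{fact_convex_concave}), and the symmetrized state is still an extension of $\sigma_A^{\otimes n}$ (\cref{eq_valid_extentision}). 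Second, for a permutation-invariant second argument one has $|\spec(\hat\sigma_n)|=\poly(n)$, and pinching $\rho_{AB}^{\otimes n}$ with respect to $\hat\sigma_n$ itself gives
\begin{align}
\frac{1}{n}D_{\alpha,\mathbb{M}}\big(\rho_{AB}^{\otimes n}\big\|\hat\sigma_{A_1^nB_1^n}\big)
\;\geq\; \frac{1}{n}D_{\alpha}\big(\rho_{AB}^{\otimes n}\big\|\hat\sigma_{A_1^nB_1^n}\big) - \frac{2}{n}\log|\spec(\hat\sigma_{A_1^nB_1^n})|
\;\geq\; \frac{1}{n}D_{\alpha}\big(\rho_{AB}^{\otimes n}\big\|\cC^{\sigma_A}_{AB,n}\big) + \frac{o(n)}{n}
\end{align}
(this is \cref{fact_asymptotic_DM_perminv}: after pinching, the two states commute so measured and unmeasured quantities coincide, and unpinching costs only $2\log|\spec|$). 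Taking $n\to\infty$ gives $D^{\infty}_{\alpha,\mathbb{M}} \geq D^{\infty}_{\alpha}$. Without the symmetrization step and the pinching-with-respect-to-the-optimizer step, your sketch does not close; as it stands, the substantive half of the lemma is asserted rather than proved.
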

\begin{proof}
The DPI for the R\'enyi relative entropy~\cite[Theorem~4.18]{marco_book} implies that for any $n \in \N$
 \begin{align}
  \frac{1}{n} D_{\alpha,\mathbb{M}}(\rho_{AB}^{\otimes n} \| \cC^{\sigma_A}_{AB,n} ) \overset{\textnormal{\Cshref{eq_Ua_vs_Ua_meas}}}{\leq}   \frac{1}{n} D_{\alpha}(\rho_{AB}^{\otimes n} \| \cC^{\sigma_A}_{AB,n})  \, .
 \end{align}
 Taking the limit $n \to \infty$ yields
 \begin{align}
 D^{\infty}_{\alpha,\mathbb{M}}(\rho_{AB}\| \cC^{\sigma_A}_{AB} ) \leq D_{\alpha}^{\infty}(\rho_{AB}\| \cC^{\sigma_A}_{AB})\, .
 \end{align}
 
 Hence, it remains to prove the other direction. For any $n \in \N$ we denote by $\cS_n$ the set of permutations on $n$ elements.
 The unitary invariance and concavity/convexity properties of the measured R\'enyi relative entropy imply that for any $\bar \sigma_{A_1^n B_1^n} \in \cC^{\sigma_A}_{AB,n}$, we have for $\alpha \in [\frac{1}{2},1)$ 
 \begin{align} 
 Q_{\alpha,\mathbb{M}}(\rho_{AB}^{\otimes n} \| \bar \sigma_{A_1^n B_1^n}) 
 \overset{\textnormal{\Cshref{fact_UI}}}&{=} \frac{1}{n!} \sum_{\pi \in \cS_n}   Q_{\alpha,\mathbb{M}}\big(\pi \rho_{AB}^{\otimes n} \pi^\dagger \|  \pi \bar \sigma_{A_1^n B_1^n} \pi^\dagger \big) \\
 &=\frac{1}{n!} \sum_{\pi \in \cS_n}   Q_{\alpha,\mathbb{M}}\big(\rho^{\otimes n} \|  \pi \bar \sigma_{A_1^n B_1^n} \pi^\dagger \big) \\
 \overset{\textnormal{\Cshref{fact_convex_concave}}}&{\leq} Q_{\alpha,\mathbb{M}}  \Big( \rho^{\otimes n} \Big\|   \frac{1}{n!} \sum_{\pi \in \cS_n}   \pi \bar \sigma_{A_1^n B_1^n} \pi^\dagger  \Big) \, .  \label{eq_opt_PI_renyi} 
 \end{align}
By applying the same arguments for $\alpha \in (1,\infty)$ we find
\begin{align}
 Q_{\alpha,\mathbb{M}}(\rho_{AB}^{\otimes n} \| \bar \sigma_{A_1^n B_1^n}) 
 \geq Q_{\alpha,\mathbb{M}}  \Big( \rho^{\otimes n} \Big\|   \frac{1}{n!} \sum_{\pi \in \cS_n}   \pi \bar \sigma_{A_1^n B_1^n} \pi^\dagger  \Big) \, .
\end{align}

Furthermore, since 
 \begin{align}
  \tr_{B_1^n}\Big[\frac{1}{n!} \sum_{\pi \in \cS_n}   \pi \bar \sigma_{A_1^n B_1^n} \pi^\dagger\Big]
  &=\frac{1}{n!}\sum_{\pi \in \cS_n} \tr_{B_1^n}[  \pi_{A} \otimes \pi_B \bar \sigma_{A_1^n B_1^n} \pi_A^\dagger \otimes \pi_B^\dagger] \\
  &=\frac{1}{n!}\sum_{\pi \in \cS_n}  \pi_{A} \tr_{B_1^n}[ \pi_B \bar \sigma_{A_1^n B_1^n}  \pi_B^\dagger] \pi_A^\dagger \\
  &=\frac{1}{n!}\sum_{\pi \in \cS_n}  \pi_{A} \tr_{B_1^n}[ \bar \sigma_{A_1^n B_1^n}] \pi_A^\dagger \\
    &=\frac{1}{n!}\sum_{\pi \in \cS_n}  \pi_{A} \sigma_A^{\otimes n}  \pi_A^\dagger \\
  &=\sigma_A^{\otimes n} \, , \label{eq_valid_extentision}
 \end{align}
 it follows that for any $\alpha \in [\frac{1}{2},1) \cup (1,\infty)$
 \begin{align} \label{eq_optimizer_Er_renyi}
  \hat \sigma_{A_1^n B_1^n} \in \argmin_{\bar \sigma_{A_1^n B_1^n} \in \cC^{\sigma_A}_{AB,n}}  D_{\alpha,\mathbb{M}}(\rho_{AB}^{\otimes n} \| \bar \sigma_{A_1^n B_1^n}) 
 \end{align}
can be assumed to be permutation-invariant. By~\cref{fact_asymptotic_DM_perminv}, it thus follows that
\begin{align}
\frac{1}{n} D_{\alpha,\mathbb{M}}(\rho_{AB}^{\otimes n} \| \cC^{\sigma_A}_{AB,n} ) 
\overset{\textnormal{\Cshref{eq_optimizer_Er_renyi}}}&{=} \frac{1}{n}  D_{\alpha,\mathbb{M}}(\rho_{AB}^{\otimes n} \| \hat \sigma_{A_1^n B_1^n}) \\
\overset{\textnormal{\Cshref{fact_asymptotic_DM_perminv}}}&{\geq} \frac{1}{n}  D_{\alpha}( \rho_{AB}^{\otimes n} \| \hat \sigma_{A_1^n B_1^n} ) + \frac{o(n)}{n} \\
&\geq \frac{1}{n} D_{\alpha}(\rho_{AB}^{\otimes n} \| \cC^{\sigma_A}_{AB,n} ) + \frac{o(n)}{n} \, .
\end{align}
 Taking the limit $n \to \infty$ yields
 \begin{align}
 D^{\infty}_{\alpha,\mathbb{M}}(\rho_{AB} \| \cC^{\sigma_A}_{AB} ) \geq D_{\alpha}^{\infty}(\rho_{AB} \| \cC^{\sigma_A}_{AB})\, .
 \end{align}
\end{proof}
Next, we show that the support function for the set $\cC^{\sigma_A}_{AB,n}$ defined in~\cref{eq_set_C_n} is approximately multiplicative under the tensor product.
\begin{lemma} \label{lem_additivity_support_function}
Let $\tau_{AB} \in \St(A\otimes B)$, $\sigma_A \in \St(A)$, and $\cC^{\sigma_A}_{AB,n}$ as defined in~\cref{eq_set_C_n}. Then,
\begin{align} \label{eq_support_additivity}
\frac{1}{n} \log h_{\cC^{\sigma_A}_{AB,n}}(\tau_{AB}^{\otimes n}) = \log h_{\cC^{\sigma_A}_{AB}}(\tau_{AB}) + \frac{o(n)}{n} \, , 
\end{align}
where $h_{\cC^{\sigma_A}_{AB,n}}$ denotes the support function of $\cC^{\sigma_A}_{AB,n}$ defined in~\cref{eq_support_function}.
\end{lemma}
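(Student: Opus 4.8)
The plan is to establish~\cref{eq_support_additivity} by proving exact super-multiplicativity of $h_{\cC^{\sigma_A}_{AB,n}}$ together with an approximate sub-multiplicativity obtained from a de Finetti reduction, mirroring the structure of the proof of~\cref{lem_remove_M_renyi}.

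For the easy direction I would take a maximizer $\sigma_{AB}^{\star}\in\cC^{\sigma_A}_{AB}$ of $h_{\cC^{\sigma_A}_{AB}}(\tau_{AB})$, which exists since $\cC^{\sigma_A}_{AB}$ is compact and $\sigma_{AB}\mapsto\tr[\tau_{AB}\sigma_{AB}]$ is continuous. Then $(\sigma_{AB}^{\star})^{\otimes n}\in\cC^{\sigma_A}_{AB,n}$ and $\tr[\tau_{AB}^{\otimes n}(\sigma_{AB}^{\star})^{\otimes n}]=\tr[\tau_{AB}\sigma_{AB}^{\star}]^{n}=h_{\cC^{\sigma_A}_{AB}}(\tau_{AB})^{n}$, so that $h_{\cC^{\sigma_A}_{AB,n}}(\tau_{AB}^{\otimes n})\geq h_{\cC^{\sigma_A}_{AB}}(\tau_{AB})^{n}$ and hence $\frac1n\log h_{\cC^{\sigma_A}_{AB,n}}(\tau_{AB}^{\otimes n})\geq\log h_{\cC^{\sigma_A}_{AB}}(\tau_{AB})$.

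For the other direction I would first reduce to a permutation-invariant maximizer of $h_{\cC^{\sigma_A}_{AB,n}}(\tau_{AB}^{\otimes n})$. Since $\tau_{AB}^{\otimes n}$ commutes with all permutation unitaries $\pi$ on $A_1^nB_1^n$ and the objective is linear, replacing any $\sigma'_{A_1^nB_1^n}\in\cC^{\sigma_A}_{AB,n}$ by its symmetrization $\frac{1}{n!}\sum_{\pi\in\cS_n}\pi\sigma'_{A_1^nB_1^n}\pi^{\dagger}$ leaves $\tr[\tau_{AB}^{\otimes n}\sigma'_{A_1^nB_1^n}]$ unchanged, and the partial-trace computation already carried out in~\cref{eq_valid_extentision} shows the symmetrized state is still in $\cC^{\sigma_A}_{AB,n}$. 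Let $\hat\sigma_{A_1^nB_1^n}\in\cC^{\sigma_A}_{AB,n}$ be such a permutation-invariant maximizer. Applying the de Finetti reduction~\cref{eq_postselection_FR} with $d=d_Ad_B^{2}$ gives a probability measure $\di\sigma_{AB}$ supported on extensions of $\sigma_A$ (hence on $\cC^{\sigma_A}_{AB}$) with $\hat\sigma_{A_1^nB_1^n}\leq(n+1)^{d^{2}}\int\sigma_{AB}^{\otimes n}\,\di\sigma_{AB}$. Taking the trace against $\tau_{AB}^{\otimes n}\geq0$ and using $\tr[\tau_{AB}^{\otimes n}\sigma_{AB}^{\otimes n}]=\tr[\tau_{AB}\sigma_{AB}]^{n}\leq h_{\cC^{\sigma_A}_{AB}}(\tau_{AB})^{n}$ for every $\sigma_{AB}$ in the support --- which holds because $0\leq\tr[\tau_{AB}\sigma_{AB}]\leq h_{\cC^{\sigma_A}_{AB}}(\tau_{AB})$ and $t\mapsto t^{n}$ is monotone on $[0,\infty)$ --- yields $h_{\cC^{\sigma_A}_{AB,n}}(\tau_{AB}^{\otimes n})\leq(n+1)^{d^{2}}h_{\cC^{\sigma_A}_{AB}}(\tau_{AB})^{n}=\poly(n)\,h_{\cC^{\sigma_A}_{AB}}(\tau_{AB})^{n}$.

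Combining the two bounds sandwiches $\frac1n\log h_{\cC^{\sigma_A}_{AB,n}}(\tau_{AB}^{\otimes n})$ between $\log h_{\cC^{\sigma_A}_{AB}}(\tau_{AB})$ and $\log h_{\cC^{\sigma_A}_{AB}}(\tau_{AB})+\frac{d^{2}\log(n+1)}{n}$, and since $d^{2}\log(n+1)=o(n)$ this is precisely~\cref{eq_support_additivity}; the degenerate case $h_{\cC^{\sigma_A}_{AB}}(\tau_{AB})=0$ is harmless, as the upper bound then forces $h_{\cC^{\sigma_A}_{AB,n}}(\tau_{AB}^{\otimes n})=0$ and both sides are $-\infty$. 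I do not expect a genuine obstacle here: the only points needing care are that the de Finetti estimate requires a permutation-invariant input (handled by the symmetrization step) and that its measure must be supported on extensions of $\sigma_A$ rather than arbitrary states --- both guaranteed by the refined postselection bound~\cref{eq_postselection_FR} from~\cite{FR14}.
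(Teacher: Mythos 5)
Your proof is correct and follows essentially the same route as the paper's: the lower bound via tensor powers of a single-copy maximizer, and the upper bound via symmetrizing the $n$-copy maximizer (using~\cref{eq_valid_extentision} for feasibility) and then applying the de Finetti reduction~\cref{eq_postselection_FR} to obtain the $(n+1)^{d^2}$ polynomial overhead. The extra remark on the degenerate case $h_{\cC^{\sigma_A}_{AB}}(\tau_{AB})=0$ is a harmless addition.
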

\begin{proof}
We note that~\cref{eq_support_additivity} is a relaxed asymptotic version of the \emph{polar assumption} introduced in~\cite[Lemma~8]{FFF24}. 
One direction of~\cref{eq_support_additivity} is simple since\footnote{Since we consider finite dimensions, the compactness of the set $\cC^{\sigma_A}_{AB,n}$ implies that the support function achieves its supremum.}
\begin{align}
\frac{1}{n} \log h_{\cC^{\sigma_A}_{AB,n}}(\tau^{\otimes n})
= \frac{1}{n} \log \max_{\sigma_n \in \cC^{\sigma_A}_{AB,n} } \tr[\sigma_n \tau^{\otimes n}]
\geq  \frac{1}{n} \log \max_{\sigma \in \cC^{\sigma_A}_{AB}} \tr[\sigma^{\otimes n} \tau^{\otimes n}]
= \log h_{\cC^{\sigma_A}_{AB}}(\tau)  \, ,
\end{align}
where we used that $\sigma \in \cC^{\sigma_A}_{AB}$ implies $\sigma^{\otimes n} \in \cC^{\sigma_A}_{AB,n}$.
To see the other direction, note that the maximizer
\begin{align}
\bar \sigma_{A_1^n B_1^n} \in \argmax_{\sigma_n \in \cC^{\sigma_A}_{AB,n}} \tr[\tau^{\otimes n} \sigma_n] 
\end{align}
can be assumed without loss of generality to be permutation-invariant, because for any $\sigma_n \in \cC^{\sigma_A}_{AB,n}$,
\begin{align}
\tr[\tau^{\otimes n}  \sigma_n]
\overset{\textnormal{cyclicity of trace}}&{=} \frac{1}{n!} \sum_{\pi \in \cS_n} \tr[\pi \tau^{\otimes n} \pi^\dagger \pi \sigma_n \pi^\dagger] \\
&=\frac{1}{n!} \sum_{\pi \in \cS_n} \tr[\tau^{\otimes n}  \pi \sigma_n \pi^\dagger] \\
&= \tr\Big[\tau^{\otimes n} \frac{1}{n!} \sum_{\pi \in \cS_n} \pi \sigma_n \pi^\dagger \Big]   \, . 
\end{align}
In addition,~\cref{eq_valid_extentision} ensures that $\frac{1}{n!} \sum_{\pi \in \cS_n} \pi \sigma_n \pi^\dagger \in \cC^{\sigma_A}_{AB,n}$.
Hence, we can employ a quantum de Finetti reduction which yields for $d=\dim(A)\dim(B)^2$
\begin{align}
\frac{1}{n} \log h_{\cC^{\sigma_A}_{AB,n}}(\tau^{\otimes n})
&= \frac{1}{n} \log \max_{\sigma_n \in \cC^{\sigma_A}_{AB,n} } \tr[\sigma_n \tau^{\otimes n}] \\
\overset{\textnormal{\Cshref{eq_postselection_FR}}}&{\leq} \frac{1}{n} \log \left( (n+1)^{d^2} \int \tr[\sigma^{\otimes n} \tau^{\otimes n}] \di \sigma \right) \\
&=\frac{1}{n} \log \left( (n+1)^{d^2} \int (\tr[\sigma \tau])^n \di \sigma \right) \\
&\leq \log h_{\cC^{\sigma_A}_{AB}}(\tau) + \frac{o(n)}{n} \, ,
\end{align}
where the final inequality uses that $\tr[\sigma \tau] \leq h_{\cC^{\sigma_A}_{AB}}(\tau) $ for any $\sigma \in \cC^{\sigma_A}_{AB}$  and that $\di \sigma$ is a probability measure on $\cC^{\sigma_A}_{AB}$.
This proves~\cref{eq_support_additivity}.
\end{proof}

\begin{lemma} \label{lem_superadd_renyi}
 Let $\rho_{AB} \in \St(A\otimes B)$, $\sigma_A \in \St(A)$, and $\alpha \in [\frac{1}{2},1) \cup (1,\infty)$. Then,
\begin{align} \label{eq_prop_superadd_renyi}
 D^{\infty}_{\alpha,\mathbb{M}}(\rho_{AB}\| \cC^{\sigma_A}_{AB}) \geq  D_{\alpha,\mathbb{M}}(\rho_{AB}\| \cC^{\sigma_A}_{AB})  \, . 
\end{align}
\end{lemma}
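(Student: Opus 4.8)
The plan is to first establish a single‑letter variational formula for $Q_{\alpha,\mathbb{M}}(\rho\,\|\,\cS)$ in terms of the support function $h_\cS$, valid for an arbitrary nonempty compact convex set $\cS$ of density matrices, and then to apply it both with $\cS=\cC^{\sigma_A}_{AB}$ and with $\cS=\cC^{\sigma_A}_{AB,n}$, bridging the two via~\cref{lem_additivity_support_function}. Throughout I write $Q_{\alpha,\mathbb{M}}(\rho\,\|\,\cS)$ for the quantity with $\min_{\sigma\in\cS}D_{\alpha,\mathbb{M}}(\rho\,\|\,\sigma)=\tfrac{1}{\alpha-1}\log Q_{\alpha,\mathbb{M}}(\rho\,\|\,\cS)$; since $\alpha-1$ changes sign at $\alpha=1$, this equals $\max_{\sigma\in\cS}Q_{\alpha,\mathbb{M}}(\rho\,\|\,\sigma)$ for $\alpha\in[\tfrac12,1)$ and $\min_{\sigma\in\cS}Q_{\alpha,\mathbb{M}}(\rho\,\|\,\sigma)$ for $\alpha\in(1,\infty)$.

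\emph{Step 1 (variational formula).} I would show that for every such $\cS$,
\begin{align*}
Q_{\alpha,\mathbb{M}}(\rho\,\|\,\cS)=\mathrm{opt}_{\tau>0}\ \tr[\rho\,\tau^{\frac{\alpha-1}{\alpha}}]^{\alpha}\,h_{\cS}(\tau)^{1-\alpha}\,,
\end{align*}
where $\mathrm{opt}$ is $\inf$ for $\alpha\in[\tfrac12,1)$ and $\sup$ for $\alpha\in(1,\infty)$. Starting from the linear variational formula~\cref{eq_varFormula_DM_a}, $Q_{\alpha,\mathbb{M}}(\rho\,\|\,\sigma)=\mathrm{opt}_{\tau>0}\big(\alpha\tr[\rho\tau^{\frac{\alpha-1}{\alpha}}]+(1-\alpha)\tr[\sigma\tau]\big)$, the quantity $Q_{\alpha,\mathbb{M}}(\rho\,\|\,\cS)$ becomes a joint optimization over $\sigma\in\cS$ and $\tau>0$ whose objective is affine in $\sigma$ and, in $\tau$, convex when $\alpha\in[\tfrac12,1)$ and concave when $\alpha\in(1,\infty)$, because $t\mapsto t^{(\alpha-1)/\alpha}$ is operator convex for $(\alpha-1)/\alpha\in[-1,0)$ and operator concave for $(\alpha-1)/\alpha\in(0,1)$. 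Since $\cS$ is compact, Sion's minimax theorem lets me exchange the $\sigma$‑ and $\tau$‑optimizations; carrying out the now‑linear optimization over $\sigma$ turns $(1-\alpha)\tr[\sigma\tau]$ into $(1-\alpha)h_\cS(\tau)$ in both regimes (minimizing the negative multiple for $\alpha>1$, maximizing the positive one for $\alpha<1$, both yielding $\max_\sigma\tr[\sigma\tau]=h_\cS(\tau)$). Finally, replacing $\tau$ by $c\tau$ with $c>0$ and optimizing over the scalar $c$ in closed form — using that rescaling is a symmetry of $\{\tau>0\}$ and that $h_\cS$ is positively homogeneous of degree $1$ — converts $\mathrm{opt}_{\tau>0}\big(\alpha\tr[\rho\tau^{\frac{\alpha-1}{\alpha}}]+(1-\alpha)h_\cS(\tau)\big)$ into the asserted geometric‑mean form; this reproduces, with $h_\cS(\tau)$ in place of $\tr[\sigma\tau]$, the passage between~\cref{eq_varFormula_DM_a} and~\cref{eq_varFormula_DM_a_2}.

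\emph{Step 2 (tensorization and conclusion).} Applying Step~1 with $\cS=\cC^{\sigma_A}_{AB,n}$ and $\rho=\rho_{AB}^{\otimes n}$ and restricting the optimization to product operators $\tau_n=\tau^{\otimes n}$ with $\tau>0$ on $A\otimes B$ bounds $Q_{\alpha,\mathbb{M}}(\rho_{AB}^{\otimes n}\,\|\,\cC^{\sigma_A}_{AB,n})$ from below (when $\alpha>1$, where $\mathrm{opt}=\sup$) and from above (when $\alpha<1$, where $\mathrm{opt}=\inf$) by
\begin{align*}
\mathrm{opt}_{\tau>0}\ \tr[\rho_{AB}\,\tau^{\frac{\alpha-1}{\alpha}}]^{n\alpha}\,h_{\cC^{\sigma_A}_{AB,n}}(\tau^{\otimes n})^{1-\alpha}\,.
\end{align*}
I would then invoke~\cref{lem_additivity_support_function} — in the uniform form $h_{\cC^{\sigma_A}_{AB,n}}(\tau^{\otimes n})\leq (n+1)^{d^2}h_{\cC^{\sigma_A}_{AB}}(\tau)^n$ with $d=d_Ad_B^2$ that appears in its proof, extended from normalized states to all $\tau>0$ by positive homogeneity — to replace $h_{\cC^{\sigma_A}_{AB,n}}(\tau^{\otimes n})^{1-\alpha}$ by $(n+1)^{\pm c}\,h_{\cC^{\sigma_A}_{AB}}(\tau)^{n(1-\alpha)}$ for a constant $c=c(\alpha,d_A,d_B)>0$, the sign chosen (according to whether $1-\alpha>0$ or $1-\alpha<0$) so that the bound keeps pointing the right way. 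Pulling the $\tau$‑independent polynomial factor out of $\mathrm{opt}_\tau$, using $\mathrm{opt}_{\tau>0}g(\tau)^n=(\mathrm{opt}_{\tau>0}g(\tau))^n$ for $g>0$, and applying Step~1 at $n=1$ then gives
\begin{align*}
Q_{\alpha,\mathbb{M}}(\rho_{AB}^{\otimes n}\,\|\,\cC^{\sigma_A}_{AB,n})&\geq (n+1)^{-c}\,Q_{\alpha,\mathbb{M}}(\rho_{AB}\,\|\,\cC^{\sigma_A}_{AB})^n\qquad(\alpha>1),\\
Q_{\alpha,\mathbb{M}}(\rho_{AB}^{\otimes n}\,\|\,\cC^{\sigma_A}_{AB,n})&\leq (n+1)^{c}\,Q_{\alpha,\mathbb{M}}(\rho_{AB}\,\|\,\cC^{\sigma_A}_{AB})^n\qquad(\alpha<1).
\end{align*}
Taking $\tfrac{1}{n}\log$ and multiplying by $\tfrac{1}{\alpha-1}$ — which reverses the inequality precisely when $\alpha<1$ — yields in both cases $\tfrac{1}{n}D_{\alpha,\mathbb{M}}(\rho_{AB}^{\otimes n}\,\|\,\cC^{\sigma_A}_{AB,n})\geq D_{\alpha,\mathbb{M}}(\rho_{AB}\,\|\,\cC^{\sigma_A}_{AB})-O(\log(n)/n)$, and letting $n\to\infty$ (the limit exists by~\cref{lem_remove_M_renyi}) proves~\cref{eq_prop_superadd_renyi}.

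The step I expect to be the main obstacle is the minimax exchange in Step~1: one has to match the exponent $(\alpha-1)/\alpha$ exactly to the operator‑convexity range $[-1,0]\cup[1,2]$ and the operator‑concavity range $[0,1]$, observe that the objective is genuinely affine — hence both quasi‑convex and quasi‑concave — in $\sigma$ so that Sion's theorem applies even for the $\min$ over $\sigma$ arising when $\alpha>1$, and verify semicontinuity on the convex but non‑compact domain $\{\tau>0\}$, leaning on compactness of $\cS$. One should also confirm that the degenerate cases (non‑full‑rank $\rho$ or $\sigma$, and $Q_{\alpha,\mathbb{M}}=+\infty$ for $\alpha>1$ when $\supp(\rho_A)\not\subseteq\supp(\sigma_A)$, in which case both sides of~\cref{eq_prop_superadd_renyi} are $+\infty$) do not spoil the argument, which they do not since~\cref{eq_varFormula_DM_a} remains valid as an identity of extended reals. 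A minor but essential point is that the error term in~\cref{lem_additivity_support_function} is uniform in $\tau$, so that it can be pulled out of $\mathrm{opt}_\tau$.
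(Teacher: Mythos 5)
Your proposal is correct and follows essentially the same route as the paper's proof: the variational formulas~\cref{eq_varFormula_DM_a,eq_varFormula_DM_a_2}, the approximate multiplicativity of the support function from~\cref{lem_additivity_support_function} (in its uniform $(n+1)^{d^2}$ form), restriction to product test operators $\tau^{\otimes n}$, Sion's minimax with compactness of $\cC^{\sigma_A}_{AB}$, and the rescaling/AM--GM (resp.\ Bernoulli) step. The only difference is organizational: you apply Sion once at the single-letter level to obtain a set version of~\cref{eq_varFormula_DM_a_2} with $h_{\cS}(\tau)$ in place of $\tr[\sigma \tau]$ and then tensorize the resulting geometric-mean expression, whereas the paper tensorizes first and applies Sion to the $n$-th-power objective before recognizing $Q_{\alpha,\mathbb{M}}(\rho_{AB}\|\sigma_{AB})^n$.
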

\begin{proof}
We first prove the assertion in the range $\alpha \in [\frac{1}{2},1)$. For $n \in \N$ we have
\begin{align}
\frac{1}{n} D_{\alpha,\mathbb{M}}(\rho_{AB}^{\otimes n}\| \cC^{\sigma_A}_{AB,n})
=\frac{1}{n} \min_{\sigma_n \in \cC^{\sigma_A}_{AB,n}} D_{\alpha,\mathbb{M}}(\rho_{AB}^{\otimes n} \| \sigma_n)
=\frac{1}{n} \frac{1}{\alpha -1} \log \max_{\sigma_n \in \cC^{\sigma_A}_{AB,n}} Q_{\alpha,\mathbb{M}}(\rho_{AB}^{\otimes n} \| \sigma_n) \, . \label{eq_var_alpha_step1_ds}
\end{align}
Using the variational formula from~\cite[Lemma~3]{FBT15} we have
\begin{align}
\max_{\sigma_n \in \cC^{\sigma_A}_{AB,n}} Q_{\alpha,\mathbb{M}}(\rho^{\otimes n} \| \sigma_n)
\overset{\textnormal{\Cshref{eq_varFormula_DM_a}}}&{=}\max_{\sigma_n \in \cC^{\sigma_A}_{AB,n}} \inf_{\tau_n >0}\big\{ \alpha \tr[\rho^{\otimes n} \tau_n^{\frac{\alpha-1}{\alpha}}] + (1-\alpha) \tr[\sigma_n \tau_n] \big\} \\
&\leq \inf_{\tau_n >0}\big\{ \alpha \tr[\rho^{\otimes n} \tau_n^{\frac{\alpha-1}{\alpha}}] + (1-\alpha) h_{\cC^{\sigma_A}_{AB,n}}(\tau_n)\big\} \\
&\leq \inf_{\tau >0}\big\{ \alpha \tr[\rho^{\otimes n} (\tau^{\frac{\alpha-1}{\alpha}})^{\otimes n}] + (1-\alpha) h_{\cC^{\sigma_A}_{AB,n}}(\tau^{\otimes n})\big\}\\
\overset{\textnormal{\Cshref{lem_additivity_support_function}}}&{\leq}\inf_{\tau >0}\big\{ \alpha (\tr[\rho \tau^{\frac{\alpha-1}{\alpha}}])^n + (1-\alpha) h_{\cC}(\tau)^n\big\} \,\ee^{o(n)} \\
&=\inf_{\tau >0} \max_{\sigma \in \cC^{\sigma_A}_{AB}}\big\{ \alpha \tr[\rho \tau^{\frac{\alpha-1}{\alpha}}]^n + (1-\alpha) \tr[\sigma \tau]^n\big\} \,\ee^{o(n)} \\
\overset{\textnormal{Sion's minimax~\cite{Sion58}}}&{=} \max_{\sigma \in \cC^{\sigma_A}_{AB}}\inf_{\tau >0}  \big\{ \alpha \tr[\rho \tau^{\frac{\alpha-1}{\alpha}}]^n + (1-\alpha) \tr[\sigma \tau]^n\big\} \,\ee^{o(n)} \label{eq_before_diamond} \\
\overset{(\diamond)}&{\leq} \max_{\sigma \in \cC^{\sigma_A}_{AB}}\inf_{\tau >0}  \big\{ \tr[\rho \tau^{\frac{\alpha-1}{\alpha}}]^{n\alpha}  \tr[\sigma \tau]^{n(1-\alpha)}\big\} \,\ee^{o(n)} \label{eq_after_diamond} \\
&= \max_{\sigma \in \cC^{\sigma_A}_{AB}}\inf_{\tau >0}  \big\{ \tr[\rho \tau]^{n\alpha}  \tr[\sigma \tau^{{\frac{\alpha}{\alpha-1}}}]^{n(1-\alpha)}\big\} \,\ee^{o(n)}\\
\overset{\textnormal{\Cshref{eq_varFormula_DM_a_2}}}&{=} \max_{\sigma \in \cC^{\sigma_A}_{AB}}Q_{\alpha,\mathbb{M}}(\rho \| \sigma)^n \,\ee^{o(n)} \, , \label{eq_var_alpha_step2_ds}
\end{align}
where Sion's minimax theorem is applicable since the set of positive definite matrices and $\cC^{\sigma_A}_{AB}$ are both convex. In addition, the set $\cC^{\sigma_A}_{AB}$ is compact since we are in a finite-dimensional setting. Furthermore, the function $\tau \mapsto \alpha \tr[\rho \tau^{\frac{\alpha-1}{\alpha}}]^n + (1-\alpha) \tr[\sigma \tau]^n$ is convex~\cite[Table~2.2]{Sutter_book}\footnote{Recall that for $n\geq 1$ the function $h: x \mapsto x^n$ is convex and non-decreasing on $\R_+$. Hence, for a convex function $g$ the composition $f=h \circ g$ is convex~\cite[Section~3.2.4]{boyd_book}.} and $\sigma \mapsto \alpha \tr[\rho \tau^{\frac{\alpha-1}{\alpha}}]^n + (1-\alpha) \tr[\sigma \tau]^n$ is quasi-concave.\footnote{To see this, let $f(\sigma):=c + (1-\alpha) \tr[\sigma \tau]^n$ for $c\geq 0$ and note that for $t \in [0,1]$ we have $f(t \sigma_1 +(1-t)\sigma_2)= c +(1-\alpha) (t \tr[\sigma_1 \tau] +(1-t)\tr[\sigma_2 \tau])^n \geq c +(1-\alpha) \min\{  \tr[\sigma_1 \tau]^n ,\tr[\sigma_2 \tau]^n\} = \min\{f(\sigma_1) ,f(\sigma_2)\}$.}
To see the step $(\diamond)$ above, note that if $\tau >0$ is feasible in~\cref{eq_before_diamond} and $\lambda >0$, then $\lambda \tau >0$ is also feasible. Choosing \smash{$\lambda=\tr[\rho \tau^{\frac{\alpha-1}{\alpha}}]^\alpha \tr[\sigma \tau]^{-\alpha}$} justifies the inequality labeled with $(\diamond)$.\footnote{The inequality actually holds with equality due to the arithmetic-geometric mean inequality which ensures that $\beta x + (1-\beta)y \geq x^{\beta} y^{1-\beta}$ for $x,y \geq 0$ and $\beta \in [0,1]$. This argument has been introduced in~\cite[Proof of Lemma~3]{FBT15}.}
Putting everything together yields
\begin{align}
\frac{1}{n} D_{\alpha,\mathbb{M}}(\rho_{AB}^{\otimes n}\| \cC^{\sigma_A}_{AB,n})
\overset{\textnormal{\Cshref{eq_var_alpha_step1_ds}}}&{=} \frac{1}{n} \frac{1}{\alpha -1} \log \max_{\sigma_n \in \cC^{\sigma_A}_{AB,n}} Q_{\alpha,\mathbb{M}}(\rho_{AB}^{\otimes n} \| \sigma_n) \\
\overset{\textnormal{\Cshref{eq_var_alpha_step2_ds}}}&{\geq} D_{\alpha,\mathbb{M}}(\rho_{AB}\| \cC^{\sigma_A}_{AB}) + \frac{o(n)}{n} \, .
\end{align}
Taking the limit $n \to \infty$ proves the assertion of the lemma.

For $\alpha \in (1,\infty)$ the above proof technique can be reused with a few minor modifications.
For $n \in \N$ we have
\begin{align}
\frac{1}{n} D_{\alpha,\mathbb{M}}(\rho_{AB}^{\otimes n}\|\cC^{\sigma_A}_{AB,n} )
=\frac{1}{n} \min_{\sigma_n \in \cC^{\sigma_A}_{AB,n}} D_{\alpha,\mathbb{M}}(\rho_{AB}^{\otimes n} \| \sigma_n)
=\frac{1}{n} \frac{1}{\alpha -1} \log \min_{\sigma_n \in \cC^{\sigma_A}_{AB,n}} Q_{\alpha,\mathbb{M}}(\rho_{AB}^{\otimes n} \| \sigma_n) \, . \label{eq_var_alpha_step1_ds_a}
\end{align}
Using the variational formula from~\cite[Lemma~3]{FBT15} we have
\begin{align}
\min_{\sigma_n \in \cC^{\sigma_A}_{AB,n}} Q_{\alpha,\mathbb{M}}(\rho^{\otimes n} \| \sigma_n)
\overset{\textnormal{\Cshref{eq_varFormula_DM_a_2}}}&{=}\min_{\sigma_n \in \cC^{\sigma_A}_{AB,n}} \sup_{\tau_n >0}\big\{  \tr[\rho^{\otimes n} \tau_n^{\frac{\alpha-1}{\alpha}}]^{\alpha} \tr[\sigma_n \tau_n]^{1-\alpha} \big\} \\
&\geq \sup_{\tau_n >0}\big\{ \tr[\rho^{\otimes n} \tau_n^{\frac{\alpha-1}{\alpha}}]^{\alpha}  h_{\cC^{\sigma_A}_{AB,n}}(\tau_n)^{1-\alpha}\big\} \\
&\geq \sup_{\tau >0}\big\{  \tr[\rho^{\otimes n} (\tau^{\frac{\alpha-1}{\alpha}})^{\otimes n}]^{\alpha} h_{\cC^{\sigma_A}_{AB,n}}(\tau^{\otimes n})^{1-\alpha}\big\}\\
\overset{\textnormal{\Cshref{lem_additivity_support_function}}}&{\geq}\big(\sup_{\tau >0}\big\{  \tr[\rho \tau^{\frac{\alpha-1}{\alpha}}]^\alpha  h_{\cC}(\tau)^{1-\alpha}\big\} \big)^n \,\ee^{-o(n)} \\
&=\big(\sup_{\tau >0} \min_{\sigma \in \cC^{\sigma_A}_{AB}}\big\{\tr[\rho \tau^{\frac{\alpha-1}{\alpha}}]^{\alpha}  \tr[\sigma \tau]^{1-\alpha}\big\}\big)^n \,\ee^{-o(n)} \\
\overset{(\ddagger)}&{\geq} \big( \sup_{\tau >0}\min_{\sigma \in \cC^{\sigma_A}_{AB}}  \big\{ \alpha \tr[\rho \tau^{\frac{\alpha-1}{\alpha}}] + (1-\alpha) \tr[\sigma \tau]\big\}\big)^n \,\ee^{-o(n)}   \\
\overset{\textnormal{Sion's minimax~\cite{Sion58}}}&{=}\big( \min_{\sigma \in \cC^{\sigma_A}_{AB}}\sup_{\tau >0}  \big\{ \alpha \tr[\rho \tau^{\frac{\alpha-1}{\alpha}}] + (1-\alpha) \tr[\sigma \tau]\big\} \big)^n \,\ee^{-o(n)}  \\
\overset{\textnormal{\Cshref{eq_varFormula_DM_a}}}&{=} \min_{\sigma \in \cC^{\sigma_A}_{AB}}Q_{\alpha,\mathbb{M}}(\rho \| \sigma)^n \,\ee^{-o(n)} \, , \label{eq_var_alpha_step2_ds_a}
\end{align}
where the Sion's minimax step follows by analogous arguments as above in the proof for $\alpha \in [\frac{1}{2},1)$. The step $(\ddagger)$ follows from Bernoulli's inequality. More precisely for $x,y>0$ and $\alpha>1$ we have
\begin{align}
x^{\alpha} y^{1-\alpha} 
= y \left( 1+ \Big(\frac{x}{y} -1\Big) \right)^{\alpha}
\overset{\textnormal{Bernoulli's inequality}}{\geq} y \left(1+\alpha\Big(\frac{x}{y}-1\Big)\right)
= \alpha x +(1-\alpha)y \, .
\end{align}
Combining everything yields
\begin{align}
\frac{1}{n} D_{\alpha,\mathbb{M}}(\rho_{AB}^{\otimes n}\| \cC^{\sigma_A}_{AB,n})
\overset{\textnormal{\Cshref{eq_var_alpha_step1_ds_a}}}&{=} \frac{1}{n} \frac{1}{\alpha -1} \log \min_{\sigma_n \in \cC^{\sigma_A}_{AB,n}} Q_{\alpha,\mathbb{M}}(\rho_{AB}^{\otimes n} \| \sigma_n) \\
\overset{\textnormal{\Cshref{eq_var_alpha_step2_ds_a}}}&{\geq} D_{\alpha,\mathbb{M}}(\rho_{AB}\| \cC^{\sigma_A}_{AB} ) + \frac{o(n)}{n} \, .
\end{align}
Taking the limit $n \to \infty$ proves the assertion of the lemma.
\end{proof}

\begin{remark} \label{rmk_bounds_on_U_a}
A consequence of~\cref{lem_remove_M_renyi,lem_superadd_renyi} is that we can derive single-letter upper and lower bounds for the regularized quantity $D_{\alpha}^{\infty}(\rho_{AB}\| \cC^{\sigma_A}_{AB})$ with $\alpha \in [\frac{1}{2},1) \cup (1,\infty]$ because
\begin{align}
 D_{\alpha,\mathbb{M}}(\rho_{AB}\| \cC^{\sigma_A}_{AB}) 
 \overset{\textnormal{\Cshref{lem_superadd_renyi}}}{\leq} D^{\infty}_{\alpha,\mathbb{M}}(\rho_{AB}\| \cC^{\sigma_A}_{AB}) 
 \overset{\textnormal{\Cshref{lem_remove_M_renyi}}}{=}D_{\alpha}^{\infty}(\rho_{AB} \| \cC^{\sigma_A}_{AB})
 \leq  D_{\alpha}(\rho_{AB} \| \cC^{\sigma_A}_{AB}) \, .
\end{align}
Note that $ D_{\alpha,\mathbb{M}}(\rho_{AB}\| \cC^{\sigma_A}_{AB})$ as well as $D_{\alpha}(\rho_{AB}\| \cC^{\sigma_A}_{AB})$ are convex optimization problems.
\end{remark}\label{rmk_bounds_U_renyi}

We are now ready to state the main result of this subsection.
\begin{theorem}[Measured Uhlmann's theorem] \label{thm_single_letter_Uhlmann_generalization}
Let $\rho_{AB} \in \St(A \otimes B)$, $\sigma_A \in \St(A)$, and $\alpha \in [\frac{1}{2},1) \cup (1,\infty]$. Then
\begin{align} \label{eq_lemma_important_renyi}
D_{\alpha,\mathbb{M}}(\rho_A \| \sigma_A) \leq D_{\alpha,\mathbb{M}}(\rho_{AB}\|\cC^{\sigma_A}_{AB}) \leq D_{\alpha}(\rho_A \| \sigma_A) \, .
\end{align}
\end{theorem}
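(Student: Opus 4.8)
The plan is to prove the two inequalities in \cref{eq_lemma_important_renyi} separately; the left one is data processing, and the right one is obtained by assembling the preparatory results, with the endpoint $\alpha=\infty$ handled by a short direct argument. For the left inequality, $D_{\alpha,\mathbb{M}}(\rho_A \| \sigma_A) \leq D_{\alpha,\mathbb{M}}(\rho_{AB}\|\cC^{\sigma_A}_{AB})$, I would apply the data-processing inequality for the measured R\'enyi relative entropy to the partial-trace channel $\tr_B$: for every $\sigma_{AB}\in\cC^{\sigma_A}_{AB}$ we have $\tr_B[\rho_{AB}]=\rho_A$ and $\tr_B[\sigma_{AB}]=\sigma_A$, hence $D_{\alpha,\mathbb{M}}(\rho_A\|\sigma_A)\le D_{\alpha,\mathbb{M}}(\rho_{AB}\|\sigma_{AB})$, and minimizing over $\sigma_{AB}\in\cC^{\sigma_A}_{AB}$ via \cref{eq_def_UM_a} gives the claim. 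This DPI holds for all $\alpha\in[\frac{1}{2},\infty]$ and every channel $\cN$, since $\cM\circ\cN$ is again a POVM channel whenever $\cM$ is, so the supremum defining $D_{\alpha,\mathbb{M}}(\cN(\cdot)\|\cN(\cdot))$ ranges over a subfamily of POVM channels.

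For the right inequality, $D_{\alpha,\mathbb{M}}(\rho_{AB}\|\cC^{\sigma_A}_{AB}) \leq D_{\alpha}(\rho_A \| \sigma_A)$, I treat $\alpha\in[\frac{1}{2},1)\cup(1,\infty)$ and $\alpha=\infty$ separately. In the first range the statement is essentially a corollary of the preparatory lemmas: chaining \cref{lem_superadd_renyi}, \cref{lem_remove_M_renyi}, and \cref{thm_tony_new_range},
\begin{align*}
D_{\alpha,\mathbb{M}}(\rho_{AB}\| \cC^{\sigma_A}_{AB})
\overset{\textnormal{\Cshref{lem_superadd_renyi}}}{\leq} D^{\infty}_{\alpha,\mathbb{M}}(\rho_{AB}\| \cC^{\sigma_A}_{AB})
\overset{\textnormal{\Cshref{lem_remove_M_renyi}}}{=} D^{\infty}_{\alpha}(\rho_{AB}\| \cC^{\sigma_A}_{AB})
\overset{\textnormal{\Cshref{thm_tony_new_range}}}{=} D_{\alpha}(\rho_A \| \sigma_A) \, ,
\end{align*}
which is precisely the bound collected in \cref{rmk_bounds_on_U_a} combined with the regularized Uhlmann theorem. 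The remaining endpoint $\alpha=\infty$ is not covered by \cref{lem_superadd_renyi,lem_remove_M_renyi}, so I would argue directly from \cref{eq_Ua_vs_Ua_meas} (an instance of $D_{\alpha,\mathbb{M}}\le D_\alpha$, hence valid also at $\alpha=\infty$) together with the Uhlmann identity \cref{eq_uhlmann_max} for the max-relative entropy:
\begin{align*}
D_{\infty,\mathbb{M}}(\rho_{AB}\| \cC^{\sigma_A}_{AB})
\overset{\textnormal{\Cshref{eq_Ua_vs_Ua_meas}}}{\leq} D_{\infty}(\rho_{AB}\| \cC^{\sigma_A}_{AB})
\overset{\textnormal{\Cshref{eq_uhlmann_max}}}{=} D_{\infty}(\rho_A \| \sigma_A) \, .
\end{align*}

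I do not expect a genuine obstacle here: the bound \cref{eq_lemma_important_renyi} is an assembly of \cref{lem_superadd_renyi,lem_remove_M_renyi,thm_tony_new_range} for $\alpha<\infty$ and of \cref{eq_Ua_vs_Ua_meas,eq_uhlmann_max} for $\alpha=\infty$, plus the DPI for the left inequality. The only delicate point is that the regularization lemmas exclude $\alpha=\infty$, so that endpoint must be routed through the already-known max-relative entropy Uhlmann identity rather than through the regularization. It is also worth stressing why the detour through the regularized quantity is needed off the endpoints: the weaker bound $D_{\alpha,\mathbb{M}}(\rho_{AB}\|\cC^{\sigma_A}_{AB})\le D_\alpha(\rho_{AB}\|\cC^{\sigma_A}_{AB})$ from \cref{eq_Ua_vs_Ua_meas} alone does not suffice, since by \cref{eq_tony_counter_example} the quantity $D_\alpha(\rho_{AB}\|\cC^{\sigma_A}_{AB})$ can strictly exceed $D_\alpha(\rho_A\|\sigma_A)$ for $\alpha\in(\frac{1}{2},\infty)$; it is the measurement, together with the regularization used to prove \cref{lem_superadd_renyi}, that closes this gap.
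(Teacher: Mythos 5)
Your proof is correct and, for the substance of the statement, identical to the paper's: the left inequality is the DPI for the measured R\'enyi relative entropy, and the right inequality for $\alpha\in[\frac{1}{2},1)\cup(1,\infty)$ is exactly the paper's chain through \cref{lem_superadd_renyi}, \cref{lem_remove_M_renyi}, and \cref{thm_tony_new_range}. The only deviation is the endpoint $\alpha=\infty$, which the paper handles by letting $\alpha\to\infty$ in the finite-$\alpha$ bound using the continuity statements of \cref{lem_limits_infty}, whereas you combine \cref{eq_Ua_vs_Ua_meas} with the previously known max-relative entropy identity \cref{eq_uhlmann_max}; this is equally valid and not circular, since \cref{eq_uhlmann_max} is an external input rather than a consequence of this theorem within the paper.
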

\begin{proof}
The first inequality in~\cref{eq_lemma_important_renyi} follows from the DPI for the measured R\'enyi relative entropy~\cite[Lemma~5]{RSB24}. Hence, it remains to justify the second inequality in~\cref{eq_lemma_important_renyi}. To do so, note that for $\alpha \in [\frac{1}{2},1) \cup (1,\infty)$ we have
\begin{align}
D_{\alpha,\mathbb{M}}(\rho_{AB}\| \cC^{\sigma_A}_{AB})
\overset{\textnormal{\Cshref{lem_superadd_renyi}}}{\leq} D^{\infty}_{\alpha,\mathbb{M}}(\rho_{AB} \| \cC^{\sigma_A}_{AB})
\overset{\textnormal{\Cshref{{lem_remove_M_renyi}}}}{=}D^{\infty}_{\alpha}(\rho_{AB}\| \cC^{\sigma_A}_{AB})
\overset{\textnormal{\Cshref{thm_tony_new_range}}}{=}D_{\alpha}(\rho_A \| \sigma_A) \, .
\end{align}
The case $\alpha =\infty$ then follows with the help of~\cref{lem_limits_infty}. 
\end{proof}
Unlike in Uhlmann's theorem (see~\cref{thm_uhlmann}), in the case where $\rho_{AB}$ is pure, the optimizer in the definition of $D_{\alpha,\mathbb{M}}(\rho_{AB}\| \cC^{\sigma_A}_{AB})$ is generally not pure~\cite[Appendix~H.1]{ernest_phd}.\footnote{In~\cite[Appendix~H.1]{ernest_phd} it is shown that the optizer is only pure for $\alpha=\frac{1}{2}$.} 
We refer to~\cref{app_counterexample_pure} for an example illustrating this.

\begin{remark}
\Cref{thm_single_letter_Uhlmann_generalization} implies a few known results:
\begin{enumerate}[(a)]
\item For $\alpha=\frac{1}{2}$ we recover Uhlmann's theorem in the form of~\cref{eq_Uhlmann_our_Not}, because the measured and non-measured min-relative entropies coincide~\cite{FBT15}.
\item For $\alpha=\infty$ we recover~\cref{eq_uhlmann_max}, because the measured and nonmeasured max-relative entropies coincide~\cite[Appendix A]{MO15}.
\item For $\alpha \to 1$ we recover the assertion of~\cref{thm_key} as explained in its proof.
\end{enumerate}
\end{remark}

\begin{remark} \label{rmk_min_max_simplify}
For the min-relative entropy ($\alpha=\frac{1}{2}$) and the max-relative entropy (\smash{$\alpha = \infty$}) some quantities simplify because the measured and non-measured quantities coincide~\cite[Appendix A]{MO15}.  In particular, for any $\rho_{AB} \in \St(A \otimes B)$, $\sigma_A \in \St(A)$ we have
\begin{align}
D^{\infty}_{\min,\mathbb{M}}(\rho_{AB} \| \cC^{\sigma_A}_{AB})
\!=\!D_{\min}^{\infty}(\rho_{AB} \| \cC^{\sigma_A}_{AB}) 
\!\overset{\textnormal{\Cshref{thm_tony_new_range},\ref{thm_single_letter_Uhlmann_generalization}}}{=}\! D_{\min}(\rho_{AB} \| \cC^{\sigma_A}_{AB})
\!=\! D_{\min,\mathbb{M}}(\rho_{AB} \| \cC^{\sigma_A}_{AB})
\end{align}
and
\begin{align}
D^{\infty}_{\max,\mathbb{M}}(\rho_{AB} \| \cC^{\sigma_A}_{AB})
\!=\!D_{\max}^{\infty}(\rho_{AB} \| \cC^{\sigma_A}_{AB}) 
\!\overset{\textnormal{\Cshref{thm_tony_new_range},\ref{thm_single_letter_Uhlmann_generalization}}}{=}\!  D_{\max}(\rho_{AB} \| \cC^{\sigma_A}_{AB})
\!=\! D_{\max,\mathbb{M}}(\rho_{AB} \| \cC^{\sigma_A}_{AB})
 \, .
\end{align}
\end{remark}



\section{Uhlmann's theorem for relative entropy} \label{sec_main}
In this section, we extend the results from~\cref{sec_main_renyi} to $\alpha =1$. 
\subsection{Regularized Uhlmann's theorem for relative entropy}
\begin{theorem}[Regularized Uhlmann's theorem] \label{thm_uhlmann_rel_ent}
Let $\rho_{AB} \in \St(A \otimes B)$ and $\sigma_A \in \St(A)$. Then
\begin{align} \label{eq_thm_main}
D(\rho_{A}\|\sigma_{A})=   D^{\infty}(\rho_{AB}\| \cC^{\sigma_A}_{AB}) \, .
\end{align}
\end{theorem}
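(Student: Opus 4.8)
The plan is to prove the two inequalities separately. The direction $D(\rho_A\|\sigma_A) \le D^{\infty}(\rho_{AB}\|\cC^{\sigma_A}_{AB})$ is the easy one: by additivity of the relative entropy under tensor products, $D(\rho_A\|\sigma_A) = \frac1n D(\rho_A^{\otimes n}\|\sigma_A^{\otimes n})$, and since any $\sigma_n \in \cC^{\sigma_A}_{AB,n}$ has marginal $\sigma_A^{\otimes n}$ on $A_1^n$, the DPI for the relative entropy under the partial trace $\tr_{B_1^n}$ gives $D(\rho_A^{\otimes n}\|\sigma_A^{\otimes n}) \le D(\rho_{AB}^{\otimes n}\|\sigma_n)$; minimizing over $\sigma_n$, dividing by $n$ and letting $n\to\infty$ yields the claim. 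This is exactly the pattern already used for $\alpha<1$ in the proof of \cref{thm_tony_new_range}.

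For the reverse direction I would mimic the pinching argument from the proof of \cref{thm_tony_new_range}, which is written so as to extend to $\alpha=1$. Fix $n$ and define $\rho'_{A_1^n B_1^n} := \cP_{\sigma_A^{\otimes n}\otimes \id_{B_1^n}}(\rho_{AB}^{\otimes n})$, $\bar\rho_{A_1^n B_1^n} := \cP_{\rho'_{A_1^n}\otimes \id_{B_1^n}}(\rho'_{A_1^n B_1^n})$, the CPTP map $\cE(X) := \bar\rho_{A_1^n B_1^n}^{1/2}\bar\rho_{A_1^n}^{-1/2} X \bar\rho_{A_1^n}^{-1/2}\bar\rho_{A_1^n B_1^n}^{1/2}$, and $\bar\sigma_{A_1^n B_1^n} := \cE(\sigma_A^{\otimes n})$. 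The feasibility computation $\bar\sigma_{A_1^n} = \sigma_A^{\otimes n}$ and the commutation facts (i)--(vi) carry over verbatim since they do not depend on $\alpha$. The analogue of step \cref{eq_alpha_step_1_ds} goes through word-for-word with $D_\alpha$ replaced by $D$: by DPI under $\cE$, \eqref{item_1}, property \eqref{it_pinch_partial_trace}, $\cP_{\sigma_A^{\otimes n}}$-invariance of $\sigma_A^{\otimes n}$, and DPI under the pinching channel, one gets $\frac1n D(\bar\rho_{A_1^n B_1^n}\|\bar\sigma_{A_1^n B_1^n}) \le D(\rho_A\|\sigma_A)$. It remains to compare $D(\rho_{AB}^{\otimes n}\|\bar\sigma_{A_1^n B_1^n})$ with $D(\bar\rho_{A_1^n B_1^n}\|\bar\sigma_{A_1^n B_1^n})$ up to $o(n)$.

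For this last comparison, the $\alpha$-proof used operator anti-monotonicity of $t\mapsto t^{\alpha-1}$; for $\alpha=1$ the natural substitute is operator monotonicity of $t \mapsto \log t$ together with the pinching inequality $\rho_{AB}^{\otimes n} \le \poly(n)\,\bar\rho_{A_1^n B_1^n}$ from \cref{eq_pinching_in_pf}. Concretely, $D(\rho_{AB}^{\otimes n}\|\bar\sigma) = \tr[\rho_{AB}^{\otimes n}(\log\rho_{AB}^{\otimes n} - \log\bar\sigma)]$; since $\rho_{AB}^{\otimes n} \le \poly(n)\bar\rho$ one bounds $\tr[\rho_{AB}^{\otimes n}\log\rho_{AB}^{\otimes n}] \le \tr[\rho_{AB}^{\otimes n}\log(\poly(n)\bar\rho)] = \log\poly(n) + \tr[\rho_{AB}^{\otimes n}\log\bar\rho]$ using operator monotonicity of $\log$, and then the pinching invariance of $\log\bar\rho$ and $\log\bar\sigma$ under $\cP_{\sigma_A^{\otimes n}\otimes\id}\circ\cP_{\rho'_{A_1^n}\otimes\id}$ (they commute with both pinchings by facts (i)--(vi)) lets us replace $\rho_{AB}^{\otimes n}$ by $\bar\rho$ inside the trace, i.e.\ $\tr[\rho_{AB}^{\otimes n}(\log\bar\rho - \log\bar\sigma)] = \tr[\bar\rho(\log\bar\rho - \log\bar\sigma)]$. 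This gives $D(\rho_{AB}^{\otimes n}\|\bar\sigma) \le D(\bar\rho\|\bar\sigma) + o(n)$, the analogue of \cref{eq_alpha_step_2_ds}. Chaining the three displays as in the $\alpha<1$ proof and sending $n\to\infty$ closes the argument. The main obstacle is making the substitution of $\rho_{AB}^{\otimes n}$ by $\bar\rho$ rigorous: one must check carefully that $\log\bar\rho$ and $\log\bar\sigma$ lie in the commutant of both pinchings (so that, after inserting the two pinching channels via their CPTP/self-adjointness property as in the $\alpha$-proof, the operators $\log\bar\rho-\log\bar\sigma$ are untouched and only $\rho_{AB}^{\otimes n}$ gets pinched to $\rho'$ and then to $\bar\rho$), and handle the support/full-rank technicalities exactly as in the footnote on the pseudo-inverse. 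Alternatively, and perhaps more cleanly, one can invoke \cref{thm_tony_new_range} for $\alpha\in(\tfrac12,1)$ together with the limit $D_\alpha \to D$ as $\alpha\to1$ (\cref{lem_limits}) and a uniformity/monotonicity argument in $\alpha$ to pass the regularized Uhlmann identity to $\alpha=1$ — but justifying the interchange of the $\alpha\to1$ and $n\to\infty$ limits is itself delicate, so I expect the direct pinching proof to be the more robust route.
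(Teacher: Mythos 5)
Your proposal is correct, but it takes a genuinely different route from the paper. The paper's proof of \cref{thm_uhlmann_rel_ent} is a three-line limiting argument: it invokes \cref{thm_tony_new_range} for $\alpha>1$ (itself imported from earlier work) and passes to $\alpha \searrow 1$ on both sides using \cref{lem_limits}~(i) and~(v). The limit-interchange worry you raise at the end is real for the route you sketch ($\alpha \in (\tfrac12,1)$, i.e.\ approaching $1$ from below, which would require a $\sup$/$\inf$ interchange), but the paper sidesteps it by approaching from \emph{above}: for $\alpha>1$, monotonicity in $\alpha$ turns $\lim_{\alpha\searrow 1}$ into $\inf_{\alpha>1}$, Fekete's lemma turns the regularization into $\inf_n$, and two infima (plus the minimum over extensions) commute for free — no uniformity argument is needed. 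Your main route instead redoes the pinching argument directly at $\alpha=1$, and it does go through: step one (the analogue of \cref{eq_alpha_step_1_ds}) is purely DPI-based and carries over verbatim; for step two, the identity $\tr[\rho_{AB}^{\otimes n}(\log\bar\rho-\log\bar\sigma)]=\tr[\bar\rho(\log\bar\rho-\log\bar\sigma)]$ is justified exactly as you say, since $\log\bar\rho$ and $\log\bar\sigma$ commute with both pinching Hamiltonians by facts \eqref{item_1}--\eqref{item_6} (hence are fixed points of both pinching maps), and the pinching maps are self-adjoint. What your approach buys is a self-contained proof at $\alpha=1$ that does not rely on the $\alpha>1$ input from~\cite{GEAT_24} nor on any limit interchange; what the paper's approach buys is brevity and reuse of \cref{thm_tony_new_range}.

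Two small technical remarks on your step two. First, rather than applying operator monotonicity of $\log$ to $\rho_{AB}^{\otimes n}\leq \poly(n)\,\bar\rho_{A_1^nB_1^n}$ (which is awkward when the operators are singular), it is cleaner to note that the desired bound $\tr[\rho_{AB}^{\otimes n}\log\rho_{AB}^{\otimes n}]\leq \log\poly(n)+\tr[\rho_{AB}^{\otimes n}\log\bar\rho_{A_1^nB_1^n}]$ is precisely $D(\rho_{AB}^{\otimes n}\|\bar\rho_{A_1^nB_1^n})\leq \log\poly(n)$, which follows from $D\leq D_{\max}$ together with \cref{eq_pinching_in_pf}. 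Second, the support issues are benign: if $D(\rho_A\|\sigma_A)=\infty$ there is nothing to prove (the easy direction gives $D^{\infty}(\rho_{AB}\|\cC^{\sigma_A}_{AB})=\infty$ as well), and if $D(\rho_A\|\sigma_A)<\infty$ then $\sigma_A^{\otimes n}\geq 2^{-nD_{\max}(\rho_A\|\sigma_A)}\bar\rho_{A_1^n}$ after pinching, whence $\bar\sigma_{A_1^nB_1^n}\geq 2^{-nD_{\max}(\rho_A\|\sigma_A)}\bar\rho_{A_1^nB_1^n}$ and the supports are nested as required, so all the trace manipulations are legitimate.
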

\begin{proof}
We have
\begin{align}
D(\rho_A \| \sigma_A) 
\overset{\textnormal{\Cshref{lem_limits}}}&{=} \lim_{\alpha \searrow 1} D_{\alpha}(\rho_{A}\|\sigma_{A})
\overset{\textnormal{\Cshref{thm_tony_new_range}}}{=} \lim_{\alpha \searrow 1} D_{\alpha}^{\infty}(\rho_{AB}\| \cC^{\sigma_A}_{AB}) 
\overset{\textnormal{\Cshref{lem_limits}}}{=} D^{\infty}(\rho_{AB}\| \cC^{\sigma_A}_{AB})  \, .
\end{align}
\end{proof}

\begin{remark}\label{rmk_explicit_optimizer_regul}
Let $\rho_{AB} \in \St(A \otimes B)$ and $\sigma_A \in \St(A)$. Then, the family\footnotemark
\begin{align} \label{eq_explicit_optimizer_for_n}
\sigma_{A_1^n B_1^n}:=\int_{-\infty}^{\infty} \beta_0(t) (\sigma_{A}^{\frac{1+\ci t}{2}})^{\otimes n} (\rho_{A}^{-\frac{1+\ci t}{2}})^{\otimes n}   \rho_{AB}^{\otimes n} (\rho_{A}^{-\frac{1-\ci t}{2}})^{\otimes n} (\sigma_{A}^{\frac{1-\ci t}{2}})^{\otimes n}  \di t \, ,
\end{align}
of extensions of $\sigma_{A}^{\otimes n}$ for $n\in\N$, where $\beta_0(t):= \frac{\pi}{2}(\cosh(\pi t) +1)^{-1} $ is a probability distribution on $\R$, satisfies
\begin{align}
D(\rho_{A}\|\sigma_{A})
= \lim_{n \to \infty} \frac{1}{n}D(\rho_{AB}^{\otimes n} \| \sigma_{A_1^n B_1^n}) \, .
\end{align}
\end{remark}
\footnotetext{Note that the inverse is only relevant on the support of $\rho_A^{\otimes n}$.}

\begin{proof}
Let $\sigma_{A_1^n B_1^n} \in \cC^{\sigma_A}_{AB,n}$ denote the states defined in~\cref{eq_explicit_optimizer_for_n}. 
One direction follows by noting that
\begin{align}
D(\rho_{A}\|\sigma_{A})
\overset{\textnormal{\Cshref{eq_remove_meas}}}&{=} \lim_{n \to \infty} \frac{1}{n} \MD(\rho_{A}^{\otimes n}\|\sigma_{A}^{\otimes n}) \\
\overset{\textnormal{DPI}}&{\leq}  \lim_{n \to \infty} \frac{1}{n} \MD(\rho_{AB}^{\otimes n} \| \sigma_{A_1^n B_1^n}) \\
\overset{\textnormal{DPI}}&{\leq}  \lim_{n \to \infty} \frac{1}{n} D(\rho_{AB}^{\otimes n} \| \sigma_{A_1^n B_1^n}) \, .
\end{align}
The other direction is correct because the relative entropy is additive under the tensor product and hence
\begin{align}
D(\rho_{A}\|\sigma_{A})
&=\lim_{n \to \infty} \frac{1}{n} D(\rho_{A}^{\otimes n}\|\sigma_{A}^{\otimes n}) \\
\overset{\textnormal{\Cshref{rmk_explicit_extension}}}&{\geq}  \lim_{n \to \infty} \frac{1}{n} \MD(\rho_{AB}^{\otimes n} \| \sigma_{A_1^n B_1^n})\\
\overset{\textnormal{\Cshref{fact_asymptotic_DM_perminv}}}&{=} \lim_{n \to \infty} \frac{1}{n} D(\rho_{AB}^{\otimes n} \| \sigma_{A_1^n B_1^n}) \, ,
\end{align}
where the final step uses that $ \sigma_{A_1^n B_1^n}$ is permutation-invariant for all $n\in\N$.
\end{proof}


\begin{remark}\label{rmk_large}
\Cref{thm_uhlmann_rel_ent} can be viewed as a novel formula for $D(\rho_{A} \| \sigma_{A})$. More precisely, for any two density matrices $\rho_{A}, \sigma_A \in \St(A)$, we can consider an arbitrary purification $\ket{\psi}_{AB}$ of $\rho_A$. Then we have
\begin{align}
D(\rho_{A}\|\sigma_{A})
\overset{\textnormal{\Cshref{thm_uhlmann_rel_ent}}}&{=} \lim_{n \to \infty} \frac{1}{n} \min_{\sigma_{A_1^n B_1^n}\in \cC^{\sigma_A}_{AB,n}} D(\proj{\psi}^{\otimes n} \| \sigma_{A_1^n B_1^n}) \\
\overset{\textnormal{\Cshref{eq_def_D}}}&{=} - \lim_{n \to \infty} \frac{1}{n} \max_{\sigma_{A_1^n B_1^n}\in\cC^{\sigma_A}_{AB,n}} \bra{\psi}^{\otimes n} \log(\sigma_{A_1^n B_1^n}) \ket{\psi}^{\otimes n} \, .
\end{align}
The optimizer can be assumed to be permutation-invariant in $n$. More precisely, as explained in~\cref{rmk_explicit_optimizer_regul}, the optimizer has the explicit form given by~\cref{eq_explicit_optimizer_for_n}.
\end{remark}

\begin{remark}
\Cref{thm_tony_new_range,thm_uhlmann_rel_ent} together with~\cref{eq_tony_counter_example} show that there exist $\rho_{AB} \in \St(A \otimes B)$ and $\sigma_A \in \St(A)$ such that for all $\alpha \in (\frac{1}{2},\infty)$,
\begin{align} \label{eq_regul_necessary}
D_{\alpha}^{\infty}(\rho_{AB}\| \cC^{\sigma_A}_{AB}) < D_{\alpha}(\rho_{AB}\| \cC^{\sigma_A}_{AB}) \, .
\end{align}
This proves that $D_{\alpha}(\rho_{AB}\| \cC^{\sigma_A}_{AB})$ is not additive under the tensor product. However, note that~\cref{eq_regul_necessary} holds with equality for  $\alpha \in \{\frac{1}{2},\infty \}$ as shown in~\cref{rmk_min_max_simplify}.\footnotemark  
\end{remark}
\footnotetext{For $\alpha \in \{\frac{1}{2},\infty \}$, Fekete's subadditivity lemma~\cite{fekete23} actually implies that
\begin{align}
 \frac{1}{n} D_{\alpha}(\rho_{AB}^{\otimes n}\| \cC^{\sigma_A}_{AB,n}) = D_{\alpha}(\rho_{AB}\| \cC^{\sigma_A}_{AB})  \quad \forall n \in \N \, .
\end{align}}

\subsection{Measured Uhlmann's theorem for relative entropy}

\begin{theorem}[Measured Uhlmann's theorem] \label{thm_key}
Let $\rho_{AB} \in \St(A \otimes B)$ and $\sigma_A \in \St(A)$. Then
\begin{align} \label{eq_lemma_important}
D_{\mathbb{M}}(\rho_A \| \sigma_A) \leq D_{\mathbb{M}}(\rho_{AB}\| \cC^{\sigma_A}_{AB}) \leq D(\rho_A \| \sigma_A) \, .
\end{align}
\end{theorem}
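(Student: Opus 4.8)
The plan is to mimic the structure already used for the R\'enyi case in \cref{thm_single_letter_Uhlmann_generalization}, passing to the limit $\alpha\searrow 1$ wherever possible, but being careful because the key auxiliary lemmas (\cref{lem_remove_M_renyi}, \cref{lem_superadd_renyi}) are stated only for $\alpha\in[\tfrac12,1)\cup(1,\infty)$. So the first inequality in~\eqref{eq_lemma_important}, namely $D_{\mathbb{M}}(\rho_A\|\sigma_A)\leq D_{\mathbb{M}}(\rho_{AB}\|\cC^{\sigma_A}_{AB})$, is immediate from the DPI for the measured relative entropy applied to the partial trace $\tr_B$: for any $\sigma_{AB}\in\cC^{\sigma_A}_{AB}$ we have $D_{\mathbb{M}}(\rho_A\|\sigma_A)=D_{\mathbb{M}}(\tr_B\rho_{AB}\|\tr_B\sigma_{AB})\leq D_{\mathbb{M}}(\rho_{AB}\|\sigma_{AB})$, and then minimize over $\sigma_{AB}$.

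The substance is the second inequality $D_{\mathbb{M}}(\rho_{AB}\|\cC^{\sigma_A}_{AB})\leq D(\rho_A\|\sigma_A)$, and here there are two natural routes. The cleanest is to exhibit an explicit feasible extension: take $\sigma_{AB}$ obtained from the $n=1$ specialization of~\eqref{eq_explicit_optimizer_for_n}, i.e.\ $\sigma_{AB}:=\int_{-\infty}^{\infty}\beta_0(t)\,\sigma_A^{\frac{1+\ci t}{2}}\rho_A^{-\frac{1+\ci t}{2}}\rho_{AB}\rho_A^{-\frac{1-\ci t}{2}}\sigma_A^{\frac{1-\ci t}{2}}\,\di t$ (this is exactly what \cref{rmk_explicit_extension} should supply, and it is referenced in the proof of \cref{thm_uhlmann_rel_ent}); one checks $\tr_B[\sigma_{AB}]=\sigma_A$ using the Araki--Masuda / rotated-Petz-map structure, and then bounds $D_{\mathbb{M}}(\rho_{AB}\|\sigma_{AB})\leq D(\rho_A\|\sigma_A)$ via the variational formula~\eqref{eq_varFormula_DM}: for the measured relative entropy one only needs a single good test operator $\tau$, and plugging in a suitable $\tau$ built from $\log\sigma_A-\log\rho_A$ (lifted trivially to $B$) together with the Golden--Thompson-type estimate recovers $D(\rho_A\|\sigma_A)$ on the nose. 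Alternatively, and more in line with the rest of the paper, one can take the limit: for $\alpha\in(\tfrac12,1)$, \cref{thm_single_letter_Uhlmann_generalization} gives $D_{\alpha,\mathbb{M}}(\rho_{AB}\|\cC^{\sigma_A}_{AB})\leq D_\alpha(\rho_A\|\sigma_A)$, and since $D_{\alpha,\mathbb{M}}$ is monotone nondecreasing in $\alpha$ with $\lim_{\alpha\searrow 1}D_{\alpha,\mathbb{M}}(\rho\|\sigma)=D_{\mathbb{M}}(\rho\|\sigma)$ (the analogue of \cref{lem_limits} for measured quantities) and $\lim_{\alpha\searrow 1}D_\alpha(\rho_A\|\sigma_A)=D(\rho_A\|\sigma_A)$, one gets $D_{\mathbb{M}}(\rho_{AB}\|\cC^{\sigma_A}_{AB})=\inf_{\sigma_{AB}}D_{\mathbb{M}}(\rho_{AB}\|\sigma_{AB})\leq\inf_{\sigma_{AB}}\sup_{\alpha>1}\dots$; one has to be slightly careful swapping $\lim_\alpha$ with the infimum over $\sigma_{AB}$, but since the set $\cC^{\sigma_A}_{AB}$ is compact and $D_{\alpha,\mathbb{M}}$ is lower semicontinuous and monotone in $\alpha$, a Dini-type argument or a direct estimate at a fixed near-optimal extension closes this.

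Concretely I would structure the write-up as follows. Step 1: dispatch the left inequality by DPI of $D_{\mathbb{M}}$ under partial trace, exactly as in the R\'enyi proof. Step 2: state and use \cref{rmk_explicit_extension} to produce the explicit $\sigma_{AB}\in\cC^{\sigma_A}_{AB}$ above; if \cref{rmk_explicit_extension} already packages the inequality $D_{\mathbb{M}}(\rho_{AB}\|\sigma_{AB})\leq D(\rho_A\|\sigma_A)$, then this step is a one-liner and the theorem follows immediately, which is almost certainly the intended proof given that the excerpt keeps forward-referencing \cref{rmk_explicit_extension}. Step 3 (only if one prefers the limiting argument): invoke \cref{thm_single_letter_Uhlmann_generalization} for $\alpha\in(\tfrac12,1)$, then take $\alpha\searrow 1$ using monotonicity of $D_{\alpha,\mathbb{M}}$ in $\alpha$ and the continuity statements, justifying the interchange of limit and infimum via compactness of $\cC^{\sigma_A}_{AB}$.

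The main obstacle, to the extent there is one, is verifying the two properties of the explicit extension~\eqref{eq_explicit_optimizer_for_n} at $n=1$: (a) that it lies in $\cC^{\sigma_A}_{AB}$, which requires computing $\tr_B$ and recognizing that the integral kernel $t\mapsto\beta_0(t)\,\sigma_A^{\frac{1+\ci t}{2}}\rho_A^{-\frac{1+\ci t}{2}}(\cdot)\rho_A^{-\frac{1-\ci t}{2}}\sigma_A^{\frac{1-\ci t}{2}}$ is a channel on the $A$-system that fixes nothing naively but maps $\rho_A\mapsto\sigma_A$ after integration against $\beta_0$ (this is the standard rotated Petz recovery / Hirschman-type identity $\int\beta_0(t)\,\sigma_A^{\frac{1+\ci t}{2}}\rho_A^{-\frac{\ci t}{2}}(\cdot)\rho_A^{\frac{\ci t}{2}}\sigma_A^{\frac{1-\ci t}{2}}\,\di t$ evaluated at $(\cdot)=\rho_A$); and (b) the bound $D_{\mathbb{M}}(\rho_{AB}\|\sigma_{AB})\leq D(\rho_A\|\sigma_A)$, which uses joint convexity/operator-convexity inputs to push the $B$-system through the variational formula~\eqref{eq_varFormula_DM}. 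Both (a) and (b) are presumably exactly the content of \cref{rmk_explicit_extension}, so in the final text I would simply cite it; the honest statement of the proof of \cref{thm_key} is therefore short, with all the real work delegated to that remark (and ultimately to the $\alpha\to1$ continuity lemmas and \cref{thm_single_letter_Uhlmann_generalization}).
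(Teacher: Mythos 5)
Your proposal is correct, and both of your routes are precisely the ones the paper uses: the limiting argument via \cref{thm_single_letter_Uhlmann_generalization} combined with the continuity statements of \cref{lem_limits} is the paper's main proof (with the interchange of $\lim_\alpha$ and $\min_{\sigma_{AB}}$ delegated to \cref{lem_limits}, itself borrowed from the literature, exactly as you anticipate), while the explicit rotated-Petz extension bounded through the multivariate Golden--Thompson inequality is the paper's alternative proof in \cref{sec_alternative_pf_thm_key}, packaged as \cref{rmk_explicit_extension}. One small correction: since \cref{thm_single_letter_Uhlmann_generalization} is invoked for $\alpha\in[\tfrac12,1)$, the limit must be taken from below, $\alpha\nearrow 1$, which is how items (ii) and (iv) of \cref{lem_limits} are stated --- not $\alpha\searrow 1$ as written in places in your sketch.
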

\begin{proof}
The first inequality in~\cref{eq_lemma_important} follows from the DPI for the measured relative entropy~\cite[Proposition~2.35]{Sutter_book}. To see the second inequality, note that
\begin{align}
D_{\mathbb{M}}(\rho_{AB} \| \cC^{\sigma_A}_{AB})
\overset{\textnormal{\Cshref{lem_limits}}}{=} \lim_{\alpha \nearrow 1} D_{\alpha,\mathbb{M}}(\rho_{AB}\| \cC^{\sigma_A}_{AB})
\overset{\textnormal{\Cshref{thm_single_letter_Uhlmann_generalization}}}{\leq} \lim_{\alpha \nearrow 1} D_{\alpha}(\rho_{A}\| \sigma_A)
\overset{\textnormal{\Cshref{lem_limits}}}{=}  D(\rho_{A}\| \sigma_A) \, .
\end{align}
\end{proof}

\begin{remark} \label{rmk_explicit_extension}
An alternative proof for the assertion of~\cref{thm_key} based on the multivariate Golden-Thompson inequality~\cite{SBT16} can be found in~\cref{sec_alternative_pf_thm_key}. The alternative proof gives an explicit construction of an extension
\begin{align} \label{eq_extension_QMC}
\bar \sigma_{AB}:=\int_{-\infty}^{\infty} \beta_0(t) \sigma_{A}^{\frac{1+\ci t}{2}} \rho_{A}^{-\frac{1+\ci t}{2}}   \rho_{AB} \rho_{A}^{-\frac{1-\ci t}{2}} \sigma_{A}^{\frac{1-\ci t}{2}}  \di t 
\end{align}
of $\sigma_A$ for a probability distribution $\beta_0(t):= \frac{\pi}{2}(\cosh(\pi t) +1)^{-1} $ on $\R$ that satisfies
\begin{align}
D_{\mathbb{M}}(\rho_A \| \sigma_A) \leq \MD(\rho_{AB}\|\bar \sigma_{AB}) \leq D(\rho_A \| \sigma_A) \, .
\end{align}
\end{remark}

The extension given in~\cref{eq_extension_QMC} as well as the alternative proof from~\cref{sec_alternative_pf_thm_key} has similarities with the recovery maps for approximate quantum Markov chains~\cite{FR14,BHOS14,SFR15,wilde15,SBT16,Sutter_book}.


\subsection{Additional properties of the relative entropy} \label{sec_justification_rmk}
 Next, we generalize a few lemmas from~\cref{sec_main_renyi} to the case $\alpha=1$. These results may be of independent interest.
\begin{lemma} \label{lem_remove_M}
Let $\rho_{AB} \in \St(A \otimes B)$ and $\sigma_A \in \St(A)$. Then
\begin{align} \label{eq_lemma_remove_M}
D^{\infty}_{\mathbb{M}}(\rho_{AB}\| \cC^{\sigma_A}_{AB}) = D^{\infty}(\rho_{AB}\| \cC^{\sigma_A}_{AB})\, .
\end{align}
\end{lemma}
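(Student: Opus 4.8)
The plan is to prove this as the $\alpha=1$ counterpart of \cref{lem_remove_M_renyi}, by bracketing the sequence $a_n := \frac{1}{n} D_{\mathbb{M}}(\rho_{AB}^{\otimes n}\| \cC^{\sigma_A}_{AB,n})$ between two convergent sequences that share a common limit; this simultaneously shows that $\lim_{n\to\infty}a_n$ exists and identifies it with $D^{\infty}(\rho_{AB}\| \cC^{\sigma_A}_{AB})$. The whole argument rests on two one-sided data-processing bounds together with \cref{thm_uhlmann_rel_ent}; the degenerate case $\supp(\rho_A)\not\subseteq\supp(\sigma_A)$, where both sides equal $+\infty$, is handled separately and trivially.

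For the upper bound I would use $D_{\mathbb{M}}(\rho_{AB}^{\otimes n}\|\sigma_n)\le D(\rho_{AB}^{\otimes n}\|\sigma_n)$ for every $\sigma_n\in\cC^{\sigma_A}_{AB,n}$, which is \cref{eq_euality_DM_vs_D_ONE}; minimizing over $\sigma_n$ gives $a_n\le \frac{1}{n} D(\rho_{AB}^{\otimes n}\| \cC^{\sigma_A}_{AB,n})$, and since the right-hand side converges (Fekete subadditivity~\cite{fekete23}), $\limsup_{n\to\infty} a_n\le D^{\infty}(\rho_{AB}\| \cC^{\sigma_A}_{AB})$. For the lower bound I would apply the DPI for the measured relative entropy under the partial trace $\tr_{B_1^n}$~\cite[Proposition~2.35]{Sutter_book}: for any $\sigma_n\in\cC^{\sigma_A}_{AB,n}$ we have $D_{\mathbb{M}}(\rho_A^{\otimes n}\|\sigma_A^{\otimes n})\le D_{\mathbb{M}}(\rho_{AB}^{\otimes n}\|\sigma_n)$, hence, minimizing over $\sigma_n$, $\frac{1}{n} D_{\mathbb{M}}(\rho_A^{\otimes n}\|\sigma_A^{\otimes n})\le a_n$; by \cref{eq_remove_meas} the left side tends to $D(\rho_A\|\sigma_A)$, so $\liminf_{n\to\infty} a_n\ge D(\rho_A\|\sigma_A)$. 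Since \cref{thm_uhlmann_rel_ent} gives $D(\rho_A\|\sigma_A)=D^{\infty}(\rho_{AB}\| \cC^{\sigma_A}_{AB})$, we obtain $\limsup_n a_n\le D^{\infty}(\rho_{AB}\| \cC^{\sigma_A}_{AB})=D(\rho_A\|\sigma_A)\le\liminf_n a_n$, so $\lim_n a_n$ exists and equals $D^{\infty}(\rho_{AB}\| \cC^{\sigma_A}_{AB})$, which is the claimed identity.

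I do not expect a serious obstacle; the key point is simply to notice that, once \cref{thm_uhlmann_rel_ent} is available, the two single-letter data-processing inequalities already sandwich the regularized measured quantity, and the only care needed is to run everything at the level of $\limsup$ and $\liminf$ rather than presupposing that the limit defining $D^{\infty}_{\mathbb{M}}(\rho_{AB}\| \cC^{\sigma_A}_{AB})$ exists (establishing that existence being part of the lemma's purpose). As an alternative, paralleling the proof of \cref{lem_remove_M_renyi}, one could instead argue that the optimizer of $D_{\mathbb{M}}(\rho_{AB}^{\otimes n}\| \cC^{\sigma_A}_{AB,n})$ may be taken permutation-invariant (using joint convexity of $D_{\mathbb{M}}$ and unitary invariance) and then invoke \cref{fact_asymptotic_DM_perminv} with $\alpha=1$ to trade $D_{\mathbb{M}}$ for $D$ up to an $o(n)$ term; this reproduces the conclusion but is longer.
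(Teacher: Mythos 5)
Your argument is correct, and your main route differs genuinely from the paper's on the lower-bound side. The paper proves $\liminf_n \frac{1}{n}D_{\mathbb{M}}(\rho_{AB}^{\otimes n}\|\cC^{\sigma_A}_{AB,n}) \ge D^{\infty}(\rho_{AB}\|\cC^{\sigma_A}_{AB})$ intrinsically: it symmetrizes the optimizer (unitary invariance plus convexity of $D_{\mathbb{M}}$, and the fact that twirling preserves membership in $\cC^{\sigma_A}_{AB,n}$) and then invokes \cref{fact_asymptotic_DM_perminv} to trade $D_{\mathbb{M}}$ for $D$ at the cost of $o(n)$ — exactly the alternative you sketch at the end. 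You instead push the DPI for $D_{\mathbb{M}}$ through the partial trace to get $\frac{1}{n}D_{\mathbb{M}}(\rho_A^{\otimes n}\|\sigma_A^{\otimes n}) \le a_n$, use \cref{eq_remove_meas} to identify the left-hand limit as $D(\rho_A\|\sigma_A)$, and close the sandwich with \cref{thm_uhlmann_rel_ent}. This is non-circular: \cref{thm_uhlmann_rel_ent} rests only on \cref{thm_tony_new_range} and \cref{lem_limits}, neither of which uses the present lemma, and it is stated before it in the paper. Your handling of the limit via $\limsup$/$\liminf$ is the right level of care, since existence of the limit in \cref{eq_regul_M_1} is part of what the lemma is meant to certify. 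The trade-offs: your route is shorter once \cref{thm_uhlmann_rel_ent} is in hand, but it imports the full R\'enyi machinery (pinching, de Finetti, continuity in $\alpha$) through that theorem, whereas the paper's proof is self-contained modulo \cref{fact_asymptotic_DM_perminv} and yields the slightly stronger finite-$n$ comparison $\frac{1}{n}D_{\mathbb{M}}(\rho_{AB}^{\otimes n}\|\cC^{\sigma_A}_{AB,n}) \ge \frac{1}{n}D(\rho_{AB}^{\otimes n}\|\cC^{\sigma_A}_{AB,n}) + \frac{o(n)}{n}$, not just an identity of limits. Your upper bound via \cref{eq_euality_DM_vs_D_ONE} and Fekete is identical to the paper's.
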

\begin{proof} 
The DPI implies that for any $n \in \N$
 \begin{align}
  \frac{1}{n} D_{\mathbb{M}}(\rho_{AB}^{\otimes n} \| \cC^{\sigma_A}_{AB,n}) \overset{\textnormal{\Cshref{eq_Ua_vs_Ua_meas}}}{\leq}   \frac{1}{n} D(\rho_{AB}^{\otimes n} \| \cC^{\sigma_A}_{AB,n})  \, .
 \end{align}
 Taking the limit $n \to \infty$ yields
 \begin{align}
 D^{\infty}_{\mathbb{M}}(\rho_{AB}\| \cC^{\sigma_A}_{AB}) \leq D^{\infty}(\rho_{AB}\| \cC^{\sigma_A}_{AB})\, .
 \end{align}
 
 Hence, it remains to prove the other direction.
 The unitary invariance and convexity of the measured relative entropy~\cite[Proposition 2.35]{Sutter_book} implies that for any $\bar \sigma_{A_1^n B_1^n} \in \cC^{\sigma_A}_{AB,n}$ we have
 \begin{align}
 \MD(\rho_{AB}^{\otimes n} \| \bar \sigma_{A_1^n B_1^n}) 
 \overset{\textnormal{unitary invariance}}&{=} \frac{1}{n!} \sum_{\pi \in \cS_n}  \MD\big(\pi \rho_{AB}^{\otimes n} \pi^\dagger \|  \pi \bar \sigma_{A_1^n B_1^n} \pi^\dagger \big) \\
 &=\frac{1}{n!} \sum_{\pi \in \cS_n}  \MD\big(\rho^{\otimes n} \|  \pi \bar \sigma_{A_1^n B_1^n} \pi^\dagger \big) \\
 \overset{\textnormal{convexity}}&{\geq}  \MD \Big( \rho^{\otimes n} \Big\|   \frac{1}{n!} \sum_{\pi \in \cS_n}   \pi \bar \sigma_{A_1^n B_1^n} \pi^\dagger  \Big) \, .  \label{eq_opt_PI} 
 \end{align}
 Furthermore, since 
 \begin{align}
  \tr_{B_1^n}\Big[\frac{1}{n!} \sum_{\pi \in \cS_n}   \pi \bar \sigma_{A_1^n B_1^n} \pi^\dagger\Big]
  \overset{\textnormal{\Cshref{eq_valid_extentision}}}{=}\sigma_A^{\otimes n} \, ,
 \end{align}
 it follows that
 \begin{align} \label{eq_optimizer_Er}
  \hat \sigma_{A_1^n B_1^n} \in \argmin_{\bar \sigma_{A_1^n B_1^n} \in \cC^{\sigma_A}_{AB,n}}  \MD(\rho_{AB}^{\otimes n} \| \bar \sigma_{A_1^n B_1^n}) 
 \end{align}
can be assumed to be permutation-invariant. By~\cref{fact_asymptotic_DM_perminv}, it thus follows that
\begin{align}
    \frac{1}{n} D_{\mathbb{M}}(\rho_{AB}^{\otimes n}\| \cC^{\sigma_A}_{AB,n})
    \overset{\textnormal{\Cshref{eq_optimizer_Er}}}&{=} \frac{1}{n}  \MD(\rho_{AB}^{\otimes n} \| \hat \sigma_{A_1^n B_1^n}) \\
    \overset{\textnormal{\Cshref{fact_asymptotic_DM_perminv}}}&{\geq} \frac{1}{n}  D( \rho_{AB}^{\otimes n} \| \hat \sigma_{A_1^n B_1^n} ) + \frac{o(n)}{n} \\ 
    &\geq \frac{1}{n} D(\rho_{AB}^{\otimes n}\| \cC^{\sigma_A}_{AB,n}) + \frac{o(n)}{n} \, .
\end{align}
 Taking the limit $n \to \infty$ yields
 \begin{align}
 D^{\infty}_{\mathbb{M}}(\rho_{AB} \| \cC^{\sigma_A}_{AB}) \geq D^{\infty}(\rho_{AB}\| \cC^{\sigma_A}_{AB})\, .
 \end{align}
\
\end{proof}

Next, we derive upper and lower bounds for $D^{\infty}(\rho_{AB}\| \cC^{\sigma_A}_{AB})$. 
To do so, we show an asymptotic superadditivity property of $D_{\mathbb{M}}(\rho_{A_1^n B_1^n} \| \cC^{\sigma_A}_{AB,n})$ for permutation-invariant states $\rho_{A_1^n B_1^n}$.
\begin{lemma} \label{lem_superadditivity}
 Let $\omega_{AB} \in \St(A\otimes B)$, $\sigma_A \in \St(A)$, and $\{\rho_{A_1^n B_1^n}\}_{n \in \N}$ be a family of states such that $\rho_{A_1^n B_1^n}\in \St(A_1^n \otimes B_1^n)$ and $\rho_{A_i B_i} = \omega_{AB}$ for all $i\in \{1, \ldots ,n \}$. Then,
\begin{align} \label{eq_prop_superadd}
\lim_{n \to \infty}\frac{1}{n}  D_{\mathbb{M}}(\rho_{A_1^n B_1^n}\| \cC^{\sigma_A}_{AB,n}) \geq  D_{\mathbb{M}}(\omega_{AB}\| \cC^{\sigma_A}_{AB})  \, . 
\end{align}
\end{lemma}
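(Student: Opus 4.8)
The plan is to transcribe the proof of \cref{lem_superadd_renyi} to $\alpha=1$, using the variational formula \cref{eq_varFormula_DM} for the measured relative entropy. For each $n\in\N$ I would start from
\begin{align*}
D_{\mathbb{M}}(\rho_{A_1^n B_1^n} \| \cC^{\sigma_A}_{AB,n}) = \min_{\sigma_n \in \cC^{\sigma_A}_{AB,n}} \sup_{\tau_n > 0} \big\{ \tr[\rho_{A_1^n B_1^n} \log \tau_n] - \log \tr[\sigma_n \tau_n] \big\} \, ,
\end{align*}
and restrict the inner supremum to product operators $\tau_n = \tau^{\otimes n}$ with $\tau>0$, which can only decrease it. This restriction is lossless at leading order precisely because of the hypothesis: since $\log(\tau^{\otimes n}) = \sum_{i=1}^n \id \otimes \cdots \otimes \log\tau \otimes \cdots \otimes \id$, we get $\tr[\rho_{A_1^n B_1^n} \log(\tau^{\otimes n})] = \sum_{i=1}^n \tr[\rho_{A_i B_i} \log\tau] = n\,\tr[\omega_{AB}\log\tau]$ using only that every marginal $\rho_{A_iB_i}$ equals $\omega_{AB}$ — so, unlike in \cref{lem_superadd_renyi}, $\rho_{A_1^n B_1^n}$ need not be a tensor power. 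Bounding $\tr[\sigma_n \tau^{\otimes n}] \le h_{\cC^{\sigma_A}_{AB,n}}(\tau^{\otimes n})$ uniformly in $\sigma_n$ then removes the minimization, yielding
\begin{align*}
\frac{1}{n} D_{\mathbb{M}}(\rho_{A_1^n B_1^n} \| \cC^{\sigma_A}_{AB,n}) \ge \sup_{\tau > 0} \Big\{ \tr[\omega_{AB} \log \tau] - \frac{1}{n} \log h_{\cC^{\sigma_A}_{AB,n}}(\tau^{\otimes n}) \Big\} \, .
\end{align*}

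Next I would apply \cref{lem_additivity_support_function}, noting that the error there is at most $d^2\log(n+1)/n$ with $d=\dim(A)\dim(B)^2$ — which is visible from its proof via the de Finetti reduction \cref{eq_postselection_FR} and is thus uniform in $\tau$ — so that
\begin{align*}
\frac{1}{n} D_{\mathbb{M}}(\rho_{A_1^n B_1^n} \| \cC^{\sigma_A}_{AB,n}) \ge \sup_{\tau > 0} \big\{ \tr[\omega_{AB} \log \tau] - \log h_{\cC^{\sigma_A}_{AB}}(\tau) \big\} - \frac{d^2 \log(n+1)}{n} \, .
\end{align*}
It then remains to recognize the single-letter supremum as $D_{\mathbb{M}}(\omega_{AB}\|\cC^{\sigma_A}_{AB})$. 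This is an application of Sion's minimax theorem to the function $(\sigma,H)\mapsto \tr[\omega_{AB}H] - \log\tr[\sigma\,\ee^{H}]$ over $\sigma\in\cC^{\sigma_A}_{AB}$ and Hermitian $H$: the reparametrization $\tau=\ee^{H}$ makes this map concave in $H$ (convexity of $H\mapsto\log\tr[\sigma\,\ee^{H}]$) and convex in $\sigma$, and $\cC^{\sigma_A}_{AB}$ is convex and compact, whence
\begin{align*}
D_{\mathbb{M}}(\omega_{AB} \| \cC^{\sigma_A}_{AB}) = \min_{\sigma \in \cC^{\sigma_A}_{AB}} \sup_{\tau > 0} \big\{ \tr[\omega_{AB} \log \tau] - \log \tr[\sigma \tau] \big\} = \sup_{\tau > 0} \big\{ \tr[\omega_{AB} \log \tau] - \log h_{\cC^{\sigma_A}_{AB}}(\tau) \big\} \, .
\end{align*}
Substituting this into the previous display and letting $n\to\infty$ proves \cref{eq_prop_superadd}.

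I do not expect a genuine obstacle: this is essentially the $\alpha=1$ specialization of \cref{lem_superadd_renyi}, and the strengthening from tensor powers to families with prescribed single-site marginals is absorbed for free by the linearity of $\tau\mapsto\tr[\rho_{A_1^n B_1^n}\log(\tau^{\otimes n})]$ in $\log\tau$. The two points needing a little care are (i) that the $o(n)$ in \cref{lem_additivity_support_function} is uniform over $\tau$ so that it survives the supremum — which holds because it originates from the universal prefactor $(n+1)^{d^2}$ in the de Finetti reduction — and (ii) the Sion minimax step, for which the reparametrization $\tau=\ee^{H}$ is the standard device, already used for the measured relative entropy in~\cite{FBT15,Sutter_book}.
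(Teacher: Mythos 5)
Your proposal is correct and follows essentially the same route as the paper: restrict to product test operators $\tau^{\otimes n}$ (where only the single-site marginals of $\rho_{A_1^nB_1^n}$ enter), pass to the support function, invoke the approximate multiplicativity from \cref{lem_additivity_support_function} (whose $o(n)$ is indeed uniform in $\tau$), and finish with Sion's minimax. The only cosmetic difference is the last step: you obtain concavity via the reparametrization $\tau=\ee^{H}$, whereas the paper first applies $1-x\leq-\log x$ to switch to the variational formula~\cref{eq_varFormula_DM2}, where concavity in $\tau$ and linearity in $\sigma$ are immediate; both yield the required inequality.
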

\begin{proof}
We are now equipped with all the ingredients to show~\cref{eq_prop_superadd}. To shorten notation, we write $\rho_{A_1^n B_1^n} = \rho_n$ and observe that
\begin{align}
\frac{1}{n}  D_{\mathbb{M}}(\rho_{A_1^n B_1^n}\| \cC^{\sigma_A}_{AB,n})
&= \frac{1}{n}  \min_{\sigma_n\in \cC^{\sigma_A}_{AB,n}} \MD(\rho_n \| \sigma_n) \\
\overset{\textnormal{\Cshref{eq_varFormula_DM}}}&{=}\frac{1}{n}  \min_{\sigma_n\in \cC^{\sigma_A}_{AB,n}} \sup_{\tau_n > 0} \{ \tr[\rho_n \log \tau_n] - \log \tr[\sigma_n \tau_n] \} \\
&\geq\frac{1}{n}  \sup_{\tau_n > 0} \min_{\sigma_n\in \cC^{\sigma_A}_{AB,n}} \{ \tr[\rho_n \log \tau_n] - \log \tr[\sigma_n \tau_n] \} \\
&=\frac{1}{n}  \sup_{\tau_n > 0} \{ \tr[\rho_n \log \tau_n] - \log h_{\cC^{\sigma_A}_{AB,n}}(\tau_n) \} \\
&\geq \sup_{\tau > 0}\{ \tr[\omega \log \tau]  - \frac{1}{n} \log h_{\cC^{\sigma_A}_{AB,n}}(\tau^{\otimes n})  \}  \\
\overset{\textnormal{\Cshref{lem_additivity_support_function}}}&{=}\sup_{\tau > 0}\{ \tr[\omega \log \tau]  -  \log h_{\cC^{\sigma_A}_{AB}}(\tau) \} + \frac{o(n)}{n} \, , \label{eq_prop_almost_done}
\end{align}
where the penultimate step uses that $\log(X_A \otimes Y_B) =\log(X_A) \otimes \id_B + \id_A \otimes \log(Y_B)$ and that by assumption all marginals of $\rho_{A_1^n B_1^n}$ are equal to $\omega_{AB}$.
Using that $1-x \leq -\log x$ for $x \in \R_+$ yields
\begin{align}
\sup_{\tau > 0}\{ \tr[\omega \log \tau]  -  \log h_{\cC^{\sigma_A}_{AB}}(\tau) \} 
&\geq \sup_{\tau > 0}\{ \tr[\omega \log \tau] +1 -  h_{\cC^{\sigma_A}_{AB}}(\tau) \}\\
&= \sup_{\tau > 0} \min_{\sigma \in \cC^{\sigma_A}_{AB}}\{ \tr[\omega \log \tau] +1 -  \tr[\tau \sigma] \} \\
\overset{\textnormal{Sion's minimax~\cite{Sion58}}}&{=} \min_{\sigma \in \cC^{\sigma_A}_{AB}}\sup_{\tau > 0} \{ \tr[\omega \log \tau] +1 -  \tr[\tau \sigma] \} \\
\overset{\textnormal{\Cshref{eq_varFormula_DM2}}}&{=}  \min_{\sigma_{AB} \in \cC^{\sigma_A}_{AB}}\MD(\omega_{AB} \| \sigma_{AB}) \\
\overset{\textnormal{\Cshref{eq_def_UM}}}&{=}D_{\mathbb{M}}(\omega_{AB}\| \cC^{\sigma_A}_{AB}) \, . \label{eq_prop_almost_done2}
\end{align}
Sion's minimax theorem~\cite[Corollary~3.3]{Sion58} is applicable since the set of positive definite matrices and $\cC^{\sigma_A}_{AB}$ are both convex. In addition, the set $\cC^{\sigma_A}_{AB}$ is compact since we are in a finite-dimensional setting. Furthermore, the function $\tau \mapsto \tr[\omega \log \tau] +1 -  \tr[\tau \sigma]$ is concave and $\sigma \mapsto \tr[\omega \log \tau] +1 -  \tr[\tau \sigma]$ is linear.
Combining~\cref{eq_prop_almost_done,eq_prop_almost_done2} completes the proof.
\end{proof}

\begin{remark}
Let $\alpha \in [\frac{1}{2},\infty]$, $\rho_{AB} \in \St(A \otimes B)$, and $\sigma_A \in \St(A)$ be such that $[\rho_A, \sigma_A]=0$. Then
\begin{align} \label{eq_commutative_case}
D_{\alpha,\mathbb{M}}(\rho_{AB}\| \cC^{\sigma_A}_{AB}) =D^{\infty}_{\alpha,\mathbb{M}}(\rho_{AB}\| \cC^{\sigma_A}_{AB}) \, .
\end{align}
One direction follows by noting that
\begin{align}
D_{\alpha,\mathbb{M}}(\rho_{AB}\| \cC^{\sigma_A}_{AB}) 
\overset{\textnormal{\Cshref{lem_superadd_renyi,lem_superadditivity}}}{\leq}D^{\infty}_{\alpha,\mathbb{M}}(\rho_{AB}\| \cC^{\sigma_A}_{AB}) \, .
\end{align}
To see the other direction, we use that $[\rho_A, \sigma_A]=0$ and hence
\begin{align}
D_{\alpha,\mathbb{M}}(\rho_{AB}\| \cC^{\sigma_A}_{AB}) 
\overset{\textnormal{DPI}}&{\geq} D_{\alpha,\mathbb{M}}(\rho_A \| \sigma_A) \\
\overset{\textnormal{\Cshref{eq_euality_DM_vs_D}},\eqref{eq_euality_DM_vs_D_ONE}}&{=}D_{\alpha}(\rho_A \| \sigma_A) \\
\overset{\textnormal{\Cshref{thm_tony_new_range,thm_uhlmann_rel_ent}}}&{=} D^{\infty}_{\alpha}(\rho_{AB}\| \cC^{\sigma_A}_{AB})\\
\overset{\textnormal{\Cshref{lem_remove_M_renyi,lem_remove_M}}}&{=} D^{\infty}_{\alpha,\mathbb{M}}(\rho_{AB}\| \cC^{\sigma_A}_{AB})\, .
\end{align}
\end{remark}

\begin{remark} \label{rmk_bounds_on_U}
A consequence of~\cref{lem_superadditivity} and~\cref{lem_remove_M} is that we can derive single-letter upper and lower bounds for the regularized quantity $D^{\infty}(\rho_{AB}\| \cC^{\sigma_A}_{AB})$ because
\begin{align}
 D_{\mathbb{M}}(\rho_{AB}\| \cC^{\sigma_A}_{AB}) 
 \overset{\textnormal{\Cshref{lem_superadditivity}}}{\leq} D^{\infty}_{\mathbb{M}}(\rho_{AB}\| \cC^{\sigma_A}_{AB}) 
 \overset{\textnormal{\Cshref{lem_remove_M}}}{=}D^{\infty}(\rho_{AB}\| \cC^{\sigma_A}_{AB})
 \leq  D(\rho_{AB}\| \cC^{\sigma_A}_{AB}) \, .
\end{align}
Note that $ D_{\mathbb{M}}(\rho_{AB}\| \cC^{\sigma_A}_{AB})$ as well as $D(\rho_{AB}\| \cC^{\sigma_A}_{AB})$ are convex optimization problems. The latter can be numerically computed via semidefinite programming solvers~\cite{Fawzi_2018}. The former can be computed using interior point methods as explained in~\cite[Remark~2]{FFF24}.
\end{remark}


\paragraph{Acknowledgments.} We thank Henrik Wilming and Joe Renes for making us aware of the argument explained in~\cref{app_Henrik} and pointing us towards~\cite{jen21}, respectively. We also thank Ernest Tan for reminding us about~\cite[Appendix~H.1]{ernest_phd}.
We further acknowledge discussions with Robert K\"onig and Lukas Schmitt about~\cite{FFF24}. GM and RR acknowledge support from the NCCR SwissMAP, the ETH Zurich Quantum Center, the SNSF project No.\ 20QU-1\_225171, and the CHIST-ERA project MoDIC.
\appendix
\section{No sufficient divergence can have the Uhlmann property} \label{app_Henrik}
Let $\TPCP(A,B)$ denote the set of trace-preserving completely positive maps from $A$ to $B$.
We say that a divergence $\mathbb{D}$ satisfying the DPI is \emph{sufficient} if for all $\cE \in \TPCP(A,B)$ and $\rho, \sigma \in \St(A)$ the following two conditions are equivalent:
\begin{enumerate}[(i)]
\item $\mathbb{D}(\rho \| \sigma) = \mathbb{D}\big( \cE(\rho) \| \cE(\sigma) \big)$
\item $\exists\, \cR \in \TPCP(B,A)$ such that $(\cR \circ \cE)(\rho) = \rho$ and $(\cR \circ \cE)(\sigma) = \sigma$. 
\end{enumerate}
The R\'enyi relative entropy $D_{\alpha}$ is sufficient for $\alpha \in (\frac{1}{2},\infty)$~\cite{Pet86,Petz88,jen17,jen21}.\footnote{Note that the Petz R\'enyi relative entropy~\cite{petz_Entropybook,marco_book} is sufficient for $\alpha \in (0,2)$~\cite{jen17}.}
Recall that the R\'enyi relative entropy is lower semi-continiuous~\cite[Proposition~III.11]{MH24}.
In addition, we say $\mathbb{D}$ satisfies the \emph{Uhlmann property} if~\cref{eq_DPI} holds with equality.
\begin{fact} \label{fact_henrik}
Let $\mathbb{D}$ be a lower semi-continuous divergence that satisfies a DPI. Then $\mathbb{D}$ being sufficient implies it does not satisfy the Uhlmann property. 
\end{fact}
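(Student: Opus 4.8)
The plan is to argue by contradiction. Assume that $\mathbb{D}$ is a lower semi-continuous divergence satisfying the DPI, that it is sufficient, and that it \emph{also} satisfies the Uhlmann property; I will exhibit one pair $(\rho_{AB},\sigma_A)$ for which \cref{eq_DPI} is then forced to be an equality that cannot hold.

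Concretely, I would take $A=B=\C^2$, let $\rho_{AB}=\proj{\Phi}$ with $\ket{\Phi}:=\tfrac1{\sqrt2}(\ket{00}+\ket{11})$, so that $\rho_A=\tr_B[\proj{\Phi}]=\tfrac12\id_A$, and choose any $\sigma_A\in\St(A)$ with $\sigma_A\neq\tfrac12\id_A$. By the Uhlmann property, $\mathbb{D}(\rho_A\|\sigma_A)=\mathbb{D}(\rho_{AB}\|\cC^{\sigma_A}_{AB})$, and by lower semi-continuity of $\mathbb{D}$ together with compactness of $\cC^{\sigma_A}_{AB}$ (footnote~\ref{footnote_minimum}) this infimum is attained at some $\sigma^\star_{AB}\in\cC^{\sigma_A}_{AB}$, i.e.\ $\mathbb{D}(\rho_{AB}\|\sigma^\star_{AB})=\mathbb{D}(\rho_A\|\sigma_A)$. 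Since $\tr_B\colon A\otimes B\to A$ is trace-preserving and completely positive and satisfies $\tr_B[\rho_{AB}]=\rho_A$, $\tr_B[\sigma^\star_{AB}]=\sigma_A$, this is exactly the equality case in the DPI for $\tr_B$; hence, applying the definition of sufficiency to this channel and the states $\rho_{AB},\sigma^\star_{AB}$, condition (ii) provides a channel $\cR\in\TPCP(A,A\otimes B)$ with $\cR(\rho_A)=\rho_{AB}$ and $\cR(\sigma_A)=\sigma^\star_{AB}$.

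The \emph{key step} is the elementary observation that a quantum channel mapping the full-rank state $\tfrac12\id_A$ to the pure state $\proj{\Phi}$ must be constant. Indeed, every $\omega\in\St(A)$ obeys $\omega\leq\id_A=2\rho_A$, so positivity of $\cR$ gives $\cR(\omega)\leq 2\,\cR(\rho_A)=2\proj{\Phi}$; thus $\cR(\omega)$ is supported on $\mathrm{span}\{\ket{\Phi}\}$ and, being a normalized state, equals $\proj{\Phi}$. (Equivalently, in Kraus form $\cR(\cdot)=\sum_iK_i(\cdot)K_i^\dagger$ the relation $\sum_iK_iK_i^\dagger\propto\proj{\Phi}$ forces $K_i=\ket{\Phi}\bra{w_i}$, and trace preservation $\sum_i\proj{w_i}=\id_A$ then gives $\cR(\cdot)=\tr[\cdot]\,\proj{\Phi}$.) Applying this to $\omega=\sigma_A$ yields $\sigma^\star_{AB}=\cR(\sigma_A)=\proj{\Phi}$, so $\sigma_A=\tr_B[\sigma^\star_{AB}]=\tr_B[\proj{\Phi}]=\tfrac12\id_A$, contradicting the choice $\sigma_A\neq\tfrac12\id_A$. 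This contradiction establishes the claim.

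I do not anticipate a genuine technical obstacle; the only creative input is the choice of example, after which the two points needing a line of justification are the existence of the attaining extension $\sigma^\star_{AB}$ (which is precisely footnote~\ref{footnote_minimum}) and the ``channel-to-a-pure-state-is-constant'' lemma, both routine. Note also that finiteness of $\mathbb{D}(\rho_A\|\sigma_A)$ is not needed: the argument uses only the equality of the two (possibly infinite) values and the recovery channel supplied by sufficiency.
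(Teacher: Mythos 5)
Your proof is correct, but it takes a genuinely different route from the paper's. The paper argues that sufficiency plus the Uhlmann property for $\mathbb{D}$ would \emph{transfer} the Uhlmann property to every divergence satisfying a DPI (via the recovery map supplied by sufficiency), and then derives a contradiction by invoking the external counterexample of~\cite[Appendix~B]{GEAT_24}, where $D_{\alpha}(\rho_A\|\sigma_A) < D_{\alpha}(\rho_{AB}\|\cC^{\sigma_A}_{AB})$ for $\alpha\in[1,2]$. You instead produce a self-contained contradiction: for the maximally entangled $\rho_{AB}=\proj{\Phi}$ the recovery map must send the full-rank state $\rho_A=\tfrac12\id_A$ to a pure state and is therefore the constant channel $\omega\mapsto\proj{\Phi}$, forcing $\sigma_A=\tfrac12\id_A$ and contradicting the free choice of $\sigma_A$. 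What each buys: your argument needs no external input and identifies explicitly which pairs must fail, while the paper's intermediate claim (that a single sufficient divergence with the Uhlmann property would endow \emph{all} DPI divergences with it) is of independent interest and explains why the failure is universal across the Rényi family. Two minor points worth making explicit in your write-up: the existence of the attaining extension $\sigma^\star_{AB}$ is exactly the paper's lower-semicontinuity-plus-compactness argument, which you correctly invoke; and since sufficiency theorems in the literature are usually stated under finiteness/support assumptions, it is cleanest to choose $\sigma_A$ full rank (still $\neq\tfrac12\id_A$), which costs nothing and sidesteps any worry about applying the equivalence at infinite divergence values --- though under the paper's literal definition of sufficiency your remark that finiteness is not needed is also defensible.
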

\begin{proof}
We prove the assertion of~\cref{fact_henrik} by contradiction.
Suppose we have lower semi-continuous sufficient divergence $\mathbb{D}$ that satisfies the Uhlmann property. This means that for $\rho_A , \sigma_A \in \St(A)$ there exists an extension $\sigma_{AB} \in \cC^{\sigma_A}_{AB}$ such that\footnote{The lower semi-continuity of $\mathbb{D}$ implies the existence of an optimizer $\sigma_{AB} \in \cC^{\sigma_A}_{AB}$, as explained in~\cref{footnote_minimum}.} 
\begin{align}
\mathbb{D}(\rho_A \| \sigma_A) 
= \mathbb{D}(\rho_{AB} \| \sigma_{AB}) 
= \mathbb{D}(\rho_{AB} \|  \cC^{\sigma_A}_{AB}) \, . 
\end{align}
The sufficiency of $\mathbb{D}$ thus implies the existence of a recovery map $\cR \in \TPCP(A,A\otimes B)$ satisfying
\begin{align} \label{eq_sufficiency_henrik}
\cR(\rho_A) = \rho_{AB} \qquad \textnormal{and} \qquad \cR(\sigma_A) = \sigma_{AB} \, .
\end{align}

For an arbitrary divergence $\mathbb{\bar D}$ satisfying the DPI we thus find
\begin{align}
 \mathbb{\bar D}(\rho_{AB} \| \sigma_{AB}) 
\overset{\textnormal{DPI}}{\geq}  \mathbb{ \bar D}(\rho_A \| \sigma_A) 
\overset{\textnormal{DPI}}{\geq} \mathbb{\bar D}\big(\cR(\rho_A) \| \cR(\sigma_A )\big) 
\overset{\textnormal{\Cshref{eq_sufficiency_henrik}}}{=}\mathbb{\bar D}(\rho_{AB} \| \sigma_{AB}) \, . \label{eq_hen1}
\end{align}
This implies
\begin{align}
\mathbb{\bar D}(\rho_A \| \sigma_A) 
\overset{\textnormal{\Cshref{eq_hen1}}}{=} \mathbb{\bar D}(\rho_{AB} \| \sigma_{AB}) 
\overset{\textnormal{DPI}}{=} \mathbb{\bar D}(\rho_{AB} \|  \cC^{\sigma_A}_{AB}) \, ,
\end{align}
which states that $\mathbb{\bar D}$ satisfies the Uhlmann property. This however is in contradiction with~\cite[Appendix~B]{GEAT_24}, where it was shown that there exist $\rho_{AB}\in \St(A\otimes B)$ and $\sigma_A \in \St(A)$ such that $D_{\alpha}(\rho_A \| \sigma_A) < D_{\alpha}(\rho_{AB} \|  \cC^{\sigma_A}_{AB})$ for all $\alpha \in [1,2]$.
\end{proof}

\section{Various technical results}
\subsection{Properties of the measured R\'enyi relative entropy}
\begin{fact}[{\cite[Equation (3.23)]{RSB24}}]\label{fact_UI}
Let $\alpha \in [\frac{1}{2},1) \cup (1,\infty)$, $\rho, \sigma \in \St(A)$, and $U$ be a unitary matrix on $A$. Then
\begin{align}
D_{\alpha,\mathbb{M}}(U \rho U^\dagger \| U \sigma U^{\dagger}) = D_{\alpha,\mathbb{M}}(\rho \| \sigma) \, .
\end{align}
\end{fact}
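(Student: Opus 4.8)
The plan is to exploit the fact that conjugation by a fixed unitary is itself a reversible quantum channel, so it interacts trivially with the optimization defining the measured R\'enyi divergence. Write $\cU(X):=UXU^\dagger$ for the unitary channel implementing conjugation by $U$; it is trace-preserving and completely positive, with inverse $\cU^{-1}(X)=U^\dagger X U$.

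First I would note that $\cM \mapsto \cM \circ \cU$ is a bijection from the set of POVM channels on $A$ onto itself: if $\cM$ corresponds to a POVM $\{M_x\}_x$, then $\cM \circ \cU$ corresponds to $\{U^\dagger M_x U\}_x$, which is again a POVM, and $\{M_x\}_x \mapsto \{U^\dagger M_x U\}_x$ is a bijection on POVMs since $U$ is unitary. Therefore, using the definition~\cref{eq_measured_divergence} and $\cM(U X U^\dagger) = (\cM \circ \cU)(X)$,
\begin{align}
D_{\alpha,\mathbb{M}}(U\rho U^\dagger \| U\sigma U^\dagger)
= \sup_{\cM} D_\alpha\big((\cM\circ\cU)(\rho) \,\big\|\, (\cM\circ\cU)(\sigma)\big)
= \sup_{\cM'} D_\alpha\big(\cM'(\rho) \,\big\|\, \cM'(\sigma)\big)
= D_{\alpha,\mathbb{M}}(\rho \| \sigma) \,,
\end{align}
where the second equality uses the substitution $\cM' = \cM \circ \cU$, which ranges over all POVM channels as $\cM$ does.

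Alternatively, and perhaps even more directly, I would appeal to the variational formula~\cref{eq_varFormula_DM_a}. For $\alpha \in [\frac12,1)$ one has $Q_{\alpha,\mathbb{M}}(\rho\|\sigma) = \inf_{\tau>0} \alpha\tr[\rho\tau^{\frac{\alpha-1}{\alpha}}] + (1-\alpha)\tr[\sigma\tau]$, and $\tau \mapsto U\tau U^\dagger$ is a bijection on positive definite matrices. Using the functional-calculus identity $(U\tau U^\dagger)^{p} = U\tau^{p} U^\dagger$ together with cyclicity of the trace, one gets $\tr[U\rho U^\dagger (U\tau U^\dagger)^{p}] = \tr[\rho\tau^{p}]$ and $\tr[U\sigma U^\dagger (U\tau U^\dagger)] = \tr[\sigma\tau]$, so the objective is unchanged under the substitution; the same computation handles $\alpha \in (1,\infty)$ with $\inf$ replaced by $\sup$. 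Either way the claim follows.

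There is no genuine obstacle here; the statement is a routine invariance. The only point requiring (minimal) care is to argue that the replacement used in each approach is a true bijection — of POVM channels in the first argument, of positive definite $\tau$ in the second — so that the supremum (resp. infimum) is taken over exactly the same set before and after; both facts are immediate from $U$ being unitary.
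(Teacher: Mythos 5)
Your proposal is correct. Note that the paper gives no proof of this fact at all --- it is quoted directly from \cite[Equation~(3.23)]{RSB24} --- so there is no in-paper argument to compare against; either of your two routes settles it. The first route (conjugating the POVM, $\{M_x\}_x \mapsto \{U^\dagger M_x U\}_x$, which is a bijection on POVMs, so the supremum in \cref{eq_measured_divergence} is unchanged) is fully general and needs no rank assumptions; the second route via \cref{eq_varFormula_DM_a} also works, since only the \emph{value} of the infimum/supremum is used, not its attainment, and the substitution $\tau \mapsto U \tau U^\dagger$ is indeed a bijection on positive definite matrices compatible with the functional calculus.
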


\begin{fact} \label{fact_convex_concave}
Let $\rho \in \St(A)$. The function $\sigma \mapsto Q_{\alpha,\mathbb{M}}(\rho \| \sigma)$ is concave for $\alpha \in [\frac{1}{2},1)$ and convex for $\alpha \in (1,\infty)$ on the set of density matrices.
\end{fact}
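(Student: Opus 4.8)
The plan is to read off the desired behavior directly from the variational formulas for $Q_{\alpha,\mathbb{M}}$ in~\cref{eq_varFormula_DM_a,eq_varFormula_DM_a_2}, exploiting the fact that an infimum (resp.\ supremum) of affine functions of $\sigma$ is concave (resp.\ convex). First I would treat the range $\alpha \in [\frac{1}{2},1)$. Here~\cref{eq_varFormula_DM_a} gives
\begin{align}
Q_{\alpha,\mathbb{M}}(\rho\|\sigma) = \inf_{\tau>0} \big\{ \alpha\, \tr[\rho\, \tau^{\frac{\alpha-1}{\alpha}}] + (1-\alpha)\, \tr[\sigma\tau] \big\} \, .
\end{align}
For each fixed $\tau>0$ the map $\sigma \mapsto \alpha\,\tr[\rho\,\tau^{\frac{\alpha-1}{\alpha}}] + (1-\alpha)\,\tr[\sigma\tau]$ is affine in $\sigma$ (the first summand is constant, the second is linear, and $1-\alpha>0$). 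A pointwise infimum of a family of affine—hence concave—functions is concave, so $\sigma \mapsto Q_{\alpha,\mathbb{M}}(\rho\|\sigma)$ is concave on $\St(A)$.

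For $\alpha \in (1,\infty)$ I would use the supremum form in~\cref{eq_varFormula_DM_a}:
\begin{align}
Q_{\alpha,\mathbb{M}}(\rho\|\sigma) = \sup_{\tau>0} \big\{ \alpha\, \tr[\rho\, \tau^{\frac{\alpha-1}{\alpha}}] + (1-\alpha)\, \tr[\sigma\tau] \big\} \, .
\end{align}
Again, for fixed $\tau>0$ the objective is affine in $\sigma$ (now with coefficient $1-\alpha<0$ on the linear term, but affine nonetheless), and a pointwise supremum of affine functions is convex. Hence $\sigma \mapsto Q_{\alpha,\mathbb{M}}(\rho\|\sigma)$ is convex on $\St(A)$ for $\alpha \in (1,\infty)$. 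One could equally well invoke the alternative variational formula~\cref{eq_varFormula_DM_a_2}, but the forms in~\cref{eq_varFormula_DM_a} are the cleanest since they are manifestly affine in $\sigma$ for every fixed $\tau$.

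I do not anticipate a genuine obstacle here; the statement is essentially a corollary of the FBT variational formulas already recorded in the preliminaries. The only point requiring a word of care is the domain of the optimization and whether the extremum is attained: the formulas in~\cref{eq_varFormula_DM_a} are stated for $\tau>0$ with attainment guaranteed when $\rho,\sigma$ have full rank, but the concavity/convexity argument via pointwise inf/sup does not need attainment—it holds for the infimum/supremum over $\tau>0$ regardless—so the conclusion extends to all density matrices $\sigma$ by this argument (or, if one prefers, by continuity/lower semicontinuity of $Q_{\alpha,\mathbb{M}}$ together with density of full-rank states). This is the kind of routine remark I would include in one sentence rather than belabor.
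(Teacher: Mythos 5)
Your proposal is correct and matches the paper's own proof: the paper likewise starts from the variational formula in~\cref{eq_varFormula_DM_a}, observes that the objective is affine in $\sigma$ for each fixed $\tau>0$, and concludes by the superadditivity of the infimum (equivalently, a pointwise infimum of affine functions is concave), with the supremum case for $\alpha>1$ handled identically. Your remark that attainment of the extremum is irrelevant to this argument is also on point.
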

\begin{proof}
For $\sigma_1, \sigma_2 \in \St(A)$, $t \in [0,1]$, and $\alpha \in [\frac{1}{2},1)$ we have
\begin{align}
&\hspace{-10mm}Q_{\alpha,\mathbb{M}}\big(\rho \| t \sigma_1 + (1-t) \sigma_2 \big) \nonumber \\
\overset{\textnormal{\Cshref{eq_varFormula_DM_a}}}&{=}  \inf_{\tau>0} \big \{ \alpha \tr[ \rho  \tau^{\frac{\alpha-1}{\alpha}}] + (1-\alpha) \tr[(t\sigma_1 + (1-t) \sigma_2) \tau] \big \} \\
&=  \inf_{\tau>0} \big \{ t \big( \alpha \tr[ \rho  \tau^{\frac{\alpha-1}{\alpha}}] + (1-\alpha) \tr[\sigma_1 \tau] \big) + (1-t) \big( \alpha \tr[ \rho  \tau^{\frac{\alpha-1}{\alpha}}] + (1-\alpha) \tr[\sigma_2 \tau] \big) \big \} \\
&\geq  t \inf_{\tau>0} \big \{ \alpha \tr[ \rho  \tau^{\frac{\alpha-1}{\alpha}}] + (1-\alpha) \tr[\sigma_1 \tau] \big \} + (1-t) \inf_{\tau>0} \big \{ \alpha \tr[ \rho  \tau^{\frac{\alpha-1}{\alpha}}] + (1-\alpha) \tr[\sigma_2 \tau] \big \} \\
\overset{\textnormal{\Cshref{eq_varFormula_DM_a}}}&{=} t Q_{\alpha,\mathbb{M}}(\rho \|  \sigma_1) + (1-t) Q_{\alpha,\mathbb{M}}(\rho \|  \sigma_2) \, .
\end{align}
For $\alpha \in (1,\infty)$, the same argument works.
\end{proof}

\begin{fact} \label{fact_asymptotic_DM_perminv}
For any $n\in \N$, let $\alpha \in [\frac{1}{2}, \infty]$ and $\rho_n,\,\hat{\sigma}_n \in \St(A_1^n)$ be permutation-invariant. Then
\begin{align}
\frac{1}{n}  D_{\alpha,\mathbb{M}}(\rho_n \| \hat \sigma_n) \geq \frac{1}{n}  D_{\alpha}(\rho_n \| \hat \sigma_n )+ \frac{o(n)}{n}\, .
\end{align}
\end{fact}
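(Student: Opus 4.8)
The plan is to reduce the inequality to the commuting case by pinching the first argument with respect to $\hat\sigma_n$, and then to control the resulting loss using that a permutation-invariant state has only polynomially many distinct eigenvalues. First I would dispose of the endpoints $\alpha\in\{\tfrac12,\infty\}$: there the measured and non-measured R\'enyi relative entropies coincide (\cite{FBT15} and~\cite[Appendix~A]{MO15}), so the claimed inequality holds with equality and no error term. For $\alpha\in(\tfrac12,\infty)$ I may assume $D_{\alpha}(\rho_n\|\hat\sigma_n)<\infty$, since otherwise $\alpha\geq1$ and $\supp\rho_n\not\subseteq\supp\hat\sigma_n$, and then already the projective measurement onto $\supp\hat\sigma_n$ and its orthogonal complement shows $D_{\alpha,\mathbb{M}}(\rho_n\|\hat\sigma_n)=\infty$, so there is nothing to prove.

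Now set $\rho'_n:=\cP_{\hat\sigma_n}(\rho_n)$. By~\eqref{it_pinch_commutation} and~\eqref{it_pinch_invariance} we have $[\rho'_n,\hat\sigma_n]=0$ and $\cP_{\hat\sigma_n}(\hat\sigma_n)=\hat\sigma_n$, and since the pinching map is CPTP~\eqref{it_pinch_CPTP}, the DPI for the measured R\'enyi relative entropy~\cite[Lemma~5]{RSB24} together with the commuting-case identity~\cref{eq_euality_DM_vs_D,eq_euality_DM_vs_D_ONE} gives
\begin{align}
D_{\alpha,\mathbb{M}}(\rho_n\|\hat\sigma_n)\ \geq\ D_{\alpha,\mathbb{M}}(\rho'_n\|\hat\sigma_n)\ =\ D_{\alpha}(\rho'_n\|\hat\sigma_n)\, .
\end{align}
It therefore remains to show $D_{\alpha}(\rho'_n\|\hat\sigma_n)\geq D_{\alpha}(\rho_n\|\hat\sigma_n)-o(n)$. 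Here I would use that, because $\hat\sigma_n$ is permutation-invariant on $A_1^n$ with $d_A=\dim(A)$ fixed, Schur--Weyl duality implies $|\spec(\hat\sigma_n)|\leq\poly(n)$ (cf.~\cite[Lemma~2.8]{GEAT_24}), so that $\tfrac1n\log|\spec(\hat\sigma_n)|\to 0$; note that permutation invariance of $\rho_n$ is never used. For $\alpha\in(1,\infty)$ the pinching inequality~\eqref{it_pinch_inequality} gives $\rho_n\leq|\spec(\hat\sigma_n)|\,\rho'_n$, and since conjugation by $\hat\sigma_n^{\frac{1-\alpha}{2\alpha}}$ preserves the operator order and $X\mapsto\tr[X^{\alpha}]$ is monotone on positive semidefinite matrices, this yields $Q_{\alpha}(\rho'_n\|\hat\sigma_n)\geq|\spec(\hat\sigma_n)|^{-\alpha}Q_{\alpha}(\rho_n\|\hat\sigma_n)$ and hence $D_{\alpha}(\rho'_n\|\hat\sigma_n)\geq D_{\alpha}(\rho_n\|\hat\sigma_n)-\tfrac{\alpha}{\alpha-1}\log|\spec(\hat\sigma_n)|$. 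For $\alpha\in[\tfrac12,1)$ that same operator-order bound points the wrong way, so I would instead use the block structure $\rho'_n=\sum_{\lambda}\Pi_\lambda\rho_n\Pi_\lambda$: with $Y:=\hat\sigma_n^{\frac{1-\alpha}{2\alpha}}\rho_n\hat\sigma_n^{\frac{1-\alpha}{2\alpha}}\geq 0$, property~\eqref{it_pinch_move_when_commute} gives $\hat\sigma_n^{\frac{1-\alpha}{2\alpha}}\rho'_n\hat\sigma_n^{\frac{1-\alpha}{2\alpha}}=\sum_\lambda\Pi_\lambda Y\Pi_\lambda$, and since by Cauchy's interlacing theorem the eigenvalues of each block $\Pi_\lambda Y\Pi_\lambda$ are dominated by those of $Y$, we obtain $Q_{\alpha}(\rho'_n\|\hat\sigma_n)=\sum_\lambda\tr[(\Pi_\lambda Y\Pi_\lambda)^{\alpha}]\leq|\spec(\hat\sigma_n)|\,\tr[Y^{\alpha}]=|\spec(\hat\sigma_n)|\,Q_{\alpha}(\rho_n\|\hat\sigma_n)$; because $\tfrac1{\alpha-1}<0$ this again gives $D_{\alpha}(\rho'_n\|\hat\sigma_n)\geq D_{\alpha}(\rho_n\|\hat\sigma_n)-\tfrac1{1-\alpha}\log|\spec(\hat\sigma_n)|$. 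Finally, for $\alpha=1$ one computes $D(\rho_n\|\hat\sigma_n)-D(\rho'_n\|\hat\sigma_n)=D(\rho_n\|\rho'_n)\leq\log|\spec(\hat\sigma_n)|$, where the identity uses $\tr[\rho'_n\log\hat\sigma_n]=\tr[\rho_n\log\hat\sigma_n]$ and the bound uses $\rho_n\leq|\spec(\hat\sigma_n)|\rho'_n$ with operator monotonicity of $\log$. Combining these estimates with the displayed inequality, dividing by $n$, and letting $n\to\infty$ kills the error term and proves the claim.

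The main obstacle is precisely that pinching $\rho_n$ with respect to $\hat\sigma_n$ can only \emph{decrease} $D_{\alpha}(\cdot\|\hat\sigma_n)$ (by the DPI), so one needs a quantitative bound on that decrease, and the bound has a genuinely different source in the two regimes: for $\alpha>1$ it comes from inserting the operator-order pinching inequality \emph{inside} $Q_{\alpha}$, while for $\alpha<1$ that inequality has the wrong sign and one must instead exploit that the pinched state is block-diagonal and invoke Cauchy interlacing (with $\alpha=1$ handled by a direct entropy identity). The one external ingredient is the standard Schur--Weyl bound $|\spec(\hat\sigma_n)|=\poly(n)$, which is what makes the $\tfrac1n\log|\spec(\hat\sigma_n)|$ loss asymptotically negligible.
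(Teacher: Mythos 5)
Your proof is correct and follows essentially the same route as the paper: pinch $\rho_n$ with respect to $\hat\sigma_n$, apply the DPI for the measured quantity together with the commuting-case equality, and absorb the loss into $\log|\spec(\hat\sigma_n)|=O(\log n)$ using that $\hat\sigma_n$ is permutation-invariant. The only difference is that where the paper simply cites the standard bound $D_{\alpha}(\cP_{\hat\sigma}(\rho_n)\|\hat\sigma_n)\geq D_{\alpha}(\rho_n\|\hat\sigma_n)-2\log|\spec(\hat\sigma_n)|$ from Tomamichel's book, you rederive it from scratch with a (correct) case analysis over $\alpha>1$, $\alpha<1$, and $\alpha=1$.
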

\begin{proof}
Since $\hat{\sigma}_n$ is permutation-invariant, it holds~\cite[Lemma~2.8]{GEAT_24} that
\begin{align} \label{eq_poly_spec_renyi}
|\spec(\hat \sigma_n)  | = \poly(n) \, .
\end{align}
Therefore, we find
\begin{align}
\frac{1}{n}  D_{\alpha,\mathbb{M}}(\rho_n \| \hat \sigma_n)
\overset{\textnormal{DPI},\eqref{it_pinch_CPTP}}&{\geq}  \frac{1}{n}  D_{\alpha,\mathbb{M}}\big( \cP_{\hat\sigma}(\rho_n) \| \cP_{\hat\sigma}( \hat \sigma_n) \big) \\
\overset{\eqref{it_pinch_invariance}}&{=}  \frac{1}{n}  D_{\alpha,\mathbb{M}}\big( \cP_{\hat\sigma}(\rho_n) \|  \hat \sigma_n\big) \\
\overset{\eqref{it_pinch_commutation},\textnormal{\Cshref{eq_euality_DM_vs_D}},\eqref{eq_euality_DM_vs_D_ONE}}&{=}\frac{1}{n}  D_{\alpha}\big( \cP_{\hat\sigma}(\rho_n) \|  \hat \sigma_n \big) \\ 
\overset{\textnormal{\cite[Eq.~4.59]{marco_book}}}&{\geq}  \frac{1}{n}  D_{\alpha}( \rho_n \| \hat \sigma_n ) - \frac{2}{n} \log  |\spec(\hat \sigma_{A_1^n B_1^n})  | \\
\overset{\textnormal{\Cshref{eq_poly_spec_renyi}}}&{=} \frac{1}{n} D_{\alpha}(\rho_n \| \hat \sigma_n ) + \frac{o(n)}{n} \, ,
\end{align}
where the DPI for measured R\'enyi relative entropies has been proven in~\cite[Lemma~5]{RSB24}.
\end{proof}
\subsection{Properties of the commutator}
\begin{fact} \label{fact_commute_3_matrices}
Let $X,Y,Z \geq 0$ such that $[X,Y]=[X,Z]=0$, then $[X,YZ]=0$. 
\end{fact}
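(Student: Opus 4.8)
The plan is to derive the claim directly from associativity of matrix multiplication; the positive semidefiniteness of $X,Y,Z$ plays no role and is included in the statement only because that is the setting in which the fact is invoked elsewhere (e.g.\ in the proof of \cref{thm_tony_new_range}). First I would restate the hypotheses $[X,Y]=0$ and $[X,Z]=0$ in the equivalent product form $XY=YX$ and $XZ=ZX$.

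Then I would simply commute $X$ past $Y$ and subsequently past $Z$, using associativity to regroup:
\begin{align*}
X(YZ) = (XY)Z = (YX)Z = Y(XZ) = Y(ZX) = (YZ)X \, .
\end{align*}
This says exactly $X(YZ)-(YZ)X=0$, i.e.\ $[X,YZ]=0$, which is the assertion. There is essentially no obstacle here; the only point requiring a little care is bracketing the triple products correctly when moving $X$ through the two factors, which is handled by the chain of equalities above.
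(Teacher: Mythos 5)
Your proof is correct and is essentially the same computation as the paper's: both use $XY=YX$ to move $X$ past $Y$ and then $XZ=ZX$ to move it past $Z$ (the paper writes this as $XYZ-YZX=Y(XZ-ZX)=0$). Your remark that positive semidefiniteness is not needed is also accurate.
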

\begin{proof}
Using that $[X,Y]=[X,Z]=0$ yields
\begin{align}
[X,YZ] = XYZ-YZX = Y(XZ-ZX) = 0 \, .
\end{align}
\end{proof}

\begin{fact} \label{fact_commute_power}
Let $\alpha,\beta>0$ and $X,Y \geq 0$ such that $[X,Y]=0$, then $[X^{\alpha},Y^{\beta}]=0$. 
\end{fact}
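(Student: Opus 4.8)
The plan is to reduce the statement to the diagonal case via the spectral theorem. Since $X,Y\geq 0$ are in particular Hermitian and satisfy $[X,Y]=0$, they are simultaneously unitarily diagonalizable: there exists a unitary $U$ and diagonal matrices $D_X,D_Y\geq 0$ such that $X=U D_X U^\dagger$ and $Y=U D_Y U^\dagger$. Because $D_X,D_Y\geq 0$, the fractional powers $D_X^{\alpha}$ and $D_Y^{\beta}$ are well defined (entrywise on the diagonal) and are again diagonal, hence $[D_X^{\alpha},D_Y^{\beta}]=0$ trivially. Using that $X^{\alpha}=U D_X^{\alpha} U^\dagger$ and $Y^{\beta}=U D_Y^{\beta} U^\dagger$ by the definition of the matrix power through the spectral calculus, we conclude
\begin{align}
[X^{\alpha},Y^{\beta}] = U\,[D_X^{\alpha},D_Y^{\beta}]\,U^\dagger = 0 \, .
\end{align}

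An alternative route, avoiding explicit simultaneous diagonalization, is a functional-calculus argument: $[X,Y]=0$ implies that $Y$ commutes with every polynomial in $X$, and since $x\mapsto x^{\alpha}$ is continuous on the (compact, nonnegative) spectrum of $X$, it can be uniformly approximated there by polynomials, so $Y$ also commutes with $X^{\alpha}$; applying the same reasoning once more with the roles of $X^{\alpha}$ and $Y$ exchanged gives $[X^{\alpha},Y^{\beta}]=0$.

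There is no real obstacle here — the only point requiring care is the justification that fractional powers are defined via the spectral calculus, which is exactly where the hypothesis $X,Y\geq 0$ (rather than merely Hermitian) is used, and the standard fact that commuting Hermitian matrices admit a common eigenbasis.
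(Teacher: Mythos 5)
Your argument is correct and follows essentially the same route as the paper: commuting positive semidefinite matrices are simultaneously unitarily diagonalized, fractional powers are taken entrywise on the diagonal factors, and the commutator vanishes after conjugating back. The alternative polynomial-approximation remark is a valid extra, but the core proof matches the paper's.
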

\begin{proof}
Let $(\lambda_i)_i$ and $(\mu_i)_i$ denote the eigenvalues of $X$ and $Y$, respectively. Furthermore consider the diagonal matrices $\Lambda=\diag(\lambda_1,\ldots,\lambda_d)$ and $\Xi=\diag(\mu_1,\ldots,\mu_d)$. The condition $[X,Y]=0$ implies the existence of a unitary $U$ such that $X = U \Lambda U^\dagger$ and $Y=U \Xi U^\dagger$. Thus using that $X^\alpha=U \Lambda^\alpha U^\dagger$ and $Y^\beta=U \Xi^\beta U^\dagger$ yields
\begin{align}
[X^{\alpha},Y^{\beta}]
=X^{\alpha}Y^{\beta} - Y^{\beta} X^{\alpha}
= U \Lambda^{\alpha} U^{\dagger} U \Xi^{\beta} U^{\dagger} - U \Xi^{\beta} U^{\dagger} U  \Lambda^{\alpha} U^{\dagger}
=U (\Lambda^{\alpha}  \Xi^{\beta} - \Xi^{\beta}   \Lambda^{\alpha}) U^{\dagger}
=0 \, .
\end{align}
\end{proof}

\subsection{Continuity results}
\begin{lemma} \label{lem_limits}
Let $\rho_{AB} \in \St(A \otimes B)$ and $\sigma_A \in \St(A)$. Then the following statements hold:
\begin{enumerate}[(i)]
\item $\lim_{\alpha \nearrow 1} D_{\alpha}(\rho_A \| \sigma_A)=\lim_{\alpha \searrow 1} D_{\alpha}(\rho_A \| \sigma_A) = D(\rho_A \| \sigma_A)$. \label{it_1_continuity}
\item $\lim_{\alpha \nearrow 1} D_{\alpha,\mathbb{M}}(\rho_A \| \sigma_A) = D_{\mathbb{M}}(\rho_A \| \sigma_A)$.\label{it_2_continuity}
\item $\lim_{\alpha \nearrow 1} D_{\alpha}(\rho_{AB}\| \cC^{\sigma_A}_{AB}) = D(\rho_{AB}\| \cC^{\sigma_A}_{AB})$. \label{it_3_continuity}
\item $\lim_{\alpha \nearrow 1} D_{\alpha,\mathbb{M}}(\rho_{AB}\| \cC^{\sigma_A}_{AB}) = D_{\mathbb{M}}(\rho_{AB}\| \cC^{\sigma_A}_{AB})$. \label{it_3_new_continuity}
\item $\lim_{\alpha \searrow 1} D^{\infty}_{\alpha}(\rho_{AB}\| \cC^{\sigma_A}_{AB}) = D^{\infty}(\rho_{AB}\| \cC^{\sigma_A}_{AB})$. \label{it_4_continuity}
\end{enumerate}
\end{lemma}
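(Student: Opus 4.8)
The plan is to prove the five items in order, using the earlier ones to obtain the later ones.

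\emph{Items (i) and (ii).} Item~(i) is the well-known continuity of the sandwiched R\'enyi relative entropy in its order parameter at $\alpha=1$~\cite{MLDSFT13,marco_book}: writing $\Pi$ for the projector onto $\supp(\sigma_A)$, one has $Q_\alpha(\rho_A\|\sigma_A)\to\tr[\rho_A\Pi]$ as $\alpha\to1$ whenever the left-hand side is defined (i.e.\ when $\supp(\rho_A)\subseteq\supp(\sigma_A)$, or $\alpha<1$), so that either $Q_\alpha\to1$ and $D_\alpha=\tfrac{\log Q_\alpha}{\alpha-1}\to D(\rho_A\|\sigma_A)$ by L'H\^opital's rule, or $\supp(\rho_A)\not\subseteq\supp(\sigma_A)$ and then $\lim_{\alpha\nearrow1}D_\alpha=\lim_{\alpha\searrow1}D_\alpha=+\infty=D(\rho_A\|\sigma_A)$. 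For item~(ii) I would use $D_{\alpha,\mathbb{M}}(\rho_A\|\sigma_A)=\sup_{\cM}D_\alpha(\cM(\rho_A)\|\cM(\sigma_A))$: for each POVM channel $\cM$ the function $\alpha\mapsto D_\alpha(\cM(\rho_A)\|\cM(\sigma_A))$ is non-decreasing~\cite[Theorem~7]{MLDSFT13} and, by item~(i) applied to the classical pair $\cM(\rho_A),\cM(\sigma_A)$, converges to $D(\cM(\rho_A)\|\cM(\sigma_A))$ as $\alpha\nearrow1$, so its supremum over $\alpha<1$ is $D(\cM(\rho_A)\|\cM(\sigma_A))$; since two suprema commute, $\lim_{\alpha\nearrow1}D_{\alpha,\mathbb{M}}(\rho_A\|\sigma_A)=\sup_\cM D(\cM(\rho_A)\|\cM(\sigma_A))=D_{\mathbb{M}}(\rho_A\|\sigma_A)$. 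Both items hold verbatim with an arbitrary pair of states in place of $\rho_A,\sigma_A$, and $D_{\alpha,\mathbb{M}}$ is non-decreasing in $\alpha$ (being a supremum of such functions).

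\emph{Items (iii) and (iv).} Let $\mathbb{D}_\alpha$ stand for $D_\alpha$ or $D_{\alpha,\mathbb{M}}$. As $\mathbb{D}_\alpha$ is non-decreasing in $\alpha$, so is $\alpha\mapsto\mathbb{D}_\alpha(\rho_{AB}\|\cC^{\sigma_A}_{AB})=\inf_{\sigma_{AB}\in\cC^{\sigma_A}_{AB}}\mathbb{D}_\alpha(\rho_{AB}\|\sigma_{AB})$, so the limit $L:=\lim_{\alpha\nearrow1}\mathbb{D}_\alpha(\rho_{AB}\|\cC^{\sigma_A}_{AB})=\sup_{\alpha<1}\mathbb{D}_\alpha(\rho_{AB}\|\cC^{\sigma_A}_{AB})$ exists, and $L\le\mathbb{D}_1(\rho_{AB}\|\cC^{\sigma_A}_{AB})$ follows from $\mathbb{D}_\alpha\le\mathbb{D}_1$. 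For the reverse direction I would pick $\alpha_k\nearrow1$ with minimizers $\sigma^{(k)}\in\argmin_{\sigma_{AB}\in\cC^{\sigma_A}_{AB}}\mathbb{D}_{\alpha_k}(\rho_{AB}\|\sigma_{AB})$ (they exist since $\cC^{\sigma_A}_{AB}$ is compact and $\mathbb{D}_{\alpha_k}$ is lower semicontinuous~\cite[Proposition~III.11]{MH24}) and pass to a subsequence with $\sigma^{(k)}\to\sigma^\star\in\cC^{\sigma_A}_{AB}$. For fixed $\beta\in(\tfrac12,1)$ and $k$ large enough that $\alpha_k\ge\beta$, monotonicity in $\alpha$ gives $\mathbb{D}_\beta(\rho_{AB}\|\sigma^{(k)})\le\mathbb{D}_{\alpha_k}(\rho_{AB}\|\sigma^{(k)})=\mathbb{D}_{\alpha_k}(\rho_{AB}\|\cC^{\sigma_A}_{AB})\le L$; sending $k\to\infty$ and using lower semicontinuity of $\sigma_{AB}\mapsto\mathbb{D}_\beta(\rho_{AB}\|\sigma_{AB})$ gives $\mathbb{D}_\beta(\rho_{AB}\|\sigma^\star)\le L$; finally sending $\beta\nearrow1$ and applying item~(i) (resp.\ item~(ii)) to $\rho_{AB},\sigma^\star$ gives $\mathbb{D}_1(\rho_{AB}\|\cC^{\sigma_A}_{AB})\le\mathbb{D}_1(\rho_{AB}\|\sigma^\star)\le L$, hence equality. (Alternatively, the reverse direction is a Sion-type~\cite{Sion58} interchange of $\sup_{\alpha<1}$ and $\inf_{\sigma_{AB}}$, using compactness and convexity of $\cC^{\sigma_A}_{AB}$, convexity and lower semicontinuity of $\sigma_{AB}\mapsto\mathbb{D}_\alpha(\rho_{AB}\|\sigma_{AB})$, and monotonicity in $\alpha$.)

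\emph{Item (v).} Being an infimum over $n$ of the non-decreasing maps $\alpha\mapsto\tfrac1n D_\alpha(\rho_{AB}^{\otimes n}\|\cC^{\sigma_A}_{AB,n})$, the map $\alpha\mapsto D^{\infty}_\alpha(\rho_{AB}\|\cC^{\sigma_A}_{AB})$ is non-decreasing, so $\lim_{\alpha\searrow1}D^{\infty}_\alpha(\rho_{AB}\|\cC^{\sigma_A}_{AB})=\inf_{\alpha>1}\inf_n\tfrac1n D_\alpha(\rho_{AB}^{\otimes n}\|\cC^{\sigma_A}_{AB,n})$. Interchanging the two infima (always permissible) and using once more that $\alpha\mapsto D_\alpha(\rho_{AB}^{\otimes n}\|\cC^{\sigma_A}_{AB,n})$ is non-decreasing, this equals $\inf_n\tfrac1n\lim_{\alpha\searrow1}D_\alpha(\rho_{AB}^{\otimes n}\|\cC^{\sigma_A}_{AB,n})$. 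It therefore suffices to prove, for each $n$, the single-letter statement $\lim_{\alpha\searrow1}D_\alpha(\rho_{AB}^{\otimes n}\|\cC^{\sigma_A}_{AB,n})=D(\rho_{AB}^{\otimes n}\|\cC^{\sigma_A}_{AB,n})$: here ``$\ge$'' holds because $D_\alpha\ge D_1$ pointwise in the second argument, while for ``$\le$'' one distinguishes two cases. If the right-hand side is $+\infty$, then $\supp(\rho_A)\not\subseteq\supp(\sigma_A)$, which forces $D_\alpha(\rho_{AB}^{\otimes n}\|\sigma_n)=+\infty$ for every $\sigma_n\in\cC^{\sigma_A}_{AB,n}$ and every $\alpha>1$; otherwise a minimizer $\sigma_n^\star$ of $D(\rho_{AB}^{\otimes n}\|\cdot)$ satisfies $\supp(\rho_{AB}^{\otimes n})\subseteq\supp(\sigma_n^\star)$, so item~(i) applied to $\rho_{AB}^{\otimes n},\sigma_n^\star$ yields $\lim_{\alpha\searrow1}D_\alpha(\rho_{AB}^{\otimes n}\|\cC^{\sigma_A}_{AB,n})\le\lim_{\alpha\searrow1}D_\alpha(\rho_{AB}^{\otimes n}\|\sigma_n^\star)=D(\rho_{AB}^{\otimes n}\|\sigma_n^\star)=D(\rho_{AB}^{\otimes n}\|\cC^{\sigma_A}_{AB,n})$. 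Finally, $n\mapsto D(\rho_{AB}^{\otimes n}\|\cC^{\sigma_A}_{AB,n})$ is subadditive (tensor products of extensions of $\sigma_A^{\otimes n}$ and of $\sigma_A^{\otimes m}$ are extensions of $\sigma_A^{\otimes n+m}$, and the relative entropy is additive under tensor products), so Fekete's lemma~\cite{fekete23} gives $\inf_n\tfrac1n D(\rho_{AB}^{\otimes n}\|\cC^{\sigma_A}_{AB,n})=D^{\infty}(\rho_{AB}\|\cC^{\sigma_A}_{AB})$, exactly as in the case $\alpha\ne1$; this completes the proof.

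\emph{Main obstacle.} The continuity statements (i)--(ii) are essentially classical; the real work is the interchange of $\lim_{\alpha\to1}$ with the infimum over $\cC^{\sigma_A}_{AB}$ in items~(iii)--(iv), which cannot be done directly and forces one to pass to a convergent subsequence of $\alpha$-dependent optimizers and combine monotonicity in $\alpha$ with lower semicontinuity in the second argument, while keeping track of the degenerate $+\infty$ cases arising from non-nested supports.
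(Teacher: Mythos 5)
Your proposal is correct, but it takes a more self-contained route than the paper, which establishes items (i)--(iv) essentially by citation: (i) is quoted from~\cite[Theorem~5]{MLDSFT13}, (ii) and (iv) from~\cite[Lemma~23]{FFF24}, and (iii) is stated to follow by the same proof as~\cite[Lemma~23]{FFF24}. You instead re-derive (i) via continuity of $Q_\alpha$ and L'H\^opital, obtain (ii) by exchanging the two suprema over $\alpha$ and over POVMs, and -- this is the genuinely nontrivial part -- prove the hard direction of (iii)--(iv) (a $\sup_{\alpha<1}$--$\inf_{\sigma_{AB}}$ exchange) by extracting a convergent subsequence of $\alpha$-dependent minimizers and combining monotonicity in $\alpha$ with lower semicontinuity in the second argument; this is sound (including the degenerate $+\infty$ cases) and gives the reader an argument the paper only points to externally. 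For item (v) your proof is essentially the paper's: monotonicity in $\alpha$, the identification of the regularization with $\inf_n$ via Fekete, and an interchange of infima. The only difference is that the paper swaps $\inf_{\alpha>1}$ directly with $\min_{\sigma_{A_1^nB_1^n}\in\cC^{\sigma_A}_{AB,n}}$ (two infima always commute) and then applies item (i) pointwise in the extension, which avoids your case distinction on whether $D(\rho_{AB}^{\otimes n}\|\cC^{\sigma_A}_{AB,n})$ is finite and the appeal to an optimizer at $\alpha=1$; you could streamline your argument the same way, though your version is equally valid. Note that this inf--inf commutation trick is exactly what is \emph{not} available in (iii)--(iv), where the limit is a supremum over $\alpha<1$, which is why the compactness (or Sion-type) argument is needed there.
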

\begin{proof}
The statement~\eqref{it_1_continuity} is proven in~\cite[Theorem~5]{MLDSFT13}. Statements~\eqref{it_2_continuity} and~\eqref{it_3_new_continuity} are shown in~\cite[Lemma~23]{FFF24}. 
Property~\eqref{it_3_continuity} follows by the same proof given in~\cite[Lemma~23]{FFF24}.
It thus remains to prove~\eqref{it_4_continuity}. 
The monotonicity of $\alpha \mapsto D_{\alpha}$~\cite[Theorem~7]{MLDSFT13} implies
\begin{align}
\lim_{\alpha \searrow 1} D_{\alpha}^{\infty}(\rho_{AB}\| \cC^{\sigma_A}_{AB})
\overset{\textnormal{\Cshref{eq_Da_regul}}}&{=}\lim_{\alpha \searrow 1} \inf_{n\in \N} \frac{1}{n} D_{\alpha}(\rho_{AB}^{\otimes n} \| \cC^{\sigma_A}_{AB,n}) 
=\inf_{\alpha > 1} \inf_{n\in \N} \frac{1}{n}D_{\alpha}(\rho_{AB}^{\otimes n}\| \cC^{\sigma_A}_{AB,n}) \, . \label{eq_pf_alt_step1}
\end{align}
Since two infima can always be interchanged we find
\begin{align}
\lim_{\alpha \searrow 1} D_{\alpha}^{\infty}(\rho_{AB}\| \cC^{\sigma_A}_{AB})
\overset{\textnormal{\Cshref{eq_pf_alt_step1}}}&{=}\inf_{n\in \N} \inf_{\alpha > 1}  \frac{1}{n}D_{\alpha}(\rho_{AB}^{\otimes n} \| \cC^{\sigma_A}_{AB,n}) \\
\overset{\textnormal{\Cshref{eq_def_Ua}}}&{=}\inf_{n\in \N} \inf_{\alpha > 1} \min_{\sigma_{A_1^n B_1^n} \in \cC^{\sigma_A}_{AB,n}}  \frac{1}{n}D_{\alpha}(\rho_{AB}^{\otimes n}\|\sigma_{A_1^n B_1^n}) \\
&=\inf_{n\in \N} \min_{\sigma_{A_1^n B_1^n} \in \cC^{\sigma_A}_{AB,n}} \inf_{\alpha > 1}   \frac{1}{n}D_{\alpha}(\rho_{AB}^{\otimes n}\|\sigma_{A_1^n B_1^n})\\
\overset{(\dagger)}&=\inf_{n\in \N} \min_{\sigma_{A_1^n B_1^n} \in \cC^{\sigma_A}_{AB,n}}\frac{1}{n}D(\rho_{AB}^{\otimes n}\|\sigma_{A_1^n B_1^n}) \\
\overset{\textnormal{\Cshref{eq_def_Ua}}}&{=} \inf_{n\in \N} \frac{1}{n}D(\rho_{AB}^{\otimes n} \| \cC^{\sigma_A}_{AB,n}) \\
\overset{\textnormal{\Cshref{eq_Da_regul}}}&{=}D^{\infty}(\rho_{AB}\| \cC^{\sigma_A}_{AB}) \, , \label{eq_alt_pf_done}
\end{align}
where the step ($\dagger$) uses again the monotonicity of $\alpha \mapsto D_{\alpha}$ and~\eqref{it_1_continuity}. 
\end{proof}

\begin{lemma} \label{lem_limits_infty}
Let $\rho_{AB} \in \St(A \otimes B)$ and $\sigma_A \in \St(A)$. Then the following statements hold:
\begin{enumerate}[(i)]
\item $\lim_{\alpha \to \infty} D_{\alpha}(\rho_A \| \sigma_A)= \sup_{\alpha>1} D_{\alpha}(\rho_A \| \sigma_A) = D_{\max}(\rho_A \| \sigma_A)$. \label{it_1_continuity_infty}
\item $\lim_{\alpha \to \infty} D_{\alpha,\mathbb{M}}(\rho_A \| \sigma_A) = D_{\max,\mathbb{M}}(\rho_A \| \sigma_A)=D_{\max}(\rho_A \| \sigma_A)$.\label{it_2_continuity_infty}
\item $\lim_{\alpha \to \infty} D_{\alpha}(\rho_{AB} \| \cC^{\sigma_A}_{AB}) = D_{\max}(\rho_{AB} \| \cC^{\sigma_A}_{AB})$.\label{it_3_continuity_infty}
\item $\lim_{\alpha \to \infty} D_{\alpha,\mathbb{M}}(\rho_{AB} \| \cC^{\sigma_A}_{AB}) = D_{\max,\mathbb{M}}(\rho_{AB} \| \cC^{\sigma_A}_{AB}) = D_{\max}(\rho_{AB} \| \cC^{\sigma_A}_{AB})$.\label{it_4_continuity_infty}
\end{enumerate}
\end{lemma}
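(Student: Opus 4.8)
The plan is to reduce all four statements to two known facts about the R\'enyi divergences between \emph{fixed} states, together with the monotonicity of $\alpha \mapsto D_\alpha$ and $\alpha\mapsto D_{\alpha,\mathbb{M}}$: the first fact is that $\lim_{\alpha\to\infty}D_\alpha(\rho\|\sigma)=\sup_{\alpha>1}D_\alpha(\rho\|\sigma)=D_{\max}(\rho\|\sigma)$ \cite{MLDSFT13}, and the second is $D_{\max,\mathbb{M}}=D_{\max}$ \cite[Appendix~A]{MO15}. Parts~(i) and~(ii) are then essentially immediate; the real content of parts~(iii) and~(iv) is the interchange of the supremum over $\alpha>1$ with the infimum over extensions $\sigma_{AB}\in\cC^{\sigma_A}_{AB}$.

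Concretely, (i) is \cite[Theorem~5 and~7]{MLDSFT13}. For~(ii), for every POVM channel $\cM$ the map $\alpha\mapsto D_\alpha(\cM(\rho_A)\|\cM(\sigma_A))$ is nondecreasing by \cite[Theorem~7]{MLDSFT13}, and a supremum of nondecreasing functions is nondecreasing, so $\lim_{\alpha\to\infty}D_{\alpha,\mathbb{M}}(\rho_A\|\sigma_A)=\sup_{\alpha>1}D_{\alpha,\mathbb{M}}(\rho_A\|\sigma_A)=\sup_{\cM}\sup_{\alpha>1}D_\alpha(\cM(\rho_A)\|\cM(\sigma_A))=\sup_{\cM}D_{\max}(\cM(\rho_A)\|\cM(\sigma_A))=D_{\max,\mathbb{M}}(\rho_A\|\sigma_A)$, where the third equality applies part~(i) to the density matrices $\cM(\rho_A),\cM(\sigma_A)$ and the last is the definition \cref{eq_measured_divergence}; finally $D_{\max,\mathbb{M}}(\rho_A\|\sigma_A)=D_{\max}(\rho_A\|\sigma_A)$ by \cite[Appendix~A]{MO15}.

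For~(iii), first note that $\alpha\mapsto D_\alpha(\rho_{AB}\|\cC^{\sigma_A}_{AB})=\min_{\sigma_{AB}\in\cC^{\sigma_A}_{AB}}D_\alpha(\rho_{AB}\|\sigma_{AB})$ is a pointwise infimum of nondecreasing functions, hence nondecreasing, so the limit exists and equals $\sup_{\alpha>1}D_\alpha(\rho_{AB}\|\cC^{\sigma_A}_{AB})$. For the upper bound, fix a minimizer $\sigma^\star_{AB}$ of $D_{\max}(\rho_{AB}\|\cdot)$ over $\cC^{\sigma_A}_{AB}$ (it exists by lower semicontinuity \cite[Proposition~III.11]{MH24} and compactness) and use $D_\alpha(\rho_{AB}\|\cC^{\sigma_A}_{AB})\le D_\alpha(\rho_{AB}\|\sigma^\star_{AB})\le D_{\max}(\rho_{AB}\|\sigma^\star_{AB})=D_{\max}(\rho_{AB}\|\cC^{\sigma_A}_{AB})$. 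For the lower bound, let $\sigma^{(\alpha)}_{AB}\in\cC^{\sigma_A}_{AB}$ attain $D_\alpha(\rho_{AB}\|\cC^{\sigma_A}_{AB})$, pass to a sequence $\alpha_k\to\infty$ with $\sigma^{(\alpha_k)}_{AB}\to\bar\sigma_{AB}\in\cC^{\sigma_A}_{AB}$ (compactness), and fix an arbitrary $\beta>1$; for $\alpha_k\ge\beta$, monotonicity of $\alpha\mapsto D_\alpha(\rho_{AB}\|\cdot)$ gives $D_{\alpha_k}(\rho_{AB}\|\cC^{\sigma_A}_{AB})=D_{\alpha_k}(\rho_{AB}\|\sigma^{(\alpha_k)}_{AB})\ge D_\beta(\rho_{AB}\|\sigma^{(\alpha_k)}_{AB})$, and letting $k\to\infty$ together with lower semicontinuity of $D_\beta(\rho_{AB}\|\cdot)$ yields $\sup_{\alpha>1}D_\alpha(\rho_{AB}\|\cC^{\sigma_A}_{AB})\ge D_\beta(\rho_{AB}\|\bar\sigma_{AB})$; supremizing over $\beta>1$ and applying part~(i) to the fixed state $\bar\sigma_{AB}$ gives $\sup_{\alpha>1}D_\alpha(\rho_{AB}\|\cC^{\sigma_A}_{AB})\ge D_{\max}(\rho_{AB}\|\bar\sigma_{AB})\ge D_{\max}(\rho_{AB}\|\cC^{\sigma_A}_{AB})$. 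Alternatively one can commute $\sup_{\alpha>1}$ and $\min_{\sigma_{AB}}$ in a single step via Sion's minimax theorem \cite{Sion58}, since $\cC^{\sigma_A}_{AB}$ is convex and compact, $\alpha\mapsto D_\alpha(\rho_{AB}\|\sigma_{AB})$ is continuous and monotone (hence quasi-concave), and $\sigma_{AB}\mapsto D_\alpha(\rho_{AB}\|\sigma_{AB})$ is lower semicontinuous and quasi-convex.

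Part~(iv) follows by the identical scheme with $D_{\alpha,\mathbb{M}}$ in place of $D_\alpha$: monotonicity of $\alpha\mapsto D_{\alpha,\mathbb{M}}$ (from part~(ii)) makes $\alpha\mapsto D_{\alpha,\mathbb{M}}(\rho_{AB}\|\cC^{\sigma_A}_{AB})$ nondecreasing, the upper bound uses a minimizer of $D_{\max,\mathbb{M}}(\rho_{AB}\|\cdot)$ together with part~(ii), and the lower bound uses the same compactness argument, where the lower semicontinuity of $\sigma_{AB}\mapsto D_{\beta,\mathbb{M}}(\rho_{AB}\|\sigma_{AB})$ needed along the way follows, e.g., from the variational formula \cref{eq_varFormula_DM_a_2} (a supremum over $\tau>0$ of functions continuous in $\sigma_{AB}$). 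This gives $\lim_{\alpha\to\infty}D_{\alpha,\mathbb{M}}(\rho_{AB}\|\cC^{\sigma_A}_{AB})=D_{\max,\mathbb{M}}(\rho_{AB}\|\cC^{\sigma_A}_{AB})$, and finally $D_{\max,\mathbb{M}}(\rho_{AB}\|\cC^{\sigma_A}_{AB})=\min_{\sigma_{AB}}D_{\max,\mathbb{M}}(\rho_{AB}\|\sigma_{AB})=\min_{\sigma_{AB}}D_{\max}(\rho_{AB}\|\sigma_{AB})=D_{\max}(\rho_{AB}\|\cC^{\sigma_A}_{AB})$ by \cite[Appendix~A]{MO15}. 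The only step that is not pure bookkeeping is this interchange of $\sup_\alpha$ and $\inf_{\sigma_{AB}}$ in~(iii) and~(iv); I expect the compactness/lower-semicontinuity argument above to dispatch it cleanly, the one mild subtlety being the lower semicontinuity of the measured quantities in the second argument.
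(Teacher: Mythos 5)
Your proof is correct, but it takes a genuinely different route from the paper at the one step that matters: the interchange of $\sup_{\alpha>1}$ with $\min_{\sigma_{AB}\in\cC^{\sigma_A}_{AB}}$ in parts~(iii) and~(iv). The paper dispatches this in a single line by invoking a ready-made interchange result, \cite[Corollary~A.2]{MH11}, applied to the family $\alpha\mapsto D_{\alpha}(\rho_{AB}\|\sigma_{AB})$ (monotone in $\alpha$, lower semicontinuous in $\sigma_{AB}$, minimized over a compact convex set), and then notes that (iv) ``follows by the same argument''; parts~(i) and~(ii) are pure citations to \cite{MLDSFT13} and \cite{MO15}. You instead re-derive the interchange by hand: extract a convergent subsequence of minimizers $\sigma^{(\alpha_k)}_{AB}\to\bar\sigma_{AB}$, use monotonicity in $\alpha$ to drop down to a fixed $\beta>1$, apply lower semicontinuity of $\sigma\mapsto D_\beta(\rho_{AB}\|\sigma)$, and then supremize over $\beta$ --- which is essentially a self-contained proof of the special case of \cite[Corollary~A.2]{MH11} that is needed here. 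What your version buys is transparency and explicitness, in particular for part~(iv), where you correctly flag and resolve the only real subtlety (lower semicontinuity of $\sigma_{AB}\mapsto D_{\beta,\mathbb{M}}(\rho_{AB}\|\sigma_{AB})$, obtained from the variational formula \cref{eq_varFormula_DM_a_2} as a supremum of continuous functions) --- a point the paper leaves implicit in its ``same argument'' remark; what the paper's route buys is brevity. Your additional derivation of (ii) by interchanging the two suprema is also fine, and your Sion-based alternative would work too, though the quasi-convexity of $\sigma_{AB}\mapsto D_\alpha(\rho_{AB}\|\sigma_{AB})$ asserted there deserves a word of justification (e.g.\ via the DPI applied to states with a classical flag register); since it is only offered as an alternative, this does not affect the validity of your main argument.
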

\begin{proof}
Statements~\eqref{it_1_continuity_infty} and~\eqref{it_2_continuity_infty} are proven in~\cite[Theorems~5 and~7]{MLDSFT13} and~\cite[Appendix A]{MO15}, respectively.
Property~\eqref{it_3_continuity_infty} follows by noting that $\alpha \mapsto D_{\alpha}(\rho_{AB} \| \cC^{\sigma_A}_{AB})$ is monotonically increasing~\cite[Theorem~7]{MLDSFT13}. Hence,
\begin{align}
\lim_{\alpha \to \infty} D_{\alpha}(\rho_{AB} \| \cC^{\sigma_A}_{AB})
&=\sup_{\alpha>1} \min_{\sigma_{AB} \in \cC^{\sigma_A}_{AB}} D_{\alpha}(\rho_{AB} \| \sigma_{AB} ) \\
\overset{\textnormal{\cite[Corollary A.2]{MH11}}}&{=} \min_{\sigma_{AB} \in \cC^{\sigma_A}_{AB}} \sup_{\alpha>1}  D_{\alpha}(\rho_{AB} \| \sigma_{AB} ) \\
&=\min_{\sigma_{AB} \in \cC^{\sigma_A}_{AB}} \lim_{\alpha\to \infty}  D_{\alpha}(\rho_{AB} \| \sigma_{AB} )\\
&=\min_{\sigma_{AB} \in \cC^{\sigma_A}_{AB}}  D_{\max}(\rho_{AB} \| \sigma_{AB} )\\
&=D_{\max}(\rho_{AB} \| \cC^{\sigma_A}_{AB}) \, .
\end{align}
Statement~\eqref{it_4_continuity_infty} follows by the same argument.
\end{proof}

\section{Alternative proof of~\cref{thm_key}} \label{sec_alternative_pf_thm_key}
In this section we present an alternative proof for the assertion of~\cref{thm_key} which has the benefit of yielding an explicit expression for optimizer in $D_{\mathbb{M}}(\rho_{AB}\| \cC^{\sigma_A}_{AB}) $. 
The key mathematical tool for the proof is the Golden-Thompson inequality~\cite{golden65,thompson65} which states that for any two positive definite matrices $A,B > 0$ we have
\begin{align}
\tr[\ee^{\log A + \log B}] \leq \tr[A B] \, .
\end{align}
This inequality has been extended to arbitrarily many matrices in~\cite{SBT16,Hiai2017}. For example, for the case of four positive definite matrices $A,B,C,D > 0$ the multivariate Golden-Thompson inequality states that
\begin{align} \label{eq_4_GT}
\tr[\ee^{\log A + \log B + \log C + \log D}] \leq  \int_{-\infty}^{\infty} \beta_0(t)  \tr[A B^{\frac{1+\ci t}{2}}C^{\frac{1+\ci t}{2}} D  C^{\frac{1-\ci t}{2}} B^{\frac{1-\ci t}{2}}] \, ,
\end{align}
where 
\begin{align}
\beta_0(t):= \frac{\pi}{2} \big(\cosh(\pi t) +1	 \big)^{-1}
\end{align}
is a probability distribution on $\R$.
The interested reader can find more information on the $n$-matrix Golden-Thompson inequality in~\cite{Sutter_book}.

With the help of the multivariate Golden-Thompson inequality, we can now prove the assertion of~\cref{thm_key}.
For the following argument we assume without loss of generality that $\rho_{AB}$ and $\sigma_A$ have full rank.\footnote{If this is not the case one may replace $\rho_{AB}$ and $\sigma_A$ by \smash{$(1-\eps)\rho_{AB} + \eps \frac{\id_{AB}}{d_A d_B} $} and \smash{$(1-\eps)\sigma_A + \eps\frac{\id_A}{d_A}$}, respectively for $\eps >0$. The claim is then obtained in the limit $\eps \to 0$.}
The first inequality in~\cref{eq_lemma_important} follows from the DPI for the measured relative entropy~\cite[Proposition~2.35]{Sutter_book}. To see the second inequality note that
\begin{align} \label{eq_bar_sigma_old}
\bar \sigma_{AB}:=\int_{-\infty}^{\infty} \beta_0(t) \sigma_{A}^{\frac{1+\ci t}{2}} \rho_{A}^{-\frac{1+\ci t}{2}}   \rho_{AB} \rho_{A}^{-\frac{1-\ci t}{2}} \sigma_{A}^{\frac{1-\ci t}{2}}  \di t \, ,
\end{align}
is an extension of $\sigma_A$.\footnote{The matrix $\bar \sigma_{AB}$ defined above is clearly positive semidefinite. In addition, it satisfies $\tr_B[\bar \sigma_{AB}] = \sigma_A$. }
Hence,
\begin{align}
D_{\mathbb{M}}(\rho_{AB}\| \cC^{\sigma_A}_{AB}) 
\overset{\textnormal{\Cshref{eq_def_UM}}}&{=} \min_{\sigma_{AB}} D_{\mathbb{M}}(\rho_{AB} \| \sigma_{AB}) \\
\overset{\textnormal{\Cshref{eq_bar_sigma_old}}}&{\leq} D_{\mathbb{M}}(\rho_{AB} \|  \bar \sigma_{AB}) \\
\overset{\textnormal{\Cshref{eq_varFormula_DM}}}&{=}  \max_{\tau_{AB} >0}\{ \tr[\rho_{AB} \log \tau_{AB}] - \log \tr[\bar \sigma_{AB} \tau_{AB} ] \} \\
\overset{\textnormal{\Cshref{eq_bar_sigma_old}}}&{=}   \max_{\tau_{AB} >0}\Big \{ \tr[\rho_{AB} \log \tau_{AB}] - \log \int_{-\infty}^{\infty} \beta_0(t)  \tr[ \sigma_{A}^{\frac{1+\ci t}{2}} \rho_{A}^{-\frac{1+\ci t}{2}}   \rho_{AB} \rho_{A}^{-\frac{1-\ci t}{2}} \sigma_{A}^{\frac{1-\ci t}{2}}  \tau_{AB} ] \Big\} \\
\overset{\textnormal{\Cshref{eq_4_GT}}}&{\leq} \max_{\tau_{AB} >0}\{ \tr[\rho_{AB} \log \tau_{AB}] - \log \tr[\ee^{ \log \sigma_{A} -  \log \rho_A + \log \rho_{AB} + \log \tau_{AB}}] \} \\
\overset{\textnormal{\Cshref{eq_varFormula_D}}}&{=} D\big(\rho_{AB} \| \ee^{\log\sigma_{A} -  \log \rho_A + \log \rho_{AB}} \big) \\
\overset{\textnormal{\Cshref{eq_def_D}}}&{=}\tr[\rho_{AB} \log \rho_{AB}] - \tr[\rho_{AB}(\log \sigma_{A}- \log\rho_A+ \log\rho_{AB})] \\
\overset{\textnormal{\Cshref{eq_def_D}}}&{=}D(\rho_A \| \sigma_A) \, . 
\end{align} 
\qedhere
\section{Optimal extension is not pure in general} \label{app_counterexample_pure}
Consider the states $\rho_{AB} = \proj{\phi}_{AB}$ for $\ket{\phi}_{AB}= \sqrt{\frac{1}{4}} \ket{00}_{AB} + \sqrt{\frac{3}{4}} \ket{11}_{AB}$ and $\sigma_{A}=\frac{1}{2} \id_2$. For this setting we have
\begin{align}
\log \Big(\frac{3}{2} \Big) \overset{(\triangle)}{=} D_{\max}(\rho_A \| \sigma_A) < \min_{\ket{\psi}_{AB}:\tr_{B}[\proj{\psi}_{AB}]=\sigma_A} D_{\max}(\proj{\phi}_{AB} \| \proj{\psi}_{AB})\overset{(\star)}{=} \infty \, .
\end{align}
This shows that, unlike for the min-relative entropy in case $\rho_{AB}$ is pure, the best extension of $\sigma_{A}$ cannot be chosen to be pure as well for the max-relative entropy. The step $(\triangle)$ can be verified by direct calculation via the formula
\begin{align}
D_{\max}(\rho_A \| \sigma_A) = \log \norm{\sigma_A^{-\frac{1}{2}} \rho_A \sigma_A^{-\frac{1}{2}}} = \log \Big(\frac{3}{2} \Big)  \, , 
\end{align}
where $\norm{\cdot}$ denotes the operator norm. Step $(\star)$ follows by noting that
\begin{align}
\min_{\substack{\ket{\psi}_{AB}:\\ \tr_{B}[\proj{\psi}_{AB}]=\sigma_A}} \!\!\! D_{\max}(\proj{\phi}_{AB} \| \proj{\psi}_{AB}) 
=\min_{\substack{\ket{\psi}_{AB}:\\ \tr_{B}[\proj{\psi}_{AB}]=\sigma_A}} \!\!\! \inf\{\lambda: \proj{\phi}_{AB} \leq 2^{\lambda} \proj{\psi}_{AB} \}
= \infty \, ,
\end{align}
where the final step uses that the infimum over $\lambda$ can only take a finite value if $\ket{\phi}$ and $\ket{\psi}$ are parallels, which is not the case because $\ket{\psi}$ and $\ket{\phi}$ are purifications of $\sigma_A$ and $\rho_A$, respectively, where $\sigma_A \ne \rho_A$. 
\bibliographystyle{arxiv_no_month}
\bibliography{bibliofile}

\begin{thebibliography}{10}

\bibitem{araki90}
H.~Araki.
\newblock On an inequality of {L}ieb and {T}hirring.
\newblock {\em Letters in Mathematical Physics}, 19(2):167--170, 1990.
\newblock
  \texttt{\href{http://dx.doi.org/10.1007/BF01045887}{DOI:\,10.1007/BF01045887}}.

\bibitem{FBT15}
M.~Berta, O.~Fawzi, and M.~Tomamichel.
\newblock On variational expressions for quantum relative entropies.
\newblock {\em Letters in Mathematical Physics}, 107(12):2239--2265, 2017.
\newblock
  \texttt{\href{http://dx.doi.org/10.1007/s11005-017-0990-7}{DOI:\,10.1007/s11005-017-0990-7}}.

\bibitem{boyd_book}
S.~Boyd and L.~Vandenberghe.
\newblock {\em Convex {O}ptimization}.
\newblock Cambridge University Press, 2004.
\newblock
  \texttt{\href{http://dx.doi.org/10.1017/CBO9780511804441}{DOI:\,10.1017/CBO9780511804441}}.

\bibitem{BHOS14}
F.~G.~S.~L. Brand\~ao, A.~W. Harrow, J.~Oppenheim, and S.~Strelchuk.
\newblock Quantum conditional mutual information, reconstructed states, and
  state redistribution.
\newblock {\em Physical Review Letters}, 115(5):050501, 2015.
\newblock
  \texttt{\href{http://dx.doi.org/10.1103/PhysRevLett.115.050501}{DOI:\,10.1103/PhysRevLett.115.050501}}.

\bibitem{deFinetti_CKMR_07}
M.~Christandl, R.~K{\"o}nig, G.~Mitchison, and R.~Renner.
\newblock One-and-a-half quantum de {F}inetti theorems.
\newblock {\em Communications in Mathematical Physics}, 273(2):473--498, 2007.
\newblock
  \texttt{\href{http://dx.doi.org/10.1007/s00220-007-0189-3}{DOI:\,10.1007/s00220-007-0189-3}}.

\bibitem{postsel09}
M.~Christandl, R.~K\"onig, and R.~Renner.
\newblock Postselection technique for quantum channels with applications to
  quantum cryptography.
\newblock {\em Phys. Rev. Lett.}, 102:020504, 2009.
\newblock
  \texttt{\href{http://dx.doi.org/10.1103/PhysRevLett.102.020504}{DOI:\,10.1103/PhysRevLett.102.020504}}.

\bibitem{datta09_b}
N.~Datta.
\newblock Max-relative entropy of entanglement, alias log robustness.
\newblock {\em International Journal of Quantum Information}, 07(02):475--491,
  2009.
\newblock
  \texttt{\href{http://dx.doi.org/10.1142/S0219749909005298}{DOI:\,10.1142/S0219749909005298}}.

\bibitem{donald86}
M.~J. Donald.
\newblock On the relative entropy.
\newblock {\em Communications in Mathematical Physics}, 105(1):13--34, 1986.
\newblock
  \texttt{\href{http://dx.doi.org/10.1007/BF01212339}{DOI:\,10.1007/BF01212339}}.

\bibitem{FFF24}
K.~Fang, H.~Fawzi, and O.~Fawzi.
\newblock Generalized quantum asymptotic equipartition, 2024.
\newblock Available online: \url{https://arxiv.org/abs/2411.04035}.

\bibitem{Fawzi_2018}
H.~Fawzi and O.~Fawzi.
\newblock Efficient optimization of the quantum relative entropy.
\newblock {\em Journal of Physics A: Mathematical and Theoretical},
  51(15):154003, 2018.
\newblock
  \texttt{\href{http://dx.doi.org/10.1088/1751-8121/aab285}{DOI:\,10.1088/1751-8121/aab285}}.

\bibitem{FR14}
O.~Fawzi and R.~Renner.
\newblock Quantum conditional mutual information and approximate {M}arkov
  chains.
\newblock {\em Communications in Mathematical Physics}, 340(2):575--611, 2015.
\newblock
  \texttt{\href{http://dx.doi.org/10.1007/s00220-015-2466-x}{DOI:\,10.1007/s00220-015-2466-x}}.

\bibitem{fekete23}
M.~Fekete.
\newblock {\"U}ber die {V}erteilung der {W}urzeln bei gewissen algebraischen
  {G}leichungen mit ganzzahligen {K}oeffizienten.
\newblock {\em Mathematische Zeitschrift}, 17(1):228--249, 1923.

\bibitem{LF13}
R.~L. Frank and E.~H. Lieb.
\newblock Monotonicity of a relative {R}{\'e}nyi entropy.
\newblock {\em Journal of Mathematical Physics}, 54(12):122201, 2013.
\newblock
  \texttt{\href{http://dx.doi.org/10.1063/1.4838835}{DOI:\,10.1063/1.4838835}}.

\bibitem{golden65}
S.~Golden.
\newblock Lower bounds for the {H}elmholtz function.
\newblock {\em Physical Review}, 137:B1127--B1128, 1965.
\newblock
  \texttt{\href{http://dx.doi.org/10.1103/PhysRev.137.B1127}{DOI:\,10.1103/PhysRev.137.B1127}}.

\bibitem{hayashi_book2}
M.~Hayashi.
\newblock {\em Quantum Information Theory}.
\newblock Springer, 2016.
\newblock
  \texttt{\href{http://dx.doi.org/10.1007/978-3-662-49725-8}{DOI:\,10.1007/978-3-662-49725-8}}.

\bibitem{Hiai2017}
F.~Hiai, R.~K{\"o}nig, and M.~Tomamichel.
\newblock Generalized log-majorization and multivariate trace inequalities.
\newblock {\em Annales Henri Poincar{\'e}}, pages 1--23, 2017.
\newblock
  \texttt{\href{http://dx.doi.org/10.1007/s00023-017-0569-y}{DOI:\,10.1007/s00023-017-0569-y}}.

\bibitem{PH91}
F.~Hiai and D.~Petz.
\newblock The proper formula for relative entropy and its asymptotics in
  quantum probability.
\newblock {\em Communications in Mathematical Physics}, 143(1):99--114, 1991.
\newblock
  \texttt{\href{http://dx.doi.org/10.1007/BF02100287}{DOI:\,10.1007/BF02100287}}.

\bibitem{jen21}
A.~Jen{\v c}ov{\'a}.
\newblock R{\'e}nyi relative entropies and noncommutative {$L_p$}-spaces {II}.
\newblock {\em Annales Henri Poincar{\'e}}, 22(10):3235--3254, 2021.
\newblock
  \texttt{\href{http://dx.doi.org/10.1007/s00023-021-01074-9}{DOI:\,10.1007/s00023-021-01074-9}}.

\bibitem{jen17}
A.~Jenčová.
\newblock Preservation of a quantum {R}ényi relative entropy implies existence
  of a recovery map.
\newblock {\em Journal of Physics A: Mathematical and Theoretical},
  50(8):085303, 2017.
\newblock
  \texttt{\href{http://dx.doi.org/10.1088/1751-8121/aa5661}{DOI:\,10.1088/1751-8121/aa5661}}.

\bibitem{Lieb1991}
E.~H. Lieb and W.~E. Thirring.
\newblock {\em Inequalities for the moments of the eigenvalues of the
  {S}chrodinger {H}amiltonian and their relation to {S}obolev inequalities},
  pages 135--169.
\newblock Springer Berlin Heidelberg, Berlin, Heidelberg, 1991.
\newblock
  \texttt{\href{http://dx.doi.org/10.1007/978-3-662-02725-7\_13}{DOI:\,10.1007/978-3-662-02725-7\_13}}.

\bibitem{GEAT_24}
T.~Metger, O.~Fawzi, D.~Sutter, and R.~Renner.
\newblock Generalised entropy accumulation.
\newblock {\em Communications in Mathematical Physics}, 405(11):261, 2024.
\newblock
  \texttt{\href{http://dx.doi.org/10.1007/s00220-024-05121-4}{DOI:\,10.1007/s00220-024-05121-4}}.

\bibitem{MH11}
M.~Mosonyi and F.~Hiai.
\newblock On the quantum {R}ényi relative entropies and related capacity
  formulas.
\newblock {\em IEEE Transactions on Information Theory}, 57(4):2474--2487,
  2011.
\newblock
  \texttt{\href{http://dx.doi.org/10.1109/TIT.2011.2110050}{DOI:\,10.1109/TIT.2011.2110050}}.

\bibitem{MH24}
M.~Mosonyi and F.~Hiai.
\newblock Some continuity properties of quantum {R}ényi divergences.
\newblock {\em IEEE Transactions on Information Theory}, 70(4):2674--2700,
  2024.
\newblock
  \texttt{\href{http://dx.doi.org/10.1109/TIT.2023.3324758}{DOI:\,10.1109/TIT.2023.3324758}}.

\bibitem{MO15}
M.~Mosonyi and T.~Ogawa.
\newblock Quantum hypothesis testing and the operational interpretation of the
  quantum {R}{\'e}nyi relative entropies.
\newblock {\em Communications in Mathematical Physics}, 334(3):1617--1648,
  2015.
\newblock
  \texttt{\href{http://dx.doi.org/10.1007/s00220-014-2248-x}{DOI:\,10.1007/s00220-014-2248-x}}.

\bibitem{MLDSFT13}
M.~M\"{u}ller-Lennert, F.~Dupuis, O.~Szehr, S.~Fehr, and M.~Tomamichel.
\newblock On quantum {R}\'enyi entropies: A new generalization and some
  properties.
\newblock {\em Journal of Mathematical Physics}, 54(12), 2013.
\newblock
  \texttt{\href{http://dx.doi.org/http://dx.doi.org/10.1063/1.4838856}{DOI:\,http://dx.doi.org/10.1063/1.4838856}}.

\bibitem{Pet86}
D.~Petz.
\newblock Sufficient subalgebras and the relative entropy of states of a von
  {N}eumann algebra.
\newblock {\em Communications in Mathematical Physics}, 105(1):123--131, 1986.
\newblock
  \texttt{\href{http://dx.doi.org/10.1007/BF01212345}{DOI:\,10.1007/BF01212345}}.

\bibitem{Petz88}
D.~Petz.
\newblock Sufficiency of channels over von {N}eumann algebras.
\newblock {\em The Quarterly Journal of Mathematics}, 39(1):97--108, 1988.
\newblock
  \texttt{\href{http://dx.doi.org/10.1093/qmath/39.1.97}{DOI:\,10.1093/qmath/39.1.97}}.

\bibitem{Petz88_var}
D.~Petz.
\newblock A variational expression for the relative entropy.
\newblock {\em Communications in Mathematical Physics}, 114(2):345--349, 1988.
\newblock
  \texttt{\href{http://dx.doi.org/10.1007/BF01225040}{DOI:\,10.1007/BF01225040}}.

\bibitem{petz_Entropybook}
D.~Petz.
\newblock {\em Quantum Information Theory and Quantum Statistics}.
\newblock Springer, 2008.
\newblock
  \texttt{\href{http://dx.doi.org/10.1007/978-3-540-74636-2}{DOI:\,10.1007/978-3-540-74636-2}}.

\bibitem{renner_phd}
R.~Renner.
\newblock Security of quantum key distribution.
\newblock {\em PhD thesis, ETH Zurich}, 2005.
\newblock available at
  \texttt{arXiv:\href{http://arxiv.org/abs/quant-ph/0512258}{quant-ph/0512258}}.

\bibitem{renner07}
R.~Renner.
\newblock Symmetry of large physical systems implies independence of
  subsystems.
\newblock {\em Nature Physics}, 3(9):pp. 645--649, 2007.
\newblock Available online:
  \url{http://www.nature.com/nphys/journal/v3/n9/suppinfo/nphys684_S1.html}.

\bibitem{RSB24}
T.~Rippchen, S.~Sreekumar, and M.~Berta.
\newblock Locally-measured {R}\'enyi divergences, 2024.
\newblock Available online: \url{https://arxiv.org/abs/2405.05037}.

\bibitem{Sion58}
M.~Sion.
\newblock On general minimax theorems.
\newblock {\em Pacific Journal of Mathematics}, 8:171, 1958.
\newblock Available online:
  \url{http://msp.org/pjm/1958/8-1/pjm-v8-n1-s.pdf#page=173}.

\bibitem{Sutter_book}
D.~Sutter.
\newblock {\em Approximate Quantum Markov Chains}.
\newblock Springer International Publishing, 2018.
\newblock
  \texttt{\href{http://dx.doi.org/10.1007/978-3-319-78732-9\_5}{DOI:\,10.1007/978-3-319-78732-9\_5}}.

\bibitem{SBT16}
D.~Sutter, M.~Berta, and M.~Tomamichel.
\newblock Multivariate trace inequalities.
\newblock {\em Communications in Mathematical Physics}, 352(1):37--58, 2017.
\newblock
  \texttt{\href{http://dx.doi.org/10.1007/s00220-016-2778-5}{DOI:\,10.1007/s00220-016-2778-5}}.

\bibitem{SFR15}
D.~Sutter, O.~Fawzi, and R.~Renner.
\newblock Universal recovery map for approximate {M}arkov chains.
\newblock {\em Proceedings of the Royal Society of London A: Mathematical,
  Physical and Engineering Sciences}, 472(2186), 2016.
\newblock
  \texttt{\href{http://dx.doi.org/10.1098/rspa.2015.0623}{DOI:\,10.1098/rspa.2015.0623}}.

\bibitem{ernest_phd}
Y.~Z.~E. Tan.
\newblock Prospects for device-independent quantum key distribution.
\newblock {\em PhD thesis, ETH Zurich}, 2021.
\newblock available at
  \texttt{arXiv:\href{https://arxiv.org/abs/2111.11769}{2111.11769}}.

\bibitem{thompson65}
C.~J. Thompson.
\newblock Inequality with applications in statistical mechanics.
\newblock {\em Journal of Mathematical Physics}, 6(11):1812--1813, 1965.
\newblock
  \texttt{\href{http://dx.doi.org/10.1063/1.1704727}{DOI:\,10.1063/1.1704727}}.

\bibitem{marco_book}
M.~Tomamichel.
\newblock {\em Quantum Information Processing with Finite Resources}, volume~5
  of {\em SpringerBriefs in Mathematical Physics}.
\newblock Springer, 2015.
\newblock
  \texttt{\href{http://dx.doi.org/10.1007/978-3-319-21891-5}{DOI:\,10.1007/978-3-319-21891-5}}.
\newblock See \url{https://arxiv.org/abs/1504.00233} for the precise
  references.

\bibitem{Uhl76}
A.~Uhlmann.
\newblock The ``transition probability'' in the state space of a {*}-algebra.
\newblock {\em Reports on Mathematical Physics}, 9(2):273 -- 279, 1976.
\newblock
  \texttt{\href{http://dx.doi.org/10.1016/0034-4877(76)90060-4}{DOI:\,10.1016/0034-4877(76)90060-4}}.

\bibitem{wilde15}
M.~M. Wilde.
\newblock Recoverability in quantum information theory.
\newblock {\em Proceedings of the Royal Society of London A: Mathematical,
  Physical and Engineering Sciences}, 471(2182):20150338, 2015.
\newblock
  \texttt{\href{http://dx.doi.org/10.1098/rspa.2015.0338}{DOI:\,10.1098/rspa.2015.0338}}.

\bibitem{WWY14}
M.~M. Wilde, A.~Winter, and D.~Yang.
\newblock Strong converse for the classical capacity of entanglement-breaking
  and {H}adamard channels via a sandwiched {R}\'enyi relative entropy.
\newblock {\em Communications in Mathematical Physics}, 331(2):593--622, 2014.
\newblock
  \texttt{\href{http://dx.doi.org/10.1007/s00220-014-2122-x}{DOI:\,10.1007/s00220-014-2122-x}}.

\end{thebibliography}

\newpage

\end{document}